\documentclass[fleqn,twoside,11pt]{article}\sloppy
\usepackage{amsfonts,amsmath,amssymb,amsthm}
\usepackage{bussproofs}
  \newcommand\Ax[1]{\AxiomC{$#1$}}
  \newcommand\UI[1]{\UnaryInfC{$#1$}}
  \newcommand\BI[1]{\BinaryInfC{$#1$}}
\usepackage{comment}
\usepackage[shortlabels]{enumitem}
  \setlist[enumerate,1]{leftmargin=22pt}
  \setlist[itemize,1]{leftmargin=25pt}
  \setlist[description,1]{leftmargin=15pt}
\usepackage[text={5.5in,10.2in},centering]{geometry}
\usepackage{graphicx}
\usepackage{hyperref}
\usepackage{mathabx}
\usepackage{quoting}
\usepackage{setspace}
  \setstretch{1.1}
\usepackage[dvipsnames]{xcolor}
  \newenvironment{Blue}{\noindent\color{Blue}}{}
  
  \newenvironment{Gray}{\noindent\color{Gray}}{}
  
  \newenvironment{Mah}{\noindent\color{Mahogany}}{}
  
  \newenvironment{Red}{\noindent\color{Red}}{}
  
\usepackage{url}

\parskip1ex

\newtheorem{theorem}{Theorem}[section]

\newtheorem{corollary}[theorem]{Corollary}
\newtheorem{lemma}[theorem]{Lemma}
\newtheorem{proposition}[theorem]{Proposition}

\theoremstyle{definition}

\newtheorem{definition}[theorem]{Definition}

\newtheorem{remark}[theorem]{Remark}

\renewcommand\a{\ensuremath{\forall}}
\newcommand\A{\ensuremath{\mathcal A}}
\newcommand\Bot{\ensuremath{\bot}}
\newcommand\bs{\ensuremath{\,\big/\,}} 

\newcommand\comma{\raisebox{2.5pt}{\,,}}
\newcommand\D{\ensuremath{\Delta}}
\renewcommand\d{\ensuremath{\delta}}
\newcommand\Dp{\ensuremath{\Delta'}}

\newcommand\entails{\vdash}
\newcommand\e{\ensuremath{\exists}}

\newcommand\false{\ensuremath{\mathtt{false}}}
\newcommand\fr[2]{\ensuremath{\displaystyle\frac{#1}{#2}}}
\newcommand\G{\ensuremath{\Gamma}}
\renewcommand\H{\ensuremath{\mathcal H}}

\newcommand\seq[1]{\ensuremath{\left\langle#1\right\rangle}}
\renewcommand\L{\ensuremath{\mathcal L}}
\newcommand\Land{\ensuremath{\land}}
\newcommand\lh[1]{\ensuremath{\text{lh}(#1)}}
\newcommand\Lor{\ensuremath{\lor}}
\newcommand\lo{\ensuremath{\L_1}}
\newcommand\loq{\ensuremath{\L_1^q}}
\newcommand\lqz{\ensuremath{\L_0^q}}
\newcommand\M[1]{\ensuremath{M\vDash\!{#1}}}
\newcommand\MEP[1]{\ensuremath{\text{MEP(#1)}}}
\renewcommand\models{\vDash}

\newcommand\omodels{\ensuremath{O\text{-models}}}
\newcommand\Pars[1]{\ensuremath{\text{Par}^*(#1)}}
\newcommand\period{\raisebox{2.5pt}{\,.}}
\newcommand\ph[1]{\smallskip\noindent\texttt{#1}} 
\renewcommand\phi{\ensuremath{\varphi}}
\newcommand\Q{\ensuremath{\mathcal Q}}
\newcommand\qd[1]{\ensuremath{\text{qd}(#1)}}
\newcommand\qef{\hfill$\triangleleft$} 
\newcommand\qefhere{\tag*{$\triangleleft$}} 
\newcommand\Quad{\mbox{}\quad}

\newcommand\s{\ensuremath{\bar S}}
\newcommand\said{\texttt{said}}
\newcommand\set[1]{\ensuremath{\left\{#1\right\}}}
\newcommand\Sub[1]{\ensuremath{\text{Sub}(#1)}}

\newcommand\To{\ensuremath{\to}}
\newcommand\Top{\ensuremath{\top}}
\newcommand\true{\ensuremath{\mathtt{true}}}

\title{Primal logic of information}
\author{Yuri Gurevich,\\
\normalsize Computer Science \& Engineering, 
            University of Michigan, USA
         \and
Andreas Blass, \\
\normalsize Mathematics, University of Michigan, USA}
\date{}

\begin{document}
\maketitle 

\begin{abstract}
Primal logic arose in access control; it has a remarkably efficient (linear time) decision procedure for its entailment problem.
But primal logic is a general logic of information.
In the realm of arbitrary items of information (infons), conjunction, disjunction, and implication may seem to correspond (set-theoretically) to union, intersection, and relative complementation.
But, while infons are closed under union, they are not closed under intersection or relative complementation.

It turns out that there is a systematic transformation of propositional intuitionistic calculi to the original (propositional) primal calculi; we call it Flatting.
We extend Flatting to quantifier rules, obtaining arguably the right quantified primal logic, QPL.
The QPL entailment problem is exponential-time complete, but it is polynomial-time complete in the case, of importance to applications (at least to access control), where the number of quantifiers is bounded.
\end{abstract}

\noindent
\texttt{Keywords}: logic of information, primal logic, linear time

\thispagestyle{empty}

\footnotetext{Partially supported by the US Army Research Office under W911NF-20-1-0297.}

\section{Introduction}\label{s:intro}

Primal logic works with items of information, \emph{infons} for brevity%
\footnote{The word infon was used earlier with a different meaning in situation semantics and related literature; see \cite{Devlin} for example.}.
An infon could be a number, a string, a piece of text, a picture, an audio, a video, etc.

One is tempted to view infons as containers of data and apply set-theoretic operations --- union, intersection, difference (a.k.a. relative complement) --- to infons.
Arguably infons are closed under union (just consider the two infons as one), but they are not closed under intersection or difference.
Indeed, try to extract the exact common information, i.e.\ all common data, in an audio and a still picture of the same event.
Even in the case of strings, the common information is problematic: There may be no string representing the common information of two given strings, according to algorithmic information theory \cite[Chapter~11]{Shen}.
Similarly, there is in general no way, given two infons $a$ and $b$ to extract the exact information in $b$ that isn't in $a$.

Primal logic is a general logic of infons.
It was discovered in the course of access control research and it was applied to access control \cite{G191,G198,G200,G203}.
But, from the very beginning, the discoverers viewed primal logic as a general logic of information: ``Here we investigate infon logic and a narrow but useful primal fragment of it'' \cite[Abstract]{G198}.
The fact that primal logic was discovered in access control is, we believe, an accident of history;
the role of access control could have been played by another field, e.g. privacy; in this connection see Section ``Personal infoset'' in \cite{G222}.

The original primal logic is propositional \cite{G198}.
It uses the traditional propositional nullary connective true (\Top) and binary conjunction (\Land), and implication (\To).
Article \cite{G215}, our main reference on propositional primal logic, also uses disjunction.
Intuitively, \Top\ is a noninformative infon.
The conjunction $a \land b$ of infons is their union which is rather natural.
For example, you know the union of $a,b$ if and only if you know both $a$ \emph{and} $b$.
The disjunction $a\lor b$ approximates the intersection of $a$ and $b$.
The implication $a\to b$ approximates the relative complement of $a$ in $b$, which is the smallest information that, when added to $a$, entails $b$.
It is required that every infon entails \Top, that $a$ and $b$ entail $a\land b$ and the other way round, that $a\lor b$ is entailed by either disjunct, and that $a$ and $a\to b$ entail $b$ which entails $a\to b$.
It is not required that $a\lor b$ is a most informative infon entailed by $a$ and entailed by $b$.
Nor is it required that $a\to b$ is a least informative infon which, together with $a$, entails $b$.

While primal logic was the most expressive access control logic at the time of its introduction, it is weak compared to traditional propositional logics.
The main technical result in \cite{G198} is that the entailment problem and even the multi-entailment problem for the original primal logic are solvable in linear time.
The \emph{multi-entailment problem} asks which of the given queries (alleged conclusions) follow from the given hypotheses; when only a single query is involved, we speak of the \emph{entailment problem}.

Requiring that primal logic be closed under genuine implication, i.e.\ restricting attention to scenarios where the infons are closed under set-theoretic difference, results in the entailment problem for primal logic becoming polynomial space complete \cite[Theorem~4.1]{G198}.
Requiring closure under genuine disjunction results in the entailment problem becoming co-NP complete \cite[Theorem~4.2]{G205}.

The original primal logic does not satisfy the replacement of equivalents principle.
For example, the formulas $A \land B$ and $B \land A$ are equivalent (each entails the other) but $(A\land B) \to C$ and $(B\land A) \to C$ are not.
Imposing the replacement of equivalents principle makes the entailment problem NP-hard \cite{Lev}.

But there are ways to enrich primal logic gently, so that the complexity of the entailment problem remains low.
Article \cite{G221} imposes the replacement of equivalents principle in the case where the equivalents are conjunctions of the same set of conjuncts.
This is achieved by identifying a conjunction with the set of its conjuncts, and reasoning accordingly.
In particular, the formulas $A \land B$ and $B \land A$ are identified and so are the formulas $\left(A \land B\right) \to C$ and $\left(B \land A\right) \to C$.
The multi-entailment problem for that extension is decidable in linear expected time and quadratic worst-case time.

Another gentle extension of the original primal logic adds the transitivity law for implication.
The law fails in the original primal logic: $A\to B$ and $B\to C$ do not necessarily entail $A\to C$.
The multi-entailment problem for transitive primal logic is decidable in quadratic time \cite[Theorem~5.1]{G211}

Extensions of (variants of) the original primal logic with the standard intuitionistic/classical quantifier rules were studied in \cite{Podgaits}.
Most attention is given to two extensions, one in the spirit of intuitionistic logic and the other in the spirit of classical logic.
In both cases, a cut elimination theorem is proven and an adequate semantics is given.
But the entailment problems for all those extensions are undecidable;
this follows from Theorem~\ref{t:undec} in Appendix~C below.

We find primal logic fascinating. It is rich enough for real-world applications, it is weak enough to admit remarkably efficient decision algorithms for the entailment problem, and it seems to be the right (or mostly right) logic about general information.

The collective experience of the logic community demonstrates that classical and intuitionistic logics are the right logics for certain known purposes.
In the case of primal logic, we don't have such experience,
and of course, the thesis that primal logic is the (mostly) right information logic cannot be proven mathematically.

What can be done theoretically at this stage?
Well, we found one argument that primal logic isn't ad hoc.
In \S3, we present a simple systematic transformation, called Flatting, of the propositional natural-deduction intuitionistic calculus into (a trivial extension of) the original primal calculus.
The trivial extension comprises an additional axiom $A\to A$ and an additional inference rule $\frac{A\lor A}A$.
The experimental provenance of the original primal calculus may explain why it misses these additions; they are obviously of little importance in applications.

Flatting can be applied also to quantifier rules and thus suggests the arguably right quantifier extension of propositional primal logic.
We call that extension QPL, quantified primal logic.

We quickly describe the rest of this paper.
In \S\ref{s:prelim} we recall some standard definitions in the form convenient for our purposes in this paper, and in \S\ref{s:orig} we revisit the original primal logic.
QPL is introduced in \S\ref{s:quant}.
In \S\ref{s:lin}, we prove that the multi-entailment problem for the propositional fragment of QPL is solvable in linear time.

\S\ref{s:local} is a key technical section. We prove that, if hypotheses \H\ entail a conclusion $C$ in QPL, then there is a QPL derivation of $C$ from \H\ with a strong subformula property.
In \S\ref{s:sem}, we develop a certain semantics for QPL and prove the soundness and completeness theorems.

Finally, in \S\ref{s:upper} and \S\ref{s:lower}, we prove that the entailment problem for QPL is exponential-time complete, but it is polynomial-time complete in the case, of importance to access control, where the number of quantifiers is bounded.

\section{Preliminaries}
\label{s:prelim}

We recall various definitions, mostly to establish terminology.

\subsection{Terms and formulas}\label{sb:formulas}

We use the language of first-order logic without equality or function symbols of positive arity.
Individual constants (in short, constants) are viewed as nullary function symbols.

\ph{Terms.}
Individual variables (in short, variables) and constants are terms. There are no other terms.

\ph{Atomic formulas.}
There are two kinds of atomic formulas. One kind comprises the two nullary propositional connectives \Top\ and \Bot, also known as \true\ and \false\ respectively.
The other atomic formulas have the form $R(t_1,\dots,t_j)$ where $R$ is a $j$-ary relation symbol and each $t_i$ is a term.
The arity $j$ may be zero, in which case $R$ may be called a \emph{propositional variable}.

\ph{Formulas.}
Formulas are built from atomic formulas by means of binary propositional connectives $\land, \lor, \to$ and quantifiers \a, \e.
If $A,B$ are formulas then so are $(A\land B)$, $(A\lor B)$, and $(A\to B)$;
we often omit the parentheses but in principle they are there to ensure the uniqueness of parsing.
If $A$ is a formula and $x$ a variable, then $\a x A$ and $\e x A$ are formulas.
Negation does not appear among the propositional connectives above, but a formula $A\to \bot$ may be abbreviated to $\neg A$.

An occurrence of a variable $x$ in a formula $A$ is bound if  it is in a quantification $qx$ ($\a x$ or $\e x$) or in the scope of a quantification; otherwise the occurrence is free.
A variable $x$ is free in a formula $A$ if there are free occurrences of $x$ in $A$.
A quantification $qx B$ is \emph{trivial} if $x$ isn't free in $B$.

\ph{Substitutions.}
To \emph{substitute} a term $t$ for a variable $x$ in a formula $A(x)$ means to replace every free occurrence of $x$ in $A(x)$ with $t$; the result is a formula $A(t)$.
This leads to a \emph{clash of variables} if $t$ is a variable and an occurrence of it becomes bound as a result of the substitution.
A term $t$ is \emph{substitutable} for $x$ in $A(x)$ if the substitution does not lead to a clash of variables.
When we use the notation $A(t)$, we tacitly assume substitutability.

\ph{Subformulas.}
The notion of \emph{subformula} is defined inductively.
\begin{enumerate}
\item $A$ is a subformula of $A$.
\item If $B\land C$, $B\lor C$, or $B\to C$ is a subformula of $A$, then so are $B$ and $C$.
\item If $\a xB(x)$ or $\e xB(x)$ is a subformula of $A$ then, so is $B(t)$ for any (substitutable) term $t$.
\end{enumerate}

\emph{Literal subformulas} are defined similarly, except that in the quantifier clause $t$ must be $x$.

Note that any subformula of a subformula of $A$ is itself a subformula of $A$, that is, the ``subformula'' relation is transitive.
The same applies to literal subformulas.

\ph{Parameters.}
A \emph{parameter} of a formula $A$ is a constant that occurs in $A$ or a variable that is free in $A$.
If $P$ is a set of constants and variables, then a \emph{$P$-formula} is a formula  all of whose  parameters are in $P$.

\subsection{Entailment}
\label{sb:entail}

When speaking about logics, we presume that, for any logic \L, the following two notions are well defined.
One is the notion of formula; for algorithmic reasons, it is required that formulas are strings in a finite alphabet.
The other notion is the binary relation $\H\vdash Q$ (in words, \H\ \emph{entails} $Q$) where \H\ is a set of formulas (the hypotheses) and $Q$ is a formula (the query).
It is required that entailment is a (Tarskian) finitary consequence relation, meaning that $\vdash$ is reflexive, monotone, transitive, and finitary:
\begin{itemize}
\item if $Q\in\H$ then \H\ entails $Q$,
\item \H\ entails $Q$ if some subset of \H\ entails $Q$,
\item if \H\ entails $Q$ and $\H'$ entails every formula in \H\, then $\H'$ entails $Q$, and
\item if \H\ entails $Q$ then some finite subset of \H\ entails $Q$.
    \end{itemize}
In this connection, see \cite[\S4]{CM} and historical references there.

\begin{definition}
The \emph{entailment problem} for a logic \L\ is the problem to decide, given a finite set \H\ of hypotheses and a query $Q$, whether \H\ entails $Q$ in \L. \qef
\end{definition}

\subsection{Algorithmics}

We use the standard computation model of the analysis of algorithms, and make usual assumptions about syntax, parsing, etc.
The details are spelled out in \S5 of \cite{G215}, but we expect that the reader may need those details only in connection with linear time.

For purposes of algorithmics, a set will be represented by any sequence listing its members.
For example, the set \set{A,B} would normally be represented by \seq{A,B} or \seq{B,A}, but can be also represented by \seq{A,B,B,A}, etc.

\subsection{Trees}

It is common in computer science that trees grow downward in the sense that the root is at the top, which is convenient to describe tree algorithms that start at the root and move down to the leaves.
In contrast, standard derivation trees in logic grow upward, which is natural from the information flow point of view: you deduce from axioms and hypotheses (leaves of the tree) down to the conclusion (the root).
While downward growing trees are used in our algorithmic references, in this paper by default our trees grow upward because we use them almost exclusively as derivation trees.

A \emph{tree} is a finite partially ordered set, whose elements are called \emph{nodes}, such that
\begin{itemize}
\item for every node $a$, the nodes $b$ with $b\le a$ are linearly ordered, and
\item there is a least node called the \emph{root} of the tree.
\end{itemize}

If $a\le b$ then $a$ is a \emph{descendant} of $b$ while $b$ is an \emph{ancestor} of $a$.
If $a<b$ then $a$ is \emph{lower} than $b$ and is a \emph{proper descendant} of $b$ while $b$ is \emph{higher} than $a$ and is a \emph{proper ancestor} of $a$.

If $a$ is a proper descendant of $b$ and there is no node in between, then $a$ is the \emph{child} of $b$, and $b$ is a \emph{parent} of $a$.
Nodes $a_1, a_2, \dots, a_n$ form a \emph{path} if $a_{i+1}$ is a child of $a_i$ for all $i<n$ or if $a_i$ is a child of $a_{i+1}$ for all $i<n$.
In the first case, the path $a_1 > a_2 > \cdots > a_n$ is \emph{descending}.
In the second case, the path $a_1 < a_2 < \cdots < a_n$ is \emph{ascending}.

Thus, children are below their parents, a parent has only one child, but a child may have more than one parent.

\subsection{Hilbert-style calculi}
\label{sb:hilbert}

We take a \emph{Hilbert-style calculus} to be given by rules of inference.
In turn, an inference rule is given by a finite sequence of premises and a single conclusion, where the premises and conclusion are formulas.
It is common to write rules in the form
\[ \fr{A_1, \dots, A_k} B \]
where $A_1, \dots, A_k$ are the premises and $B$ the conclusion.
We will use this common form but also a convenient alternative form
\[ \seq{A_1, \dots, A_k} \bs B \]
where the angle brackets are omitted if $k=1$.

Typically, rules are instances of rule schemas.
For example, the rule schema $\seq{A,B} \bs A\land B$ designates the set of all rules where the premises are two arbitrary formulas and the conclusion is their conjunction.
The example illustrates the most common rule schemas where the letters in the schema represent arbitrary formulas.
In some cases, however, there are restrictions.
For example, the rule schema $\a x A(x) \bs A(t)$ requires that $t$ be substitutable for $x$ in $A(x)$.
The conjunction example also illustrates that, while the letter premises of the schema are distinct, the corresponding formulas may not be.
For brevity, we use the term ``rule'' for both the schema and its instances; the meaning will be clear from the context.
Rules with no premises are \emph{axioms}, and rules with one or more premises are \emph{proper}.


\begin{definition}\label{d:dtree1}
A \emph{derivation} \D\ in a given Hilbert-style calculus is a finite tree whose leaves are partitioned into \emph{axiom nodes} and \emph{hypothesis nodes}; the other nodes are \emph{(proper) rule nodes}.
Every node $a$ is labeled with a formula $L(a)$, and every nonleaf node $a$ is also labeled with an inference rule $R(a)$.
It is required that
\begin{itemize}
\item the label $L(a)$ of any axiom node $a$ is an axiom, and
\item if a node $b$ has $k>0$ parents $a_1, \dots, a_k$, then
\[ R(b) = \seq{L(a_1), \dots, L(a_k)}\bs L(b). \qefhere \]
\end{itemize}
\end{definition}

In a derivation \D, the node labels are \emph{\D-formulas}.
The labels of the axiom nodes, the hypothesis nodes, and the root are the \emph{\D-axioms, \D-hypotheses}, and \emph{\D-conclusion} respectively.
We say that \D\ is a derivation of a formula $C$ from a set of formulas \H\ if $C$ is the conclusion of \D\ and \H\ includes all \D-hypotheses.
The derivation \D\ is \emph{minimal} if there is no derivation $\D'$ with the same conclusion but fewer nodes such that all $\D'$-hypotheses are \D-hypotheses.

We say that a logic \L\ \emph{admits} a Hilbert-style inference rule $\seq{A_1, \dots, A_k} \bs B$ if, in \L, the premises $A_1, \dots, A_k$ entail the conclusion $B$.
\L\ admits a rule schema if it admits every rule in the schema.

\section{Original primal logic}
\label{s:orig}

Originally, primal logic was propositional.
In this section, by default, logics are propositional.

Primal logic originated in access control \cite{G191} and  was explicitly introduced in article \cite{G198} under the name ``primal infon logic''.
\emph{Infons} are items of information discussed below in this section.

Primal logic used only three traditional propositional connectives: \Top, \Land, and \To.
But it was multi-agent like epistemic modal logic.
In the latter, each agent $p$ gives rise to the unary epistemic connective $p$ \texttt{knows} $A$.
In \cite{G198}, agents were called \emph{principals} and each principal $p$ gave rise to two \emph{quotation} connectives: $p\ \said\ A$ and $p\ \texttt{implied}\ A$.
Soon the \texttt{implied} connectives were dropped.
In particular, they are not used in article \cite{G215}, our main reference on propositional primal logic.
In the present paper, we take a wider view on primal logic and deemphasize quotation connectives, but the technical results can be extended more or less routinely to include quotation connectives.

\subsection{Hilbert-style calculus}
\label{sb:orig}

Article \cite{G198} gives two calculi for primal logic, a natural deduction calculus and a Hilbert-style calculus.
Subsequent articles on primal logic worked primarily with  Hilbert-style calculi.
For brevity, the quotation-free version of the Hilbert-style calculus in \cite{G198} will be called the \emph{original primal calculus}.

\begin{center}
\textbf
{Original primal calculus}\mbox{}\\
\end{center}
\begin{align*}
\top\text{I} \hspace{20pt}
&\top
\\[15pt]
\land\text{I}\hspace{20pt}
&\fr{A\quad B}{A\land B}\hspace{50pt}
&\land\text{E}\hspace{20pt}
&\fr{A\land B}A\comma\ \fr{A\land B}B
\\[20pt]
\to\text{I}\hspace{20pt}
&\fr B{A\to B}\hspace{50pt}
&\to\text{E}\hspace{20pt}
&\fr{A\quad A\to B}B
\end{align*}

\subsection{Algebra of infons}
\label{sb:algebra}

Infon algebra provides a certain intuition behind the primal calculus.

\begin{quoting}[leftmargin=30pt]\noindent
``One may study the algebra of infons. Order $a \leq b$ if the information in $a$ is a part of that in $b$.
At the bottom of that partial order are uninformative infons carrying no information whatsoever.
There is a natural union operation $a + b$ on infons. You know $a+b$ if you know $a$ and you know $b$.
Infon $a+b$ is the least upper bound of $a$ and $b$. But one has to be careful with the information order because of the omniscience problem well known in epistemic logic.
The partial order is intractable.'' \cite[\S1]{G198}
\end{quoting}

We will not use the intractable partial order or its equivalence relation which is also intractable.
But we need some equivalence relation on infons to ensure for example that the union operation is associative.


It will be convenient to use lattice-theoretic terminology and notation, so in particular the union operation will be called the join operation.

The \emph{join} of two infons $a,b$ is the result $a\sqcup b$ of combining the two infons and viewing the result as a single infon.
We require that there is an uninformative infon 0 and that, for all infons $a,b,c$, we have:
\begin{equation}\label{semi}
\begin{aligned}
(a\sqcup b) \sqcup c &= a \sqcup (b\sqcup c),\\
 a\sqcup b &= b\sqcup a,\\
 a\sqcup a &= a,\\
 a\sqcup 0 &= a.
\end{aligned}
\end{equation}
Thus the join operation is associative, commutative, and idempotent; 0 is a neutral element for $\sqcup$.
In other words, infons form a join semilattice with zero.

As usual in semilattices, we define a partial order
\begin{equation}\label{order}
a\le b\quad\text{if}\quad a\sqcup b = b.
\end{equation}
In the rest of this section, order \eqref{order} is the default information order.


\begin{lemma}\label{l:join}
If $c\ge a$ and $c\ge b$ then $c\ge a\sqcup b$.
\end{lemma}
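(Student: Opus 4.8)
The plan is to translate the statement entirely into the algebra of the join semilattice and then compute. By the definition \eqref{order} of the information order, the hypothesis $c \ge a$ says exactly that $a \sqcup c = c$, and the hypothesis $c \ge b$ says that $b \sqcup c = c$. The conclusion $c \ge a \sqcup b$ unfolds, again via \eqref{order}, to the single equation $(a \sqcup b) \sqcup c = c$. So the entire lemma reduces to verifying this one identity.

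To verify it I would chain the semilattice laws \eqref{semi}. First reassociate $(a \sqcup b) \sqcup c = a \sqcup (b \sqcup c)$ using associativity; then replace the inner $b \sqcup c$ by $c$ using the second hypothesis, obtaining $a \sqcup c$; and finally replace $a \sqcup c$ by $c$ using the first hypothesis. This produces $(a \sqcup b) \sqcup c = c$, which is precisely what the conclusion demanded.

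I expect essentially no obstacle: this is the familiar fact that in any join semilattice the join of two elements is their least upper bound, and notably the idempotence and zero laws of \eqref{semi} are not even needed. The only point requiring a little care is the bookkeeping between the order notation $\ge$ and the underlying algebraic identities, but the definition \eqref{order} renders that translation purely mechanical, so the argument is a two-line calculation.
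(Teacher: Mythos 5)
Your proof is correct and is essentially identical to the paper's: both unfold the hypotheses and conclusion via \eqref{order} and verify $(a\sqcup b)\sqcup c = a\sqcup (b\sqcup c) = a\sqcup c = c$ using associativity and the two hypotheses. Your observation that only associativity from \eqref{semi} is needed is accurate and matches the paper's one-line computation.
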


\begin{proof}
Using \eqref{semi} and \eqref{order}, we have
$(a\sqcup b)\sqcup c = a\sqcup (b\sqcup c) = a\sqcup c = c$.
\end{proof}

Recall that, in a join semilattice, a \emph{pseudocomplement $a*b$ of $a$ relative to $b$} is the least element $x$, if it exists, such that $a\sqcup x \ge b$.
In other words, for all $y$, we have $a*b \le y$ if and only if $b \le a\sqcup y$.
(See \cite[\S I.6]{Graetzer} where Gr\"atzer studies the dual notion: pseudocomplementation in a meet semilattice.)

We expand the join semilattice with additional binary operation $*$ that has some but not necessarily all properties of pseudocomplementation. We require that, for all infons $a,b$, we have
\begin{equation}\label{pseudo}
 a*b \le b \le a \sqcup (a*b).
\end{equation}
By \eqref{order}, the requirements \eqref{pseudo} can be equivalently restated in the form of equalities:
\begin{equation}\label{pseudo2}
 (a*b)\sqcup b = b,\quad b\sqcup a\sqcup(a*b) = a\sqcup(a*b).
\end{equation}
We call the binary operation \emph{weak pseudocomplement}; ``weak'' indicates that we do not require that $a*b$ be the least element satisfying \eqref{pseudo}.

In universal algebra, an algebra is a structure whose vocabulary consists of function symbols.
Let $V$ be the variety (or equational class) of algebras satisfying the equations \eqref{semi} and \eqref{pseudo2}.
Modulo these equations, terms constructed from variables and 0 by means of joins and weak pseudocomplements naturally form a free algebra $F$ of variety $V$.
Variables are the natural free generators of this term algebra $F$.

Every term $t$ can be viewed as a formula of the original primal logic where $0,\sqcup,*$ represent \Top, \Land, and \To, respectively.
One could use a special notation, say $t'$, for the formula version of $t$.
Instead, we rely on the context.
For example, $t_1, t_2$ are terms in the context $t_1 \ge t_2$ and formulas in the context $t_1\entails t_2$.
By default, entailment will refer to entailment in the original primal calculus.

\begin{theorem}[Matching theorem]\label{t:apt}
$s \entails t$ in the original primal calculus if and only if $s\ge t$ as terms.
\end{theorem}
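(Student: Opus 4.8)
The plan is to prove the two directions of the equivalence separately: soundness, $s \entails t \Rightarrow s \ge t$, which is routine, and completeness, $s \ge t \Rightarrow s \entails t$, where the real work lies. For soundness I would fix the hypothesis $s$ and show, by induction on a derivation of $t$ from $s$, that every formula $u$ occurring in the derivation satisfies $s \ge u$ (as terms). The leaves are immediate: a hypothesis node carries $s$ and $s \ge s$, while the only axiom is $\Top = 0$, and $0 \le s$ is immediate from \eqref{semi} and \eqref{order}. For the inductive step it suffices to check that each primal rule is order-valid. The rule $\Land$I is exactly Lemma~\ref{l:join}; $\Land$E uses $t_1, t_2 \le t_1 \sqcup t_2$, immediate from \eqref{semi}; $\To$I uses $a * b \le b$ from \eqref{pseudo}; and $\To$E combines $s \ge a$ and $s \ge a*b$ via Lemma~\ref{l:join} to get $s \ge a \sqcup (a*b) \ge b$, again by \eqref{pseudo}. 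Taking $u = t$ at the root yields $s \ge t$.

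For completeness I would induct on the structure of the conclusion term $t$. The unit and join cases reverse the soundness computation: if $t = 0$ then $s \entails \Top$ by $\Top$I; if $t = t_1 \sqcup t_2$, then $t_1, t_2 \le t \le s$ gives $s \ge t_1$ and $s \ge t_2$, so the induction hypothesis and $\Land$I finish. The delicate cases are an atomic $t$ and $t = t_1 * t_2$, i.e. $t_1 \To t_2$, because here the relation $s \ge t$ cannot be unwound by lattice reasoning alone. In particular, from $s \ge t_1 * t_2$ one cannot in general conclude $s \ge t_2$ (take $s = t_1 * t_2$), so the naive attempt to apply $\To$I blindly fails.

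Here the main obstacle appears, and it is exactly the phenomenon flagged in the introduction: $*$ is not a lattice operation, and $\ge$ does not close the antecedent of a $*$ under the semilattice laws the way a naive quotient by \eqref{semi} and \eqref{pseudo2} would; otherwise $(A\Land B)\To C$ and $(B\Land A)\To C$ would be interderivable, which they are not. So I cannot pass to a congruence quotient and must track the syntactic anatomy of $s$. The plan is to prove a companion extraction lemma characterizing $\{\, u : s \ge u \,\}$ directly. Let $\mathrm{Cl}(s)$ be the least set of terms containing $s$ and $0$, closed under splitting top-level joins (the content of $\Land$E) and under detaching $b$ whenever both $a$ and $a*b$ are already present (the content of $\To$E). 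I would show that for atomic $u$ one has $s \ge u$ exactly when $u \in \mathrm{Cl}(s)$, that for $u = u_1 * u_2$ one has $s \ge u$ exactly when $s \ge u_2$ or $u \in \mathrm{Cl}(s)$, and that every member of $\mathrm{Cl}(s)$ is derivable from $s$ using only $\Land$E and $\To$E. Granting this, the case $t = t_1 * t_2$ splits cleanly: either $s \ge t_2$, so the induction hypothesis gives $s \entails t_2$ and $\To$I yields $s \entails t_1 \To t_2$; or $t_1 * t_2 \in \mathrm{Cl}(s)$ and is extracted outright. The atomic case is the extraction lemma verbatim.

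The hard part will be the extraction lemma itself, specifically its forward direction: that the inductively generated order $\ge$ admits no relations under a $*$ beyond the two listed possibilities, so that nothing sneaks in through a commutation or re-association hidden beneath a weak pseudocomplement. I expect to establish this by a proof-theoretic analysis of how a derivation of $s \ge t$ from the semilattice laws \eqref{semi} and the defining inequalities \eqref{pseudo} must be shaped, isolating a normal form in which all uses of commutativity and associativity occur above every $*$. This analysis is the crux that simultaneously certifies $\ge$ coincides with $\entails$ and accounts for the failure of replacement of equivalents; once it is in place, the remaining reasoning is bookkeeping on the induction over $t$.
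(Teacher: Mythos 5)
Your first direction ($s\entails t\Rightarrow s\ge t$) is essentially the paper's proof: induction on the derivation, checking each rule against \eqref{semi}, \eqref{pseudo}, and Lemma~\ref{l:join}. The problem is the converse, and it is twofold. First, your extraction lemma is false as stated. Take $s = a \sqcup \bigl(b \sqcup ((a\sqcup b)*c)\bigr)$ with $a,b,c$ distinct variables; read as a formula, $s$ is $a\land\bigl(b\land((a\land b)\to c)\bigr)$. In the primal calculus $s\entails a$ and $s\entails b$ by \Land E, hence $s\entails a\land b$ by \Land I, hence $s\entails c$ by \To E; so by the soundness direction you already established, $s\ge c$. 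But your $\mathrm{Cl}(s)$ contains exactly $s$, $0$, $a$, $b\sqcup((a\sqcup b)*c)$, $b$, and $(a\sqcup b)*c$: the detachment clause never fires, because its minor premise $a\sqcup b$ is never ``already present'' --- your closure splits joins but never forms them. So $c\notin\mathrm{Cl}(s)$, contradicting your characterization for atomic $u$. Any repair must close $\mathrm{Cl}(s)$ under joins (minor premises of \To E may need \Land I to build), at which point $\mathrm{Cl}(s)$ is essentially the deductive closure and the lemma comes close to restating the theorem. Second, even for a corrected closure, the forward direction --- which you yourself identify as the crux --- is only conjectured, via a hoped-for normal form for derivations of inequalities from \eqref{semi} and \eqref{pseudo}. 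That analysis is the entire difficulty, and it is not carried out.

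The paper's proof shows that none of this machinery is needed, because of how equality of terms is defined. The observation is that mutual entailment $s\dashv\vdash t$ is an equivalence relation on terms satisfying \eqref{semi} and \eqref{pseudo2}: the semilattice laws follow from \Land I and \Land E, the first half of \eqref{pseudo2} needs one application of \To I, and the second half needs one application of \To E. Since equality in the term algebra is, by definition, the \emph{least} equivalence relation satisfying \eqref{semi} and \eqref{pseudo2}, leastness immediately gives that $s=t$ implies $s\dashv\vdash t$; no syntactic analysis of how an equational derivation ``must be shaped'' is required. Then $s\ge t$ means $s = s\sqcup t$, hence $s\dashv\vdash s\land t$, hence $s\entails s\land t\entails t$ by \Land E. Your instinct that one cannot pass to a congruence quotient is correct (replacement of equivalents fails in primal logic), but the paper's argument never needs a congruence --- only an equivalence relation satisfying the equation schemas --- which is precisely why the completeness direction is the cheap one, not the expensive one your plan makes it.
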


\begin{proof}
First we prove the only-if implication by induction on the number $n$ of nodes in a given derivation of $t$ from $s$.
The case $n=1$ is obvious. Suppose that $n>1$.
Several cases arise depending on the rule $R$ at the conclusion node labeled with $t$.

\noindent
Case \Land E. $R$ is $(t_1\land t_2)\bs t_i$ where $i\in\set{1,2}$ and $t$ is $t_i$.
Invoking the induction hypothesis, we have
\[ s\ge t_1\sqcup t_2  \ge t_i = t. \]

\noindent
Case \Land I. $R$ is $\seq{t_1, t_2}\bs t_1\land t_2$ where $t = t_1 \sqcup t_2$.
By the induction hypothesis, $s\ge t_1$ and $s\ge t_2$.
By Lemma~\ref{l:join}, $s\ge t_1\sqcup t_2 = t$.

\noindent
Case \To E. $R$ is $\seq{t_1, t_1\to t}\bs t$.
By the induction hypothesis, $s\ge t_1$ and $s\ge t_1*t$.
By Lemma~\ref{l:join} and \eqref{pseudo}, we have
$s \ge t_1 \sqcup (t_1*t) \ge t$.

\noindent
Case \To I. $R$ is $t_2\bs (t_1\to t_2)$ where $t = t_1\to t_2$.
By the induction hypothesis and \eqref{pseudo}, we have
$s \ge t_2 \ge t_1*t_2 = t$.

Next we prove the if implication.
The \emph{mutual entailment}
\[ s \dashv\vdash t\quad\text{if}\quad s\entails t\text{ and } t\entails s \]
is an equivalence relation on the terms.
Thanks to \Land E and \Land I, it satisfies the join semilattice requirements \eqref{semi}.
Mutual entailment also satisfies \eqref{pseudo2}, i.e.
\begin{equation}
(a\to b) \land b \dashv\vdash b\ \text{ and }\
   b\land (a\land (a\to b)) \dashv\vdash  a\land (a\to b)
\end{equation}
where $a,b$ range over our terms.
Indeed, in addition to conjunction rules, one application of \To I suffices to establish the first relationship and one application of \To E suffices to establish the second relationship.

In the term algebra, equality is the least equivalence relation satisfying the requirements \eqref{semi} and \eqref{pseudo2}.
It follows that $s=t$ implies $s\dashv\vdash t$.

Now suppose that $s\ge t$.
Then $s = s\sqcup t$, $s\dashv\vdash s\land t$, and $s\entails s\land t \entails t$.
\end{proof}

In \cite{G215}, the original primal logic was extended with disjunction and the standard disjunction introduction rules; the multi-entailment problem remained linear time decidable.
In \S\ref{s:lin}, the resulting logic is extended with \Bot, with the standard \Bot\ elimination rule, and with a weak disjunction elmination rule.
Again, the multi-entailment problem remains linear time decidable.
Given those extensions, the corresponding extensions of infon algebra as well as the appropriate generalizations of the matching theorem are rather obvious.

\section{QPL, Quantified Primal Logic}
\label{s:quant}

In the first part of this section, we recall the standard intuitionistic natural-deduction calculus.
In the second part, we present a formal transformation of this intuitionistic calculus to a Hilbert-style calculus which we use as the calculus for our quantified primal logic.

We use \G\ and \D\ to denote finite sequences of formulas.

\subsection{Intuitionistic natural-deduction calculus}\mbox{}

We start with the propositional fragment.

\begin{center}
\textbf
{Propositional rules}\mbox{}\\
\end{center}

\begin{align*}
\top\text{I} \hspace{20pt} &\top
\\
\land\text{I}\hspace{20pt} &\fr{A\quad B}{A\land B}\hspace{50pt}
&\land\text{E}\hspace{20pt}
&\fr{A\land B}A\comma\ \fr{A\land B}B
\\[6pt]
\to\text{I}\hspace{20pt} &\fr{\raisebox{7pt}{$(A)$}B}{A\to B}
&\to\text{E}\hspace{20pt}
&\fr{A\quad A\to B}B
\\
\lor\text{I}\hspace{20pt} &\fr{A}{A\lor B}\comma\ \fr{B}{A\lor B}
&\lor\text{E}\hspace{20pt}
&\fr{A\lor B\quad
 \raisebox{7pt}{$(A)$}C\quad\raisebox{7pt}{$(B)$}C} C
\\[5pt]
&&\bot\text{E}\hspace{20pt} &\fr\bot A
\end{align*}

These inference rules are self explanatory except possibly for the rules \To I and \Lor E.
For example, the rule \Land I says that from any formulas $A$ and $B$ one can deduce $A\land B$ in one step.
It follows that if a sequence \G\ of formulas entails $A$ and entails $B$ then it entails $A\land B$.

The intended meaning of the $\to$I rule is that \G\ entails $A\to B$ if $A$, \G\ entails $B$.
The intended meaning of the $\lor$E rule is that $A\lor B$, \G\ entails $C$ if $A$, \G\ entails $C$ and $B$, \G\ entails $C$.
These meanings are given succinctly by the $\to$I and $\lor$E metarules
\[
  \fr{A,\G\implies B}{\G\implies (A\to B)}\comma\qquad
    \fr{A,\G\implies C\quad B,\G\implies C}{A\lor B,\G\implies C}
\]
respectively.
For our purposes, the notation $\D\implies F$ indicates that \D\ entails $F$, that is there is an intuitionistic derivation of $F$ from \D.

The experts will recognize these metarules as rules of the intuitionistic sequent calculus.
``The calculi of sequents,'' writes Prawitz in \S2 of Appendix~A of his book \cite{Prawitz}, ``can be understood as meta-calculi for the deducibility relation in the corresponding systems of natural deduction.'' That is exactly how we use sequent rules in this paper.

The metarules for the other natural-deduction rules are obvious.
For example, the $\land$I and $\land$E metarules are
\[ A,B \implies A\land B,\quad
   A\land B\implies A,\quad A\land B\implies B. \]

Now we address \emph{structural metarules}.
Notice that each single metarule above describes a one-step deduction in the intuitionistic calculus.
There is also the necessity to combine deductions.
For example, consider $\G = \seq{A, A\to B, B\to C}$. By $\to$E, the first two hypotheses entail $B$.
Using $\to$E again, you derive $C$.
The structural \emph{cut} metarule describes that sort of combining deductions:
\[
 \fr{\G\implies A\quad A,\G\implies B}{\G\implies B}\period
\]

An even more basic property of natural deduction is that a deduction from a sequence \G\ of hypotheses is also a deduction from an extended sequence  $\G,\Delta$.
The structural metarule expressing this property is called \emph{weakening}:
\[
 \fr{\G\implies A}{\G,\Delta\implies A}\period
\]
In the presence of the weakening metarule, the cut metarule is equivalent to a formally stronger metarule
\[
 \fr{\G\implies A\quad A,\Delta\implies B}{\G,\Delta\implies B}\period
\]

The \emph{exchange} metarule
\[ \fr{\G,A,B,\D \implies C} {\G,B,A,\D \implies C}
\]
allows us to view sequences as multisets in the sense that the ordering of the formulas does not matter.
In the presence of the exchange metarule, the \emph{contraction} metarule
\[ \fr {A,A,\G \implies C} {A,\G \implies C}
\]
allows us to view sequences as sets in the sense that repetitions do not matter.

Finally, there is an obvious structural metarule
\begin{equation}\label{id}
 A,\G \implies A
\end{equation}
which we call the \emph{identity} metarule.
Structural metarules are used tacitly in the rest of this section.

\begin{center}
\textbf
{Quantifier rules}\mbox{}\\
\end{center}
\begin{align*}
\a\text{I}\hspace{20pt} &\fr{A(t)}{\a v A(v)}
&&\a\text{E}\hspace{20pt}
\fr{\a v A(v)}{A(t)}\\[9pt]
\e\text{I}\hspace{20pt} &\fr{A(t)}{\e v A(v)}
&&\e\text{E}\hspace{20pt}
\fr{\e v A(v)\ \raisebox{10pt}{$(A(t))$}B}{B}
\end{align*}
subject to restrictions, to be explained shortly, on \a I and \e E.

\medskip
The rules \a E and \e I have obvious metarules:
\[ \a v A(v) \implies A(t) \quad\text{and}\quad
   A(t) \implies \e v A(v) \]
respectively.
But recall the tacit requirement that $t$ be substitutable for $v$ in $A(v)$.

The forthcoming restrictions on the rules \a I and \e E will implicitly involve the hypotheses \G\ used to derive the premises.
For example, if \G\ contains an atomic formula $P(t)$, neither rule should be applicable.

\begin{definition}\label{d:exposed}
A term $t$ is \emph{exposed} in a sequence \G\ of formulas if $t$ is a constant that occurs in \G\ or $t$ is a variable that occurs free in \G. \qef
\end{definition}

The \a I and \e E metarules are respectively
\[ \fr{\G\implies A(t)}{\G\implies \a v A(v)} \quad\text{and}\quad
   \fr{A(t),\G \implies B}{\e v A(v),\G \implies B} \]
where $t$ is not exposed in $\G,B,\a v A(v)$ (or equivalently in $\G,B,\e v A(v)$).

For convenient reference, the intuitionistic natural deduction calculus is summarized in Appendix~\ref{app:a}.
For brevity, we call this calculus the intuitionistic calculus in the rest of this section.

\subsection{Enter QPL}

In this section, we put forward a formal transformation of the intuitionistic calculus above into a Hilbert-style calculus that we call QPL, \emph{quantified primal logic}.
%
Call a sequent $\D\implies F$ \emph{trivial} if $F\in\D$.
Let $R$ range over the rules of the intuitionistic calculus above.

\medskip\noindent
\textbf{Flatting.}
Restrict every rule $R$ so that the premises, if any, of $R$'s metarule are trivial. \qef

Flatting is the formal transformation mentioned above, converting the intuitionistic calculus to QPL.

By virtue of the identity metarule \eqref{id}, Flatting makes the premises of the metarule redundant. As a result the new metarule has the Hilbert-style form $\G\implies F$, with no premises.

\begin{remark}
In the previous section, 8 out of 12 metarules are premise-less to begin with.
We could present all 12 metarules in the premises-conclusion format used in cases \To I, \Lor E, \a I, and \e E.
For example, the metarule for \Land I would be
\[ \fr{\G \implies A,\ \G\implies B} {\G \implies A\land B}. \]
In the presence of structural metarules, every premise-less metarule is equivalent to its premises-conclusion form, and that is why we used the simpler forms.
But, if we used the premises-conclusion form in all 12 cases, then Flatting would apply uniformly to all 12 of them.
In 8 cases, Flatting would recover the premise-less forms and thus would have no serious effect. \qef
\end{remark}

One propositional intuitionistic rule affected by Flatting is $\to$I.
The rule and the corresponding metarule are
\[\fr{\raisebox{7pt}{$(A)$}B}{A\to B}\qquad\text{and}\qquad
  \fr{A,\G\implies B}{\G\implies (A\to B)} \]
Flatting requires that $B = A$ or $B\in\G$.
If $B=A$, then the metarule becomes $\G\implies (A\to A)$ and the rule becomes an axiom $A\to A$. If $B\in\G$, then the metarule becomes $B,\D\implies (A\to B)$, and the rule becomes $B \bs A\to B$.

The only other propositional intuitionistic rule affected by Flatting is $\lor$E. The rule and the corresponding metarule are
\[\fr{A\lor B\quad
 \raisebox{7pt}{$(A)$}C\quad\raisebox{7pt}{$(B)$}C} C \qquad\text{and}\qquad
  \fr{A,\G\implies C\qquad B,\G\implies C}{A\lor B,\G\implies C} \]
Flatting requires that $C\in\G$ or else $C = A = B$. If $C\in\G$, then the new metarule $A\lor B, C, \D \implies C$ is a special case of the identity metarule \eqref{id}.
If $C = A = B$, then the new metarule is $C\lor C,\G \implies C$ and the new rule is $C\lor C\bs C$.

The intuitionistic quantifier rules affected by Flatting are \a I and \e E.
The $\a$I rule and metarule are
\[ \fr{A(t)}{\a v A(v)}\quad \text{and}\quad
 \fr{\G\implies A(t)}{\G\implies\a v A(v)}\]
where $t$ is not exposed in $\G, \a v A(v)$.
Flatting requires that $A(t)$ is in $\G$.
So $t$ is not exposed in $A = A(t)$, and therefore $v$ is not free in $A(v) = A$.
The new metarule is $A,\D \implies \a v A$ and the new rule is $A\bs \a v A$; in both cases, it is required that $v$ isn't free in $A$.
Thus, the new rule allows universal quantification only when the quantifier is vacuous.

The $\e$E rule and metarule are respectively
\[\fr{\e v A(v)\quad \raisebox{10pt}{$(A(t))$}B} {B}
 \quad\text{and}\quad
 \fr{A(t),\G\implies B}{\e v A(v),\G\implies B}\]
where $t$ is not exposed in  $\G, \e v A(v), B$.
Flatting requires that $B = A(t)$ or $B\in\G$.
If $B\in\G$, the metarule is a special case of the identity metarule \eqref{id}.
Suppose that $B = A(t)$, so that $t$ doesn't occur free in $A = A(t)$.
The new $\e E$ metarule is $\e v A,\G\implies A$ where $v$ isn't free in $A$ and the new $\e E$ rule is $\e v A\bs A$ where $v$ isn't free in $A$.
Again, we can eliminate existential quantifiers only when they are vacuous.

For convenient reference, QPL is summarized in Appendix~\ref{app:b}.

\subsection{Discussion}
\label{sb:disc}\mbox{}

\noindent\texttt{The choice of intuitionistic calculus.}There are other natural deduction calculi for intuitionistic logic, and applying Flatting to them may give inequivalent calculi.
For example, the calculus in Appendix~\ref{app:a} can be expanded with a redundant rule $\seq{(A\to B), (B\to C)}\bs (A\to C)$ which survives under Flatting but, by Corollary~\ref{c:trans} below, fails in QPL.
So what is special about the calculus in Appendix~\ref{app:a}?
It is the most standard natural deduction calculus for intuitionistic logic.
More importantly, it is the most straightforward and systematic implementation of Gentzen's idea that every logical operation is determined by an introduction rule and an elimination rule.

\smallskip\noindent\texttt{The name of the transformation.}
We considered several names, in particular ``Skip History'' and ``No Hypotheticals'' which are more descriptive than
``Flatting'' but ``Flatting'' is shorter.
Brevity is also the reason for using the archaic ``Flatting'' rather than ``Flattening.''

\section{Propositional fragment of QPL}
\label{s:lin}

In this section, all logics are propositional.
Let PfQPL be the propositional fragment of QPL.
We start with generalizing the entailment problem for a logic \L.

\begin{definition}
The \emph{multi-entailment problem} for a logic \L, in short \MEP{\L}, is the problem to decide, given a finite sequence of hypotheses and a finite sequence of queries, which of the queries are entailed by the hypotheses. \qef
\end{definition}

The generalization is trivial for complexity levels like PTime, but it is meaningful for finer complexity levels, in particular for linear time.
The main purpose of this section to establish that MEP(PfQPL) is linear time decidable.

The multi-entailment problem for the original primal logic is proven to be linear time decidable in \cite{G198}.
A slightly different version \lqz\ of primal logic is studied in article \cite{G215}.
Syntactically, \lqz\ formulas are built from atomic formulas and propositional constants \Top, \Bot\ by means of (a)~traditional binary propositional connectives \Land, \Lor, \To\ and (b)~unary \emph{quotational} connectives $p\ \said$ where $p$ is an individual constant.
There is a potentially infinite list of individual constants.
Intuitively they represent agents.
Logic \lqz\ is given by the following Hilbert-style calculus where $\pi$ may be any sequence
\[ p_n\ \said\;\ p_{n-1}\ \said\;\ \dots\ p_2\ \said\;\ p_1\ \said \]
of $n\ge0$ quotational connectives.

\begin{center}
\textbf
{\lqz\ calculus}\mbox{}\\
\end{center}
\begin{align*}
\top\text{I}\hspace{20pt}
&\pi\top
\\[1em]
\land\text{I}\hspace{20pt}
&\frac{\pi A\quad \pi B}{\pi(A\land B)}
\hspace{80pt}
&\land\text{E}\hspace{20pt}
&\frac{\pi (A\land B)}{\pi A}\comma \ \frac{\pi (A\land B)}{\pi B}
\\[1em]
\to\!\text{I}\hspace{20pt}
&\frac{\pi B}{\pi (A\to B)}
\hspace{80pt}
&\to\!\text{E}\hspace{20pt}
&\frac{\pi A\quad \pi (A\to B)} {\pi B}
\\[1em]
\lor\text{I}\hspace{20pt}
&\frac{\pi A}{\pi (A\lor B)}\comma \ \frac{\pi B}{\pi (A\lor B)}
\end{align*}

\smallskip
Let \loq\ be the extension of \lqz\ with this \Lor E rule:\quad $\fr{\pi (A\lor A)} {\pi A}$.

\begin{proposition}\label{p:0}
\MEP{\loq} is linear time decidable.
\end{proposition}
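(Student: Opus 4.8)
The plan is to build on the linear-time decidability of \MEP{\lqz} established in \cite{G215} and to argue that the single new rule $\lor$E, namely $\pi(A\Lor A)\bs\pi A$, can be folded into that algorithm at constant additional cost per relevant formula. First I would recall the shape of the \cite{G215} algorithm: after parsing, one assigns canonical identifiers to the distinct subformulas occurring in the hypotheses and queries (so that syntactic equality of two subformulas becomes an $O(1)$ test), forms the set of \emph{relevant} formulas---the prefixed subformulas $\pi B$ arising from the input---and then computes, by forward saturation over this set, the least relevant set closed under the \lqz\ rules. A query is answered positively exactly when it has been marked, and linear time is secured by the worklist/watch-list bookkeeping of \cite{G215}, so that each marking of a relevant formula triggers only a bounded amount of propagation work.

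My next step would be to observe that the new rule respects the subformula discipline: since $\pi A$ is a subformula of $\pi(A\Lor A)$ (the prefix is unchanged and $A$ is a disjunct), adding $\lor$E neither enlarges the set of relevant formulas nor changes its linear size bound. I would then incorporate the rule into the saturation as follows. In a preprocessing pass I flag every disjunction node whose two immediate disjuncts carry the same canonical identifier; each such node $\pi(A\Lor A)$ receives a single propagation link to $\pi A$. Because each disjunction node contributes at most one such link, the total number of added links is linear, and marking $\pi(A\Lor A)$ now additionally enqueues $\pi A$. Since marking is monotone and each relevant formula is marked at most once, the augmented saturation still terminates after linearly many marking events, each doing bounded work, and therefore runs in linear time.

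It then remains to argue correctness, i.e.\ that a query $\pi Q$ is marked if and only if $\H\entails\pi Q$ in \loq. Soundness is immediate by induction on the saturation, as every marking step instantiates a genuine \loq\ rule. For completeness I would establish the subformula property for \loq: every \loq-entailment $\H\entails\pi Q$ admits a derivation all of whose formulas are prefixed subformulas of $\H\cup\{\pi Q\}$. The only new source of detours is an application of $\lor$E whose premise $\pi(A\Lor A)$ was produced by $\lor$I; such a detour introduces $\pi A$ from $\pi A$ and can be excised, so iterating the excision yields a normal derivation confined to relevant formulas. Given the subformula property, the saturated set is exactly the set of entailed relevant formulas, which yields the correctness of the augmented algorithm.

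The hard part will be the completeness step rather than the algorithmics: one must check that the detour-elimination argument interacts cleanly with the prefixes $\pi$ and with the remaining \lqz\ rules, and that excising $\lor$I/$\lor$E detours does not reintroduce non-subformulas elsewhere in the derivation. Once the subformula property for \loq\ is in hand, the integration of the rule into the existing linear-time machinery of \cite{G215} is routine, since it amounts only to the constant-per-node equality test on disjuncts and the single extra propagation link described above.
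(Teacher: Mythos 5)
Your proposal is correct and takes essentially the same route as the paper: the paper's proof likewise starts from the linear-time algorithm for \MEP{\lqz} of \cite[Theorem~5.1]{G215} and augments its saturation procedure (\S5.6 of that paper) with a single additional case for the new $\lor$E rule, noting that this case is similar to, and simpler than, the $\land$E case. The details you supply --- the $O(1)$ equal-disjunct test via canonical identifiers, the extra propagation link, and extending the subformula property by excising $\lor$I/$\lor$E detours --- are precisely the routine extension work that the paper leaves implicit for readers familiar with \cite{G215}.
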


\begin{proof}
\MEP{\lqz} is proven linear time decidable in \cite[Theorem~5.1]{G215}.
The proof of that theorem exhibited the decision algorithm in detail.
That algorithm is mostly an elaboration of the linear time decision algorithm in \cite{G198}.

It is straightforward, for those familiar with that decision algorithm, to extend it to a linear time decision algorithm for \loq.
We need to augment \S5.6 of \cite{G215} with one additional case treating the new \Lor E rule.
The \Lor E case is similar to (and simpler than) the \Land E case.
\end{proof}

Let \lo\ be the quotation-free fragment of \loq.
The \lo\ calculus is obtained from the \loq\ calculus by erasing the quotation prefixes $\pi$.

\begin{corollary}\label{c:l1}
\MEP{\lo} is linear time decidable.
\end{corollary}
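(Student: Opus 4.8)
The plan is to show that \loq\ and \lo\ agree on entailment between quotation-free formulas, i.e.\ that \loq\ is conservative over \lo, and then read the corollary off Proposition~\ref{p:0}. Every \lo-formula is a quotation-free \loq-formula, so once I establish that, for quotation-free hypotheses \H\ and a quotation-free query $Q$, one has $\H\entails Q$ in \lo\ if and only if $\H\entails Q$ in \loq, the linear-time algorithm of Proposition~\ref{p:0} decides \MEP{\lo} simply by being run on the \lo-inputs, which are already legitimate \loq-inputs. Linear time is inherited, since restricting an algorithm to a subclass of its inputs cannot increase its running time.

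The easy half of the conservativity is immediate: the \lo\ calculus is literally the \loq\ calculus with every prefix $\pi$ taken to be empty, so every \lo-derivation is already a \loq-derivation; hence $\H\entails Q$ in \lo\ implies $\H\entails Q$ in \loq.

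For the substantive half I would introduce the erasure map $e$ that deletes every quotation connective $p\ \said$ wherever it occurs in a formula; thus $e$ sends each \loq-formula to an \lo-formula and fixes every quotation-free formula. The key step is to verify that $e$ is rule-preserving: applying $e$ to every formula in an instance of a \loq-rule yields an instance of the corresponding \lo-rule, and the \loq-axiom $\pi\top$ erases to the \lo-axiom $\top$. This is a routine case check over the finitely many schemas, using $e(\pi\,C)=e(C)$ together with the fact that $e$ commutes with the binary connectives; for instance the \loq-rule $\pi A,\ \pi(A\To B)\bs\pi B$ erases to $e(A),\ e(A)\To e(B)\bs e(B)$, an instance of \lo's \To E. Relabelling every node $a$ of a \loq-derivation by $e(L(a))$ (and correspondingly erasing its rule label) then turns it into an \lo-derivation with conclusion $e(Q)$ from hypotheses $\set{e(h):h\in\H}$, the matching condition of Definition~\ref{d:dtree1} being preserved automatically. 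When \H\ and $Q$ are quotation-free, $e$ fixes them, so any \loq-derivation of $Q$ from \H\ erases to an \lo-derivation of $Q$ from \H, giving the remaining implication.

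The one point needing care --- and essentially the only place an error could hide --- is the rule-preservation check for $e$, in particular confirming that no schema falls outside \lo\ after erasure: the \Lor E rule $\pi(A\Lor A)\bs\pi A$ erases to $e(A)\Lor e(A)\bs e(A)$, still a valid \lo\ rule, and the prefix-carrying axiom erases to a genuine \lo\ axiom. Once this is verified the corollary follows with no further algorithmic work, the complexity being borrowed wholesale from Proposition~\ref{p:0}.
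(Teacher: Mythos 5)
Your proposal is correct and follows essentially the same route as the paper: both reduce \MEP{\lo} to \MEP{\loq} by showing conservativity, with the trivial direction given by viewing \lo-derivations as \loq-derivations with empty prefixes and the substantive direction given by erasing all quotation connectives in an \loq-derivation to obtain an \lo-derivation. Your write-up merely makes explicit (the erasure map $e$ and the rule-by-rule preservation check) what the paper compresses into a single sentence.
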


\begin{proof}
It suffices to check that quotation-free hypotheses \H\ entail a quotation-free query $Q$ in \lo\ if and only if \H\ entails $Q$ in \loq.
The only-if implication is trivial.
To prove the if implication, notice that erasing all quotation prefixes in an \loq\ derivation of $Q$ from \H\ results in an \lo\ derivation of $Q$ from \H.
\end{proof}

Let $\L_2$ be the extension of \lo\ with the standard \Bot E rule\quad $\fr\bot A$.

\begin{corollary}\label{c:l2}
\MEP{$\L_2$} is linear time decidable.
\end{corollary}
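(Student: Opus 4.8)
The plan is to reduce \MEP{$\L_2$} to \MEP{\lo} in linear time; since the latter is linear time decidable by Corollary~\ref{c:l1}, so is the former. The guiding idea is that the only rule of $\L_2$ absent from \lo\ is \Bot E, and its effect is entirely controlled by whether $\Bot$ itself is derivable: if $\H$ entails $\Bot$, then \Bot E lets $\H$ entail every query, whereas if $\H$ does not entail $\Bot$, then \Bot E can never fire and is harmless.

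The fact that makes this precise is a lemma: $\H$ entails $\Bot$ in $\L_2$ if and only if $\H$ entails $\Bot$ in \lo. The ``if'' direction is immediate, since $\L_2$ extends \lo. For ``only if'' I would induct on the number of nodes of an $\L_2$-derivation \D\ of $\Bot$ from $\H$. If \D\ contains a \Bot E node, pick one; its single parent is labeled $\Bot$, so the sub-derivation rooted at that parent is a strictly smaller $\L_2$-derivation of $\Bot$ from (a subset of) $\H$, and the induction hypothesis yields an \lo-derivation of $\Bot$ from $\H$. Otherwise \D\ uses no \Bot E and is itself an \lo-derivation. The observation underlying this reduction is simply that \Bot E has an arbitrary conclusion $A$ and premise $\Bot$, so it can never be the rule that \emph{first} produces $\Bot$; the conclusion $\Bot$ can be produced only by \Land E, \To E, or \Lor E, all of which already belong to \lo.

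Granting the lemma, I would split into two cases. If $\H$ entails $\Bot$ in \lo, then in $\L_2$ every query is entailed by one further application of \Bot E, so all answers are ``yes.'' If $\H$ does not entail $\Bot$ in \lo --- equivalently, by the lemma, in $\L_2$ --- then for every query $Q$ we have $\H\entails Q$ in $\L_2$ iff $\H\entails Q$ in \lo. The nontrivial direction: given an $\L_2$-derivation of $Q$, no node can be labeled $\Bot$, since the sub-derivation rooted at such a node would show that $\H$ entails $\Bot$ in $\L_2$, contradicting the case assumption; hence no \Bot E node occurs and the derivation is already an \lo-derivation. The other direction is trivial.

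The algorithm then appends $\Bot$ to the given sequence of queries and runs the \lo\ multi-entailment procedure once. If the verdict for $\Bot$ is ``yes,'' it reports every original query as entailed; otherwise it copies the \lo\ verdicts for the original queries verbatim. Appending a single query and this constant-time postprocessing keep the whole computation within linear time. I do not anticipate a real obstacle; the only point demanding care is the lemma, namely the verification that \Bot E cannot manufacture $\Bot$ where \lo\ could not, which rests on checking precisely which $\L_2$ rules can have $\Bot$ as their conclusion.
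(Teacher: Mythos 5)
Your proposal is correct and takes essentially the same route as the paper: both reduce \MEP{$\L_2$} to \MEP{\lo} by first testing whether \H\ entails \Bot\ in \lo, answering ``yes'' to every query if so, and otherwise passing the queries to the \lo\ algorithm unchanged. Your induction on derivation size is simply a more detailed writing-out of the paper's observation that, when \H\ does not entail \Bot, no node of a derivation from \H\ can be labeled \Bot\ and hence the \Bot E rule never fires.
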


\begin{proof}
Given hypotheses \H\ and queries \Q, use the decision algorithm for \MEP{\lo} to check whether \H\ entails \Bot\ in \lo.
If yes, then all \Q\ queries are entailed by \H\ in $\L_2$.
Otherwise, use the decision algorithm for \MEP{\lo} to check which \Q\ queries are entailed by \H\ in \lo, and output the result.
To see that this is correct, consider an \lo\ derivation \D\ from \H.
\Bot\ cannot appear as the label of any node in \D, and so the \Bot E rule is not applied in \D.
Thus, \H\ entails the same consequences in \lo\ and in $\L_2$.
\end{proof}

Notice that the propositional fragment PfQPL of QPL is the extension of $\L_2$ with the axiom schema\quad $A\to A$.

Given a derivation \D\ in PfQPL and an axiom node with label $L$, we say that the axiom $L$ is \emph{local} to \D\ if $L=\Top$ or $L$ also occurs in a hypothesis or the conclusion.

\begin{lemma}\label{l:alocal}
If \D\ is a PfQPL derivation of $Q$ from \H\ that is minimal (according to \S\ref{sb:hilbert}), then all \D-axioms are local.
\end{lemma}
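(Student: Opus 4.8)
Since the $\D$-axioms are only \Top\ (local by definition) and instances $A\To A$ of the added schema, it suffices to rule out a non-local axiom node $a$ whose label is $L = A\To A$. The plan is to assume, toward a contradiction, that \D\ is minimal while $L$ is a subformula neither of any \D-hypothesis nor of the conclusion $Q$, and then to exploit minimality twice: first to force \D\ into a normal form, and then to trace the offending occurrence of $L$ upward until it lands inside a hypothesis that must therefore contain it.

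First I would record the ``redexes'' that minimality forbids. Because Flatting left the introduction rules without any discharge of hypotheses --- \To I is $\fr B{A\To B}$ and disjunction elimination is $\fr{C\Lor C}C$ --- every introduction/elimination detour can be excised cheaply: one replaces the subtree rooted at the elimination by the subderivation of the relevant introduction premise (the derivation of $A$ for \Land E, of $B$ for \To E, of $C$ for \Lor E), obtaining a derivation of the same conclusion, with no new hypotheses and strictly fewer nodes. Hence no minimal derivation contains such a detour: the major premise of an elimination is never the conclusion of the matching introduction. A second, non-standard redex must also be removed: were an axiom $Y\To Y$ (or \Top) the major premise of an elimination, the only possibility by shape is the \To E no-op $\fr{Y\quad Y\To Y}Y$, which we delete by splicing in the derivation of its minor premise. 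So in a minimal derivation no axiom is the major premise of any elimination.

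These two facts tame the tracks of major premises. Following major premises upward from any elimination --- the sole premise of \Land E or \Lor E, the implication premise of \To E --- we can stop neither at an introduction (a detour) nor at an axiom (never a major premise), so the track ends at a hypothesis $H$; and since each elimination's conclusion is a subformula of its major premise (a conjunct for \Land E, the consequent for \To E, and $C$ from $C\Lor C$ for \Lor E), every formula on the track is a subformula of $H$. Now I trace the occurrence of $L$ born at $a$ up the path to the root. As $L$ is not a subformula of $Q$, this occurrence is consumed at some node $b$; introduction rules and \Lor E carry it into their conclusion, and \Bot E (premise \Bot) cannot act on a formula containing $L$, so $b$ is an instance of \Land E or \To E. In either case the formula eliminated at $b$ contains $L$: directly, when the $L$-carrier is the major premise; and through its antecedent, when $L$ sits in the minor premise $A'$ of a \To E, for then the major premise $A'\To B$ carries $L$ as well. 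Applying the track bound to $b$ shows that this major premise, hence $L$ itself, is a subformula of some hypothesis $H$, contradicting non-locality.

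The heart of the argument is the normalization step: one must verify that minimality excludes every detour \emph{and} the axiom-driven no-op that the textbook notion of detour overlooks. The feature that makes this painless is precisely the one produced by Flatting --- introduction rules discharge nothing --- so detour removal never duplicates a subderivation and always strictly shrinks the tree. With that established, the track analysis and the occurrence-tracking are routine bookkeeping; the only care needed is to confirm that each surgery preserves the conclusion and introduces no new hypotheses, so that minimality is genuinely contradicted.
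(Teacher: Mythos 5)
Your route is genuinely different from the paper's: the paper picks a nonlocal axiom $C\to C$, replaces \emph{every} occurrence of it in the derivation by $\Top$, and checks that the result is still a derivation, using minimality only to rule out the two degenerate inferences $C\bs C\to C$ and $\seq{C, C\to C}\bs C$; you instead normalize and run a major-premise track argument. But your version has a genuine gap: $\bot$E. PfQPL contains the rule $\bot\bs A$, and this is an elimination whose conclusion is \emph{not} a subformula of its premise. Your track invariant (``every formula on the track is a subformula of $H$'') is justified only for $\Land$E, $\To$E, and $\Lor$E, and it breaks the moment a $\bot$E node sits on the track; neither of your two normalization facts (no introduction/elimination detours, no axiom in major-premise position) excludes this. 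Concretely, take $\H=\set{\bot}$ and $Q=P$, and consider the four-node derivation: hypothesis $\bot$; then $(A\to A)\to P$ by $\bot$E; axiom $A\to A$; then $P$ by $\To$E. The axiom $A\to A$ is nonlocal, the derivation contains no detour and no axiom as a major premise, yet the track from the root's major parent ends at the hypothesis $\bot$, of which the major premise $(A\to A)\to P$ is certainly not a subformula. So minimality, used only through your two facts, does not produce the contradiction your final step claims.

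The gap is repairable with one more normalization fact: in a minimal derivation, $\bot$E can occur only at the root (if $\bot$E occurred at a non-root node, splice the subderivation of $\bot$ onto a single $\bot$E inference concluding $Q$; this is strictly smaller, has the same conclusion, and introduces no new hypotheses). Since all the track nodes $b_1, b_2,\dots$ lie strictly above the consumption node $b$ and hence are not the root, no $\bot$E can then appear on a track, and your subformula chain goes through. With that addition your argument is correct, though it remains heavier than the paper's replacement argument, which handles $\bot$E trivially (replacing occurrences of $C\to C$ by $\Top$ turns $\bot\bs A$ into the valid inference $\bot\bs A'$) and needs no normal-form analysis at all.
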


\begin{proof}
The proof is by induction on the number of nonlocal \D-axioms.
If all \D-axioms are local, we are done.
Otherwise pick a nonlocal \D-axiom $C\to C$ and replace all occurrences of $C\to C$ in \D\ with \Top.
As a result, every formula $X$ is transformed into a formula $X'$.
Let \Dp\ be the resulting labeled tree.
We prove that \Dp\ is a derivation of $Q$ from \H\ with the same number of nodes as \D\ but fewer nonlocal axioms.

Since $C\to C$ is nonlocal, the hypotheses and conclusion remain unchanged.
For the same reason, every local axiom $L$ remains unchanged.
Every nonlocal axiom $A\to A$ in \D\ is transformed into $A'\to A'$ unless $A=C$ in which case $A\to A$ is transformed into \Top.

It remains to check that every inference rule instance $R$ in \D\ is transformed into a valid inference $R'$ in \Dp.
If $R$ is
\begin{align*}
& \seq{A, B}\bs A\land B,\quad A\land B\bs A,\quad B\land A\bs A,\quad \\
& A\bs A\lor B,\quad B\bs A\lor B,\quad A\lor A \bs A,\quad
\text{or}\quad \bot\bs A
\end{align*}
then $R'$ is a valid inference respectively:
\begin{align*}
& \seq{A', B'}\bs A'\land B',\quad A'\land B'\bs A',\quad B'\land A'\bs A',\quad \\
& A'\bs A'\lor B',\quad B'\bs A'\lor B',\quad A'\lor A' \bs A',\quad
\text{or}\quad \bot\bs A'.
\end{align*}
Suppose that $R$ is
\[ B\bs A\to B\quad\text{or}\quad \seq{A, A\to B}\bs B. \]
If $A\ne C$ or $B\ne C$, then $R'$ is a valid inference
\[ B'\bs A'\to B'\quad\text{or}\quad \seq{A', A'\to B'}\bs B' \]
respectively.
It remains to consider the situations where $R$ is
\[ C\bs C\to C\quad\text{or}\quad \seq{C, C\to C}\bs C \]
in \D\, so that $R'$ is $C\bs\Top$ or $\seq{C,\Top}\bs C$.
But these situations are impossible because of the minimality of \D.
If $R$ is $C\bs C\to C$, remove the premise and the part of the deduction leading to it, and if $R$ is $\seq{C, C\to C}\bs C$, remove the premise $C\to C$ as well as the part of deduction leading to it and identify the premise $C$ with the conclusion $C$.
In either case, the result is a deduction of $Q$ from \H\ with fewer nodes.
\end{proof}

\begin{theorem}\label{t:star}
\MEP{PfQPL} is linear time decidable.

\end{theorem}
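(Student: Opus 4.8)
The plan is to reduce \MEP{PfQPL} to \MEP{$\L_2$}, which is linear time decidable by Corollary~\ref{c:l2}. Since PfQPL is precisely $\L_2$ together with the axiom schema $A\to A$, the only additional derivational power comes from instances of that schema, and Lemma~\ref{l:alocal} tells us that in a minimal derivation only \emph{local} instances are needed. I would therefore preprocess the input by turning the relevant instances of $A\to A$ into hypotheses and then call the $\L_2$ algorithm. Concretely, given a hypothesis sequence \H\ and queries $Q_1,\dots,Q_m$, let \A\ be the set of all formulas of the form $C\to C$ that occur as subformulas of \H\ or of some $Q_j$, and set $\H'=\H\cup\A$. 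The claim to establish is that, for each $j$,
\[ \H\entails Q_j \text{ in PfQPL}\quad\Longleftrightarrow\quad \H'\entails Q_j \text{ in }\L_2. \]

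For the right-to-left implication, every rule of $\L_2$ is a rule of PfQPL and every member of \A\ is a PfQPL axiom, so an $\L_2$ derivation of $Q_j$ from $\H'$ turns into a PfQPL derivation of $Q_j$ from \H\ by relabeling each hypothesis node carrying a member of \A\ as an axiom node. For the left-to-right implication, I would take a \emph{minimal} PfQPL derivation \D\ of $Q_j$ from \H\ and apply Lemma~\ref{l:alocal}: every \D-axiom is local, hence is either \Top, which is already a $\top$I axiom of $\L_2$, or some $A\to A$ occurring as a subformula of \H\ or of $Q_j$, so that $A\to A\in\A\subseteq\H'$. Relabeling each such $A\to A$ axiom node as a hypothesis node, and leaving the $\top$I nodes and all proper-rule nodes unchanged, produces an $\L_2$ derivation of $Q_j$ from $\H'$ (this uses the fact that PfQPL and $\L_2$ have exactly the same proper rules). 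Forming \A\ as a union over all the queries is harmless: since each $A\to A$ is a genuine axiom, adding it as a hypothesis is sound and therefore can never create a spurious entailment for any single $Q_j$.

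For the running time, the whole reduction amounts to locating the subformulas of the input of the form $C\to C$ and appending them to the hypothesis sequence. Here I would lean on the shared subformula representation underlying the linear-time model of \cite[\S5]{G215}: in that representation each distinct subformula is stored once, so an implication subformula equals $C\to C$ exactly when its two child pointers coincide, a constant-time test at each node. Thus \A\ is computed in linear time, and since its members are already nodes of the input, appending pointers to them to the hypothesis list keeps the total size linear; running the \MEP{$\L_2$} algorithm of Corollary~\ref{c:l2} on $\H'$ and $Q_1,\dots,Q_m$ then finishes in linear time. I expect the one genuinely delicate point to be exactly this constant-time recognition of $C\to C$: done naively by string comparison it would cost quadratic time overall, and it becomes immediate only once the maximally shared representation is in place. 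Everything else is routine bookkeeping resting on Lemma~\ref{l:alocal}.
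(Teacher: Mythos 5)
Your proposal is correct and follows essentially the same route as the paper: reduce to \MEP{$\L_2$} via Corollary~\ref{c:l2} by adjoining, as extra hypotheses, exactly those axioms $A\to A$ occurring as subformulas of the input, with Lemma~\ref{l:alocal} justifying that these suffice. Your relabeling argument for correctness and the shared-subformula (pointer-equality) implementation are just worked-out versions of the details the paper leaves to the reader's ``background in linear-time algorithms'' from \cite[\S5]{G215}.
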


\begin{proof}
This theorem is a consequence of Corollary~\ref{c:l2} and the previous lemma.
In view of Corollary~\ref{c:l2}, it suffices to exhibit an algorithm that, given arbitrary sequences $\H,\Q$ of PfQPL hypotheses and queries, constructs in linear time a sequence $\H'$ of axioms of the form $A\to A$ such that, for every query $Q\in \Q$, $\H$ entails $Q$ in PfQPL if and only if $\H', \H$ entails $Q$ in $\L_2$.

By the lemma, the desired $\H'$ comprises the axioms $A\to A$ that occur in \H,\Q.
Given a little background in linear-time algorithms, it is easy to see that such an algorithm can be designed to work in linear time.
The necessary background is found, e.g., in an initial segment of \S5 of \cite{G215}.
\end{proof}

\section{Locality}
\label{s:local}

In Sections~\ref{s:local}--\ref{s:lower} we work with QPL.
Terms, formulas, inference rules, and derivations are by default those of QPL.

\subsection{Parameter locality}

Recall that a parameter of a formula $X$ is a constant that occurs in $X$ or a variable that is free in $X$.

\begin{definition}\mbox{}\label{d:par}
Fix an individual constant.
For a set $S$ of formulas,
\Pars{S} is the set $P_0$ of parameters in the formulas of $S$, unless $P_0$ has no constants, in which case \Pars{S} is the extension of $P_0$ with the fixed constant.  \qef
\end{definition}

Thus, \Pars{S} always contains a constant.
The star in the notation is intended as a reminder of this fact.

\begin{definition}\mbox{}\label{d:par}
Let \D\ be a derivation and $S$ the set comprising the hypotheses and the conclusion of \D.
A \emph{parameter} in \D\ is a parameter in a node label in \D.
A parameter $p$ in \D\ is \emph{local} (in \D) if $p\in\Pars{S}$.  \qef
\end{definition}

\begin{proposition}\label{p:plocal}
For every derivation \D, there is a derivation $\D'$ with the same hypotheses, the same conclusion, and the same number of nodes, such that all parameters in $\D'$ are local.
\end{proposition}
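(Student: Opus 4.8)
The plan is to mimic the proof of Lemma~\ref{l:alocal} and argue by induction on the number of nonlocal parameters of $\D$. Let $S$ be the set consisting of the hypotheses and the conclusion of $\D$. If $\D$ has no nonlocal parameters, I take $\D'=\D$. Otherwise I fix a nonlocal parameter $p$ together with a constant $c\in\Pars{S}$, which exists because $\Pars{S}$ always contains a constant. Let $\sigma$ be the substitution replacing $p$ by $c$, applied to the free occurrences of $p$ in every node label of $\D$; the result is a labeled tree $\Dp$ with exactly as many nodes as $\D$. Since $p$ is nonlocal we have $p\notin\Pars{S}$, so $p$ occurs free in no hypothesis and does not occur in the conclusion; hence $\sigma$ fixes all of $S$, and $\D'$ will have the same hypotheses and conclusion. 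Moreover $\sigma$ introduces no parameter other than the local constant $c$ while deleting every free occurrence of $p$, so $\Dp$ has strictly fewer nonlocal parameters than $\D$. Once $\Dp$ is shown to be a genuine QPL derivation, the induction hypothesis applied to $\Dp$ produces the desired $\D'$.

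The heart of the argument is to verify that $\sigma$ carries each inference-rule instance of $\D$ to an instance of the same rule in $\Dp$. For the propositional rules ($\Top$I, $\Land$I, $\Land$E, $\To$I, $\To$E, $\Lor$I, the rule $A\Lor A\bs A$, and $\Bot$E) this is immediate, since $\sigma$ commutes with the connectives: applying it to premises and conclusion merely replaces each schematic formula $A,B,\dots$ by $A\sigma,B\sigma,\dots$. For the quantifier rules I use the elementary substitution lemma $\bigl(A(t)\bigr)\sigma=\bigl(A\sigma\bigr)\bigl(t\sigma\bigr)$ together with preservation of substitutability. Concretely, an $\a$E step $\a v A(v)\bs A(t)$ becomes $\a v\,(A\sigma)(v)\bs(A\sigma)(t\sigma)$, an $\e$I step $A(t)\bs\e v A(v)$ becomes $(A\sigma)(t\sigma)\bs\e v\,(A\sigma)(v)$, and the vacuous-quantifier steps $A\bs\a v A$ and $\e v A\bs A$ become $A\sigma\bs\a v\,A\sigma$ and $\e v\,A\sigma\bs A\sigma$; in these last two cases $v$ is still not free in $A\sigma$, because $\sigma$ only inserts the constant $c$ and $c$ is not a variable.

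The one delicate point, and the main obstacle, is the interaction of $\sigma$ with bound variables. Because the value $c$ of $\sigma$ is a constant, it can never be captured, so no clash of variables is created and the substitutability provisos of $\a$E and $\e$I are inherited by their images; this is precisely what makes the substitution lemma applicable. The remaining nuisance is purely symbolic: when $p$ is a variable, the symbol $p$ might also be used as a quantified variable $v$ at some node, in which case the displayed identities must be read with care. I would eliminate this nuisance at the outset by a preliminary, consistent renaming of the bound variables of $\D$ so that every quantifier binds a fresh variable distinct from all parameters of $\D$. Such renaming alters neither the hypotheses, the conclusion, the number of nodes, nor the validity of any rule instance, and afterwards $v\ne p$ holds at every quantifier step, so the substitution lemma applies verbatim. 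With this normalization in place the case analysis above is routine, and the induction goes through.
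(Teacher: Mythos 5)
Your core strategy is exactly the paper's: replace nonlocal parameters by a local constant and check that every rule instance survives. (The paper performs all replacements simultaneously where you replace one parameter per induction step, but that difference is immaterial.) Your treatment of the propositional rules, the vacuous-quantifier rules, and the \a E/\e I steps whose bound variable differs from $p$ is correct, and your observation that a constant can never be captured is precisely the right reason why substitutability is preserved.

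The genuine problem is the preliminary renaming of bound variables. You assert that such a renaming ``alters neither the hypotheses, the conclusion, \dots,'' but this fails exactly in the situations the renaming is supposed to handle: a quantifier whose bound variable coincides with a parameter of \D\ can sit inside a hypothesis or the conclusion. For instance, take the single hypothesis $\a p\,\big(R(p)\land \a q\,R(q)\big)$, apply \a E instantiating the bound variable with the term $p$ itself to get $R(p)\land \a q\,R(q)$, and then \Land E to get the conclusion $\a q\,R(q)$. Here $p$ is free in the intermediate label but not in $S$, so $p$ is a nonlocal parameter, yet the quantifier $\a p$ occurs in the hypothesis; renaming it turns the hypothesis into a syntactically different formula. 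Since formulas here are strings and there is no built-in identification of alphabetic variants (recall that primal logic does not even satisfy replacement of equivalents), the renamed tree no longer has \emph{the same} hypotheses, which is what the proposition demands. Fortunately the nuisance you were dodging is harmless and the renaming should simply be dropped: $\sigma$ touches only free occurrences of $p$, so at a quantifier step with bound variable $p$ the quantified formula $\a p\,A$ (resp.\ $\e p\,A$) is left unchanged, while the other formula of the rule is unchanged when $t\ne p$ and becomes $A[c/p]$ when $t=p$, giving in the latter case the legitimate instance $\a p\,A \bs A[c/p]$ (resp.\ $A[c/p] \bs \e p\,A$) with instantiating term $c$. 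This direct check is what the paper's own proof does, where the transformed \a E step is exhibited as an instance whose instantiating term is the chosen local constant; with it in place of the renaming, your induction goes through.
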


\begin{proof}
Replace every nonlocal parameter in \D\ with the same local constant $p$.
We shall use primes to indicate that replacement, writing, for example, $\D', X', R'(a)$.
It suffices to show that $\D'$, the resulting labeled tree, is a derivation.

If $X$ is a hypothesis, the conclusion, or the axiom \Top, then $X' = X$. If $X$ is an axiom $A\to A$, then $X'$ is the axiom $A'\to A'$.
It remains to prove the claim that, for every nonleaf node $a$, $R'(a)$ is a valid inference.

The claim is obvious if rule $R(a)$ is propositional, or \a I, or \e E.
For example,\\
if $R(a) = \seq{A, B}\bs A\land B$, then $R'(a) = \seq{A', B'}\bs A'\land B'$, and\\
if $R(a) = \e x A\bs A$, then $R'(a) = \e x A'\bs A'$ because $x$ is not free in $A$.

Suppose that $R(a)$ is an instance $\a x A(x)\bs A(t)$ of \a E. If parameter $t$ is local, then $R'(a)$ is the instance $\a x A'(x)\bs A'(t)$ of \a E.
Otherwise $R'(a)$ is the instance $\a x A'(x)\bs A'(p)$ of \a E.

The case where $R(a)$ is an instance $A(t)\bs\e x A(x)$ of \e I is similar.
\end{proof}

\subsection{The closure of a set of formulas}\mbox{}
\label{s:closure}

The subformulas and literal subformulas of a given formula $A$ are defined in \S\ref{sb:formulas}.
Let $P$ be a set of parameters.
A \emph{$P$-formula} is a formula with all parameters in $P$.
A \emph{$P$-subformula} of a formula $A$ is a subformula of $A$ that is a $P$-formula.
Note that the ``$P$-subformula'' relation, like the ``subformula'' relation, is transitive.

\begin{lemma}\label{l:singleformula1}
Each $P$-subformula $X$ of a $P$-formula $A$ is obtained from a literal subformula $B$ of $A$ by replacing some occurrences (possibly none)  of free variables with parameters in $P$.
More exactly, if $B$ is in the scope of a quantification $qx$ (\,$\a x$ or $\e x$) then every occurrence of $x$ in $B$ is replaced with the same parameter in $P$.
\end{lemma}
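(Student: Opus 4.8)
The plan is to separate a purely syntactic bookkeeping fact from the role played by the parameter set $P$. The reason for the split is that, as one descends from $A$ into a $P$-subformula $X$ via the clauses defining the subformula relation, the intermediate formulas need not be $P$-formulas: a term $t$ substituted for a quantified variable in an application of clause~3 may lie outside $P$ and yet disappear before one reaches $X$, having landed in a discarded conjunct, disjunct, or the like. Hence I cannot simply run an induction while insisting that every formula passed through be a $P$-formula; the parameter restriction is imposed only at the very end.

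Step~1 (syntactic form). First I would prove, for an \emph{arbitrary} formula $A$ and \emph{every} subformula $X$ of $A$, that $X=B\theta$ for some literal subformula $B$ of $A$ and some substitution $\theta$ that is the identity on those free variables of $B$ that remain free in $A$, and that assigns to each free variable $v$ of $B$ with $B$ in the scope of $\a v$ or $\e v$ in $A$ a single term $\theta(v)$, so that every occurrence of such a $v$ is replaced uniformly. I would prove this by induction on the number of steps generating $X$ from $A$, where a step passes from a formula to one of its immediate parts (a conjunct, disjunct, antecedent, or consequent, or the body $B(t)$ of a quantification $qv\,B(v)$ for substitutable $t$). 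The zero-step case is $X=A=B$ with $\theta$ empty. In the inductive step, let $Y$ be the predecessor of $X$, so $Y=B'\theta'$ by hypothesis. If the last step is an instance of clause~2, then $Y$ being a conjunction (disjunction, implication) forces $B'$ to have the \emph{same} main connective, since substituting terms for variables can neither create nor destroy the outermost connective and $B'$ is neither atomic nor a quantification; thus $B'$ splits into two literal subformulas of $A$, one restricts $\theta'$ to the chosen side, and the enclosing quantifiers of that side in $A$ coincide with those of $B'$.

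The crux is the clause~3 case. Here $Y=\a v\,D(v)$ (or $\e v\,D(v)$), which forces $B'$ to be a quantification $qv\,D'$ with the same bound variable $v$ (bound variables are not renamed by substitution), where $D(v)=D'\theta'$ and $\theta'$ leaves the bound $v$ untouched. Then $X=D(t)=(D'\theta')\{v:=t\}$. I would take $B=D'$, a literal subformula of $A$ as the body of the literal subformula $B'$, and $\theta$ given by $\theta(v)=t$ and $\theta(u)=\theta'(u)\{v:=t\}$ for the other free variables $u$ of $D'$; this yields $X=B\theta$. Since $B=D'$ lies in the scope of $qv$ in $A$, the assignment $v\mapsto t$ is exactly of the permitted kind, and on each $u\neq v$ the value inherited from $\theta'$ keeps the identity on free-in-$A$ variables (because $u\neq v$) while the reclassification caused by the one extra enclosing quantifier $qv$ affects only $v$. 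The tacit substitutability requirement in clause~3 guarantees that no capture occurs, so the composed $\theta$ is well defined and introduces no spurious binding. For Step~2, now take $X$ a $P$-subformula of the $P$-formula $A$ and write $X=B\theta$ as above; it remains to see $\theta(v)\in P$ for each relevant $v$ that actually occurs in $B$. Such a $v$ occurs free in $B$, so $\theta(v)$ occurs in $X$ and, by substitutability, is not captured: if it is a constant it is a constant of $X$, and if it is a variable it is free in $X$; either way it is a parameter of the $P$-formula $X$, hence in $P$. Leaving the free-in-$A$ variables untouched realizes the ``some occurrences (possibly none)'' clause, and the uniform assignment $\theta(v)$ realizes the ``more exactly'' clause.

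The hard part will be the clause~3 case of Step~1: keeping straight the composition of the outer substitution $\{v:=t\}$ with $\theta'$, correctly reclassifying each free variable of $B$ as free-in-$A$ versus bound-in-$A$ once one more enclosing quantifier is added, and verifying that substitutability is preserved under composition so that the image terms genuinely survive as parameters of $X$ in Step~2. Shadowing, where the same letter $v$ is bound by two nested quantifiers, is harmless precisely because $\theta$ assigns a single value per variable letter and replaces all occurrences of that letter uniformly.
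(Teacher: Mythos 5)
Your proof is correct, but it takes a genuinely different route from the paper's. The paper argues by structural induction on $A$ (on the number of binary connectives and quantifications), keeping the parameter constraint in play throughout: in the case $A = B * C$ it passes to $B$ or $C$, and in the case $A = qx\,B(x)$ it asserts that the $P$-formula $X$ is a subformula of $B(t)$ for some $t \in P$, disposing of the awkward possibility of a witnessing term outside $P$ in a parenthetical remark, by observing that such a term cannot occur in $X$ and can therefore be exchanged for a member of $P$. You instead induct on the generation of the subformula relation itself and split the statement in two: first a purely syntactic fact with no reference to $P$ (every subformula of $A$ is a uniform substitution instance $B\theta$ of a literal subformula $B$ of $A$), and only afterwards the observation that when $X = B\theta$ is a $P$-formula, the value of $\theta$ at each variable occurring free in $B$ survives, uncaptured, as a parameter of $X$ and hence lies in $P$. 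Your split cleanly sidesteps the difficulty you flag at the outset --- that intermediate formulas in the generation of $X$ need not be $P$-formulas --- which is precisely what the paper's parenthetical has to confront; it also makes the uniformity claim (same parameter at every occurrence) and the exact role of the hypothesis on $X$ transparent, and your Step~1 is more general since it holds for arbitrary subformulas. The price is the substitution-composition bookkeeping in your clause-3 case (composing $\theta'$ with $\{v := t\}$ while preserving capture-freeness), which the paper's formulation never has to touch; the paper's argument is correspondingly shorter and more direct.
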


\begin{proof}
Induction on the number $n$ of binary connectives and quantifications in $A$.
The claim is trivial if $X = A$ and, in particular, if $n=0$ so that $A$ is atomic.
From now on, we assume that $X\ne A$.

Suppose that $A = B*C$, where $*$ is a binary propositional connective.
Since $X\ne A$, it is a subformula of $B$ or of $C$; use the induction hypothesis.
Suppose that $A = qx B(x)$.
Since $X\ne A$, it is a subformula of $B(t)$ for some $t\in P$; use the induction hypothesis.
($X$ can be a subformula of $B(x)$ but then, since $X$ is a $P$-formula, $x$ is in $P$ or else $x$ does not occur in $X$ in which case $X$ is a subformula of any $B(t)$ with $t\in P$.)
\end{proof}

By induction on formula $A$ we define the \emph{quantifier depth} \qd{A} of $A$.
\begin{itemize}
\item If $A$ is atomic then $\qd{A} = 0$.
\item If  $A$ is $B\land C$, $B\lor C$, or $B\to C$ then $\qd{A} = \max\set{\qd{B},\qd{C}}$.
\item If $A$ is $\a x B(x)$ or $\e x B(x)$ then $\qd{A} = 1 + \qd{B(x)}$.
\end{itemize}

Note that any literal subformula $B$ of $A$ is in the scopes of at most $\qd{A}$ quantifiers, because those scopes are nested in $A$.

For algorithmic purposes, a formula $A$ is a string of symbols.
The length \lh{A} of $A$ is the total number of (occurrences of) symbols in $A$.
Think of $A$ as a sequence of \lh{A} positions, and at each position there is a symbol.

\begin{lemma}\label{l:singleformula2}
The number of $P$-subformulas of a $P$-formula $A$ is at most $\lh{A}\cdot|P|^{\qd{A}}$.
\end{lemma}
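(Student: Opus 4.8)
The plan is to count the $P$-subformulas by sorting them according to the literal subformula of $A$ from which Lemma~\ref{l:singleformula1} says they arise, so that the claimed quantity $\lh{A}\cdot|P|^{\qd{A}}$ factors as the number of literal subformulas of $A$ multiplied by the number of substitution instances contributed by each one.

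First I would bound the number of literal subformulas of $A$ by $\lh{A}$. Each literal subformula corresponds to a node of the parse tree of $A$---the whole formula, an atom, or the formula sitting at a binary connective or a quantifier---and each such node can be charged to at least one symbol occurrence of $A$. Hence the number of literal subformula occurrences, and \emph{a fortiori} the number of distinct literal subformulas, is at most $\lh{A}$.

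Next, fixing a literal subformula $B$, I would count the $P$-subformulas that Lemma~\ref{l:singleformula1} obtains from $B$. According to that lemma, such a $P$-subformula is determined by choosing, for each variable $x$ whose quantification $qx$ in $A$ has $B$ in its scope, a single parameter of $P$ and substituting it uniformly for every occurrence of $x$ in $B$. There are at most $|P|$ choices per variable: if $x\notin P$ the variable must be replaced (else the result would not be a $P$-formula), while if $x\in P$ then leaving $x$ intact coincides with substituting the parameter $x\in P$. By the remark following the definition of quantifier depth, the quantifications whose scope contains $B$ are nested and hence number at most $\qd{A}$; consequently $B$ yields at most $|P|^{\qd{A}}$ distinct $P$-subformulas.

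Putting these together, the rule sending a $P$-subformula to a witnessing pair (literal subformula together with its substitution) is onto the set of pairs, so the number of $P$-subformulas is at most $\lh{A}\cdot|P|^{\qd{A}}$, as required. I expect the only point needing care to be the per-subformula factor, namely verifying through Lemma~\ref{l:singleformula1} that each of the at most $\qd{A}$ relevant variables genuinely admits at most $|P|$ replacement values and that the choices for distinct variables are independent, so that the options really multiply to $|P|^{\qd{A}}$; the bound on the number of literal subformulas is routine. Here one also uses $|P|\ge 1$, which holds in every application since the sets $P$ in question are of the form $\Pars{S}$ and hence nonempty; the degenerate case $P=\emptyset$ can be dismissed by inspection.
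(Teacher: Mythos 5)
Your proof is correct and takes essentially the same route as the paper's: both invoke Lemma~\ref{l:singleformula1}, bound the number of literal subformulas by $\lh{A}$ (you by charging parse-tree nodes to symbol occurrences, the paper by noting distinct literal subformulas start at distinct positions of the string $A$), and bound the instances arising from each literal subformula by $|P|^{\qd{A}}$ using the nesting of quantifier scopes. Your closing remark that $|P|\ge 1$ is needed for the per-subformula bound --- and holds in every application because $P = \Pars{S}$ always contains a constant --- makes explicit a point the paper leaves tacit.
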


\begin{proof}
By the previous lemma, the number of $P$-subformulas of $A$ obtained from a literal subformula $B$ of $A$ is at most $|P|^{\qd{A}}$.
Each literal subformula is a substring of $A$ starting at a certain position in string $A$.
Different literal subformulas start at different positions in $A$.
Thus $A$ has at most $|A|$ literal subformulas, and the claim follows.
\end{proof}

Let $S$ be a finite set of formulas, and let $P = \Pars{S}$.
$A$ is a \emph{subformula of} $S$ if it is a subformula of a formula in $S$.

\begin{definition}\label{d:close}
The \emph{closure} \s\ of $S$ is the set of $P$-subformulas of $S$.
\qef
\end{definition}

Let $n = \lh{S}$ be the length of $S$ meaning the sum of the lengths of its formulas.
Let $d = \qd{S}$ be the \emph{quantifier depth} of $S$ meaning $\max\set{\qd{A}: A\in S}$.

The length \lh{\s} may be exponential in $n$.
For example, if $S$ comprises a single formula
\[\a x_1\a x_2\dots\a x_r R(p_1,x_1,p_2,x_2,\dots,p_r,x_r) \]
then $P$ comprises the $r$ parameters $p_i$, $d = r$, and $n = O(r)$, so that $\lh{\s} \ge n r^r$ is exponential in $n$.

\begin{proposition}\label{p:closuresize}
\begin{enumerate}\mbox{}
\item The cardinality $|\s| \le n\cdot |P|^d$ and therefore $\lh{\s} \le n^2\cdot|P|^d$.
\item Asymptotically $\lh{\s} \le n^{\frac n2}$.
\item If we restrict attention to sets $S$ of bounded quantifier depth, then $\lh{\s}$ is bounded by a polynomial in $n$.
\end{enumerate}
\end{proposition}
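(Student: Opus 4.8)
The plan is to derive all three parts from the two counting lemmas, Lemma~\ref{l:singleformula1} and Lemma~\ref{l:singleformula2}, together with crude bounds on $|P|$ and $d$ in terms of $n$.

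\textbf{Part~1.} First I would bound the cardinality. Since \s\ is the union, over $A\in S$, of the sets of $P$-subformulas of the individual formulas $A$, and since $\qd{A}\le d$ and $|P|\ge1$ (so that $|P|^{\qd{A}}\le|P|^{d}$), Lemma~\ref{l:singleformula2} gives
\[ |\s|\ \le\ \sum_{A\in S}\lh{A}\cdot|P|^{\qd{A}}\ \le\ |P|^{d}\sum_{A\in S}\lh{A}\ =\ n\cdot|P|^{d}. \]
For the length bound I would observe that every $P$-subformula is short: by Lemma~\ref{l:singleformula1} each $P$-subformula $X$ of some $A\in S$ arises from a literal subformula $B$ of $A$ by replacing free-variable occurrences with parameters, and since both a free variable and a parameter are single symbols, $\lh{X}=\lh{B}\le\lh{A}\le n$. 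Hence $\lh{\s}\le n\cdot|\s|\le n^{2}\cdot|P|^{d}$, which is Part~1.

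\textbf{Part~3.} This falls straight out of Part~1, so I would dispose of it next: if $d\le d_0$ for a fixed constant $d_0$, then the trivial bound $|P|\le n$ gives $\lh{\s}\le n^{2}\cdot n^{d_0}=n^{d_0+2}$, a polynomial in $n$.

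\textbf{Part~2.} This is the only part requiring real care, and it is the main obstacle, because the naive bounds $|P|\le n$ and $d\le n/2$ together only yield $\lh{\s}\le n^{2}\cdot n^{n/2}=n^{\,n/2+2}$, which is too weak. The point I would exploit is that $|P|$ and $d$ cannot both be large at once. Indeed, a deepest chain of $d$ nested quantifiers occupies $2d$ symbols (a quantifier symbol and a bound variable at each level), none of which is a parameter occurrence, while the distinct actual parameters account for at least $|P|-1$ further symbol occurrences, disjoint from that chain (the $-1$ allowing for the fixed constant that \Pars{\cdot} may adjoin). This yields the crucial constraint
\[ |P|+2d\ \le\ n+1. \]

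With this constraint in hand I would finish by a two-case split. If $d\le\tfrac n2-2$, then $|P|\le n$ gives $\lh{\s}\le n^{2}\cdot n^{\,n/2-2}=n^{\,n/2}$. Otherwise $d>\tfrac n2-2$, so $|P|\le n+1-2d<5$ and (using $d\le n/2$, which follows from $|P|\ge1$) we get $|P|^{d}\le 4^{\,n/2}=2^{\,n}$, whence $\lh{\s}\le n^{2}\,2^{\,n}\le n^{\,n/2}$ for all sufficiently large $n$, since $(n/2)\log_2 n$ eventually dominates $n+2\log_2 n$. This establishes the asymptotic bound. The delicate point, and the step I expect to be the real work, is justifying the position-counting that gives $|P|+2d\le n+1$; the rest is routine.
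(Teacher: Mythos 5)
Your proof is correct, and for parts (1) and (3) it coincides with the paper's: both obtain (1) by summing the bound of Lemma~\ref{l:singleformula2} over the formulas of $S$ (together with the observation, via Lemma~\ref{l:singleformula1}, that every formula in \s\ has length at most $n$), and both get (3) immediately from (1) with $|P|\le n$. The difference is in part (2). The paper disposes of it in one line --- ``(2) follows from (1) because $|P|<n$ and $d<n/2$'' --- which literally yields only $\lh{\s}\le n^{2}\cdot n^{n/2}$ (more precisely $n^{2}(n-1)^{(n-1)/2}$), i.e., the stated bound up to a polynomial factor; this slack is evidently what the informal word ``asymptotically'' is meant to absorb. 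You instead insist on the literal inequality $\lh{\s}\le n^{n/2}$ for large $n$, observe correctly that the naive bounds cannot give it, and supply the missing ingredient: the positional trade-off $|P|+2d\le n+1$ (a deepest chain of $d$ nested quantifications occupies $2d$ positions, disjoint from one witnessing occurrence of each of the at least $|P|-1$ actual parameters), followed by a two-case split. That counting argument is sound --- binding occurrences of variables are by definition bound and so never witness a parameter, constants are never quantifier symbols or bound variables, and positions in distinct formulas of $S$ are distinct --- and it is in the same spirit as the paper's own Remark improving the exponent $n/2$ to $n/4$ for nontrivial quantification. What your route buys is a sharper, honest version of (2); what it costs is extra work that the authors evidently judged unnecessary under their loose reading of ``asymptotically.''
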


\begin{proof}
(1) follows from Lemma~\ref{l:singleformula2}.
(2) follows from (1) because $|P|<n$ and $d<n/2$.
(3) follows from (1).
\end{proof}

\begin{remark}
In a nontrivial quantification $qx B(x)$, the variable $x$ appears at least once in scope $B(x)$ of $qx$ and is followed there by a comma or right parenthesis.
Restricting attention to nontrivial quantification, we can improve the exponent $n/2$ in (2) to $n/4$. \qef
\end{remark}

\subsection{Minor and major parents}

Most QPL rules have just one premise.
The two exceptions are \Land I and \To E which have two premises.
While the two premises of \Land I play similar roles, the premises of \To E play different roles.

\begin{definition}\label{d:major}\mbox{}
\begin{itemize}
\item The premises $A$ and $A\to B$ of the rule $\seq{A, A\to B}\bs B$ are the \emph{minor} and \emph{major} premises respectively.
\item A parent $b$ of a node $a$ in a derivation is the \emph{minor} parent of $a$ if the rule $R(a)$ at node $a$ is \To E and the label of $b$ is the minor premise of the rule.
    In all other cases $b$ is a \emph{major} parent of $a$.
\end{itemize}
\end{definition}

For future reference, we formulate an obvious corollary.

\begin{corollary}\label{c:major}
Every nonleaf node $a$ has a major parent, and if the rule at $a$ is an elimination rule then $a$ has a unique major parent.
\end{corollary}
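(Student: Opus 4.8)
The plan is to prove both assertions by a direct case analysis on the rule $R(a)$ labeling the nonleaf node $a$, reading off the available rules from the QPL calculus and applying Definition~\ref{d:major}.

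First I would establish the existence of a major parent. Since $a$ is a nonleaf, it has at least one parent. If $R(a)$ is any rule other than \To E, then by Definition~\ref{d:major} every parent of $a$ counts as a major parent, so $a$ trivially has one. The only remaining possibility is that $R(a)$ is \To E, whose rule is $\seq{A, A\to B}\bs B$; here $a$ has two parents, labeled $A$ and $A\to B$, and the parent carrying the major premise $A\to B$ is by definition a major parent. This settles the first claim in all cases.

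Next I would treat uniqueness under the hypothesis that $R(a)$ is an elimination rule. The elimination rules of QPL are \Land E, \To E, \Lor E, \Bot E, \a E, and \e E. Each of these except \To E has a single premise, so $a$ has a single parent; since the rule is not \To E, that parent is major, and uniqueness is immediate. In the remaining case, where $R(a)$ is \To E, the node $a$ has exactly two parents, one labeled with the minor premise $A$ and one with the major premise $A\to B$, so there is precisely one major parent, provided the two labels are genuinely distinct formulas.

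The only step that warrants care is this last point: one must verify that in an instance of \To E the premises $A$ and $A\to B$ are always distinct, so that the minor and major labels attach to different parents and the major parent is well defined and unique. This is immediate from the syntax of formulas, since $A\to B$ is strictly longer than (indeed properly contains) $A$, whence $A\ne A\to B$. All other steps are routine. It is worth noting in passing that uniqueness genuinely fails for some introduction rules --- for instance \Land I, with rule $\seq{A, B}\bs A\land B$, produces a node with two parents both of which are major --- which is exactly why the second claim is restricted to elimination rules.
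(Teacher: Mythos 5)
Your proof is correct and matches the paper's reasoning: the paper states this as an ``obvious corollary'' of Definition~\ref{d:major} without further argument, and your case analysis (all parents are major unless the rule is $\to$E, single-premise elimination rules give uniqueness trivially, and for $\to$E the premises $A$ and $A\to B$ are syntactically distinct) is exactly the routine verification the paper leaves implicit. Nothing is missing.
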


\begin{lemma}\label{l:ei}
If $b$ is a major parent of a node $a$ in a minimal derivation and  $R(b)$ is an introduction rule then $R(a)$ is an introduction rule as well.
\end{lemma}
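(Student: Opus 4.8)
The plan is to argue by contradiction, using minimality to forbid \emph{detours}. If $R(a)$ were an elimination rule, then an introduction rule at the major parent $b$ would mean that the major premise of $R(a)$ is produced by introducing exactly the connective that $R(a)$ eliminates --- a redex --- and I would contract this redex to obtain a strictly smaller derivation with the same conclusion and no new hypotheses, contradicting minimality. So suppose $R(a)$ is an elimination rule. Since $b$ is a major parent and $R(a)$ is an elimination rule, Corollary~\ref{c:major} shows that $b$ is the \emph{unique} major parent of $a$, so $L(b)$ is the major premise of $R(a)$.

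First I would match principal connectives. Every elimination rule has a major premise with a fixed principal connective: $A\Land B$ for \Land E, $A\To B$ for \To E, $A\Lor A$ for \Lor E, $\a v A(v)$ for \a E, $\e v A$ for \e E, and the atom \Bot\ for \Bot E; and every introduction rule concludes a formula whose principal connective is the one it introduces. Because $L(b)$ is simultaneously the conclusion of the introduction rule $R(b)$ and the major premise of $R(a)$, these principal connectives must agree. This already excludes $R(a)$ being \Bot E, since no introduction rule concludes the atom \Bot; and in each remaining case it determines $R(b)$: \Land E forces $R(b)=\Land$I, \To E forces $R(b)$ to be \To I (either the rule $B\bs A\To B$ or the axiom $A\To A$), \Lor E forces $R(b)=\Lor$I, \a E forces $R(b)=\a$I, and \e E forces $R(b)=\e$I.

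In each surviving case I would contract the redex by replacing the sub-derivation of \D\ that concludes $L(a)$ (the set of nodes $\ge a$) with the sub-derivation ending at the suitable parent of $b$. Since that parent lies strictly above $a$, the replacement deletes at least $a$ and $b$ (and, for \Land I and the proper \To I rule, an entire discarded branch), so the resulting tree \Dp\ has strictly fewer nodes, an unchanged root label, and only hypotheses already present in \D. Concretely: for \Land E graft in the parent of $b$ carrying the selected conjunct; for \Lor E, and for \To E when $R(b)$ is the rule $B\bs A\To B$, graft in the premise of $b$; for \To E when $R(b)$ is the axiom $A\To A$ (so $B=A$ and $L(a)=A$), graft in the minor premise instead. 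The one point needing care --- and the main obstacle --- is checking that the grafted sub-derivation really carries the label $L(a)$ in the quantifier cases. For \a E the side condition on \a I forces $v$ not to be free in $A(v)$, whence $A(t)=A(v)$ and the premise of $b$ already equals $L(a)=A(t)$; dually, for \e E the side condition of \e E makes the \e I substitution vacuous, so the premise of $b$ again equals $L(a)$. Since all QPL side conditions are local to individual rule instances (they constrain only the formula of that instance), deleting and grafting intact sub-derivations cannot invalidate any surviving rule application. Once $L(a)$ is matched at the grafted root, \Dp\ is a genuine derivation of the original conclusion from a subset of the original hypotheses with fewer nodes, contradicting minimality. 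Hence $R(a)$ is not an elimination rule; as $a$ is a proper rule node and every proper QPL rule is an introduction or an elimination rule, $R(a)$ must be an introduction rule.
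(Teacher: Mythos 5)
Your proof is correct and follows essentially the same route as the paper's: assume $R(a)$ is an elimination rule, match the introduced and eliminated connectives, exhibit a parent of $b$ whose label equals $L(a)$, and graft that subderivation in place of the one above $a$ to contradict minimality, with the quantifier cases resolved exactly as in the paper by noting that the QPL side conditions make the substitutions vacuous. The only difference is cosmetic: you also treat the case where $L(b)$ is the axiom $A\to A$, which the paper's formalism excludes automatically (axiom nodes are leaves, so no rule $R(b)$ is attached to them), but your handling of it is harmless and correct.
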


The condition that the parent $b$ is major can't be waived as the following counterexample shows.

\begin{center}
\begin{tabular}{ccc}
\Ax{A}
\UI{A\lor B}
\Ax{(A\lor B)\to C}
\BI{C}
\DisplayProof
\end{tabular}
\end{center}

\begin{proof}
Suppose toward a contradiction that $R(a)$ is an elimination rule.
Since $b$ is the major parent of $a$, the logical operation introduced by $R(b)$ is the same one that is eliminated by $R(a)$.
We claim that there is a parent $c$ of $b$ such that $L(c) = L(a)$, so that we can replace the subderivation above (and including) $a$ with the subderivation above $c$, obtaining a smaller derivation with the same conclusion and at most the same hypotheses as \D, which contradicts the minimality of \D.

The proof of the claim is routine and splits into five cases depending on the logical operation introduced by $R(b)$.
\begin{itemize}[leftmargin=10pt]
\item Case $\seq{A,B}\bs A\land B$, so that $R(a)$ is $A\land B\bs A$ or $A\land B\bs B$.
In either case we have $L(c) = L(a)$ for at least one parent $c$ of $b$.
\item Cases $A\bs A\lor B$ and $B\bs A\lor B$. We have only one \Lor E rule, namely $A\lor A\bs A$, so $A = B$, and the desired $c$ is the parent of $b$.
\item Case $B\bs A\to B$, so that $R(a)$ is $\seq{A, A\to B}\bs B$, and the desired $c$ is the parent of $b$.
\item Case $A\bs \a x A$ where $x$ isn't free in $A$, so that $R(a)$ is $\a x A\bs A$, and the desired $c$ is the parent of $b$.
\item Case $A(t)\bs \e x A(x)$, so that $R(a) = \e x A(x)\bs A(x)$ and $x$ isn't free in $A(x)$, and therefore $A(t) = A(x)$; the desired $c$ is the parent of $b$. \qedhere
\end{itemize}
\end{proof}

\begin{corollary}\label{c:unique}
For every node $a$ in a derivation, there is an ascending path $a = a_0 < a_1 < \dots < a_n$ where every $a_{i+1}$ is a major parent of $a_i$ and $a_n$ is a leaf; here $n$ may be zero.
If $R(a)$ is an elimination rule, then the path is unique and every $R(a_i)$ is an elimination rule.
\end{corollary}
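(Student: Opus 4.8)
The plan is to build the path greedily by walking from $a$ upward through major parents, invoking Corollary~\ref{c:major} to keep the walk alive, and then, in the elimination case, to promote this walk to a forced, all-elimination path by using Corollary~\ref{c:major} for uniqueness and Lemma~\ref{l:ei} for the rule type at each node.

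For the first assertion, I would set $a_0 = a$ and define the path recursively: given $a_i$, stop and put $n = i$ if $a_i$ is a leaf; otherwise $a_i$ is a nonleaf, so by Corollary~\ref{c:major} it has at least one major parent, and I take $a_{i+1}$ to be one of them. Every step ascends strictly, $a_i < a_{i+1}$, and a finite tree has no infinite ascending chain, so the recursion halts, and it can halt only at a leaf. This produces the required ascending path $a = a_0 < a_1 < \cdots < a_n$ in which each $a_{i+1}$ is a major parent of $a_i$ and $a_n$ is a leaf, with $n = 0$ exactly when $a$ is already a leaf.

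For the second assertion, assume $R(a) = R(a_0)$ is an elimination rule and that \D\ is minimal, so that Lemma~\ref{l:ei} applies. I would prove by induction along the path that each nonleaf $a_i$ on it has $R(a_i)$ an elimination rule and a \emph{unique} major parent, so that the next node is forced and the whole path is unique. The base case is the hypothesis on $a_0$. For the inductive step, suppose $R(a_i)$ is an elimination rule; then by Corollary~\ref{c:major} its major parent $a_{i+1}$ is unique. If $a_{i+1}$ is a leaf the path ends; otherwise $R(a_{i+1})$ is a proper rule, and I apply the contrapositive of Lemma~\ref{l:ei} with major parent $b = a_{i+1}$ and child $a = a_i$: since $R(a_i)$ is not an introduction rule, $R(a_{i+1})$ cannot be one either. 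Because every proper QPL rule is either an introduction rule or an elimination rule, $R(a_{i+1})$ is an elimination rule, closing the induction and showing that every $R(a_i)$ with $i < n$ is an elimination rule.

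The one point needing care is minimality. The first assertion holds for an arbitrary derivation, but the contrapositive of Lemma~\ref{l:ei} --- the statement that no introduction rule sits immediately above an elimination rule across a major-parent edge --- can fail without it: deriving $A$ by \Land E from an $A \land B$ that was itself produced by \Land I is a detour that minimality forbids, and there the major parent of an elimination node carries an introduction rule (and need not even be unique). So I read the second assertion as concerning minimal derivations, and the step easiest to get backwards is the direction of the contrapositive --- we are handed elimination at the child $a_i$ and must deduce elimination at the major parent $a_{i+1}$, which is exactly Lemma~\ref{l:ei} read downward from $a_{i+1}$ to $a_i$.
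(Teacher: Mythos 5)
Your proof is correct and follows exactly the route the paper intends: the paper states this corollary without proof, as an immediate consequence of Corollary~\ref{c:major} (existence, and uniqueness at elimination nodes, of major parents) and the contrapositive of Lemma~\ref{l:ei}, which is precisely the greedy-walk-plus-induction argument you spell out. Your caveat about minimality is also well taken; it is left implicit in the paper's statement because Lemma~\ref{l:ei} is proved only for minimal derivations, and the corollary's second assertion is indeed applied only to minimal derivations in the sequel.
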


\subsection{Formula locality}\mbox{}
\label{sb:alocal}

\begin{definition}\label{d:alocal}
Let \D\ be a derivation and $S$ the set comprising the hypotheses and the conclusion of \D.
A \D-formula is \emph{local} to \D\ if it belongs to the closure \s\ of $S$.
\end{definition}

\begin{lemma}\label{l:alocal2}
For every derivation there is a derivation with the same hypotheses and conclusion where all axioms are local.
\end{lemma}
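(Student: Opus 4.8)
The plan is to produce the required derivation by one normalization-style reduction and then verify that its axioms come out local automatically, with no occurrence-by-occurrence surgery. First I would take a minimal derivation $\D'$ with the same conclusion whose hypotheses are among those of the given derivation (so it is again a derivation from \H), and then apply Proposition~\ref{p:plocal} to make all of its parameters local. Since that proposition changes neither the hypotheses, nor the conclusion, nor the number of nodes, the result is still minimal and has only local parameters. Because every node label is now a $P$-formula with $P=\Pars S$, a label is local (lies in \s) exactly when it is a subformula of $S$. So it suffices to prove that in this minimal, parameter-local derivation the label of every axiom node is a subformula of $S$.

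I would next isolate two facts about a minimal derivation. (a)~The major premise of an elimination rule is never an axiom: a \Top-leaf is atomic and matches no elimination, while a leaf $D\to D$ could be the major premise only of \To E, forcing the instance $\seq{D,D\to D}\bs D$ whose conclusion repeats its minor premise and is therefore removable by a cut, against minimality. (b)~The maximal descending path $a_0>a_1>\dots>a_n$ in which each $a_i$ is a major parent of $a_{i+1}$ ends at the root or at a minor parent, since otherwise $a_n$ would be a major parent of its unique child and the path would extend.

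Now fix an axiom leaf $\ell$ and run the maximal descending major path $\ell=a_0>\dots>a_n$ from it. By (a) the rule immediately below $\ell$ is not an elimination, and it is not \Bot E since $L(\ell)\neq\bot$; hence it is an introduction, and Lemma~\ref{l:ei} propagates this downward so that every $R(a_i)$ with $i\ge1$ is an introduction. The (major) premise of an introduction is a subformula of its conclusion — for \a I and \e I one uses that the relevant term is local, so the instance is a $P$-subformula — whence $L(\ell)$ is a subformula of $L(a_n)$. If $a_n$ is the root, then $L(a_n)$ is the conclusion and $L(\ell)$ is a subformula of $S$, so $\ell$ is local. Otherwise $a_n$ is the minor parent of a node $c$ with $R(c)$ equal to \To E, and $L(a_n)$ is the antecedent of the major premise $M$ of that \To E, so it remains to show that $M$ is a subformula of $S$.

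To show $M$ is local I would trace it upward: by Corollary~\ref{c:unique} the major parents above $c$ form a path of elimination rules terminating at a leaf, which by (a) is a hypothesis $H\in S$. Every elimination except \Bot E has its conclusion as a subformula of its major premise (again using parameter-locality at \a E), so reading this chain from $H$ downward exhibits $M$ as a subformula of $H$, hence of $S$, and then $L(a_n)$ and therefore $L(\ell)$ are subformulas of $S$. The step I expect to be the main obstacle is the one case this ignores: a \Bot E node on the upward path, where $\bot\bs A$ has an unrelated conclusion and breaks the descent. But such a node carries a subderivation of $\bot$, which can be capped by a single $\bot\bs L(c)$ step to rederive $L(c)$ directly; this discards at least the minor-premise branch above $c$, which contains $\ell$, and yields a strictly smaller derivation, contradicting minimality. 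Hence no \Bot E lies on the path, the descent is valid, and every axiom of the derivation is local, as required.
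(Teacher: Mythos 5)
Your proof is correct, but it follows a genuinely different route from the paper's. The paper proceeds by surgery on the derivation: it takes a minimal derivation with local parameters, picks a nonlocal axiom $C\to C$, closes it under substitution of parameters for free variables (its ``instances''), replaces every occurrence of every instance by \Top, and verifies rule by rule that the result is still a derivation; induction on the number of nonlocal axioms then finishes, with minimality invoked only to exclude the degenerate configurations $C\bs C\to C$ and $\seq{C, C\to C}\bs C$. You instead prove the stronger assertion that a minimal derivation with local parameters \emph{already} has all of its axioms local, by path-tracing: downward from an axiom leaf through introduction rules (Lemma~\ref{l:ei}), and, when that path stops at a minor premise of \To E, upward from there through elimination rules (Corollary~\ref{c:unique}) to a leaf that minimality forces to be a hypothesis; subformula-containment of premises in conclusions (introductions) and of conclusions in major premises (eliminations) then places the axiom in \s. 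Each approach buys something. Yours creates no new \Top-axioms, whereas the paper's replacement step tacitly counts \Top\ as local --- a convention imported from the propositional Lemma~\ref{l:alocal} rather than licensed by Definition~\ref{d:alocal}; it delivers literally what the second sentence of Theorem~\ref{t:local} asserts about minimal derivations themselves; and it confronts the \Bot E exception --- the conclusion of $\bot\bs A$ is not a subformula of $\bot$ --- head on, disposing of it by exactly the right minimality argument (cap the subderivation of \Bot\ with one \Bot E step), a wrinkle that the paper's subsequent lemma on elimination premises passes over silently. The paper's argument, in exchange, is independent of the minor/major-parent machinery, runs parallel to the propositional case, and keeps this lemma modular, whereas your proof absorbs the content of the paper's next lemma (locality of elimination premises) into this one and leans on minimality at several separate points.
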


\begin{proof}
The proof is similar to that of Lemma~\ref{l:alocal}.
Let \D\ be a derivation of $Q$ from \H\ that is minimal, $S = \H\cup\set{Q}$, and $P = \Pars{S}$.
By Proposition~\ref{p:plocal}, we may assume that every parameter in \D\ is local and thus belongs to $P$.

The proof is by induction on the number of nonlocal axioms in \D.
If all axioms in \D\ are local, we are done.
Otherwise pick a nonlocal axiom $\alpha = C\to C$.
Construct the least set $I$ of formulas such that (a)~$\alpha\in I$ and (b)~if $x,p\in P$, $x$ is a variable, and $\beta(x)\in I$, then $\beta(p)\in I$.
Members of $I$ will be called \emph{instances} of $\alpha$ or \emph{$\alpha$-instances}.

Replace all occurrences of every $\alpha$-instance in \D-formulas $X$ with \Top; as a result, $X$ is transformed into a formula $X'$.
Let \Dp\ be the resulting labeled tree.
We prove that \Dp\ is a derivation with the same hypotheses, the same conclusion, and the same number of nodes as \D, but with fewer nonlocal axioms.

By Definition~\ref{d:alocal}, all local \D-formulas are in \s. By the transitivity of the ``$P$-subformula'' relation, all local \D-formulas remain unchanged.
In particular, the hypotheses, the conclusion,  and the local axioms remain unchanged.
Every nonlocal axiom $A\to A$ is transformed into an axiom $A'\to A'$ unless $A\to A$ is an instance of $\alpha$, in which case $A\to A$ is transformed into \Top.

It remains to verify that every node rule $R = R(a)$ in \D\ is transformed into a valid inference rule $R'$ in \Dp.
The verification splits into several cases depending on what kind of rule $R$ is.
The propositional cases are treated exactly as in the proof of Lemma~\ref{l:alocal}.
Here we consider the quantifier cases.

If $R$ is $A\bs \a x A$ or $\e x A\bs A$ then $x$ isn't free in $A$ and $R'$ is a valid inference rule $A'\bs \a x A'$ or $\e x A'\bs A'$ respectively.
Suppose that $R$ is $\a x A(x)\bs A(t)$ or $A(t)\bs \e x A(x)$.
The case where $x$ is not free in $A(x)$ is handled exactly like the cases $A\bs \a x A$ and $\e x A\bs A$ above.
So, assume that $x$ is free in $A(x)$ and therefore $t$ occurs in $A(t)$.
Since all parameters in \D\ are local, $t\in P$.
If an $\alpha$-instance $\beta(x)$ occurs in $A(x)$, it becomes an $\alpha$-instance $\beta(t)$ in $A(t)$.
Accordingly, if $B(x) = (A(x))'$ then $(A(t))' = B(t)$.
Thus $R'$ is a valid inference rule $\a x B(x)\bs B(t)$ or $B(t)\bs \e x B(x)$ respectively.
That completes the verification and the proof of the lemma.
\end{proof}

\begin{lemma}
In a minimal derivation, the premises of any elimination rule are local.
\end{lemma}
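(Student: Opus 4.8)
The plan is to fix a minimal derivation $\D$ of $Q$ from $\H$, set $S=\H\cup\set Q$ and $P=\Pars S$, and first invoke Proposition~\ref{p:plocal} to assume that every parameter in $\D$ is local; since that proposition preserves the hypotheses, the conclusion, and the number of nodes, the resulting derivation is still minimal, and ``local'' continues to mean ``belonging to the closure \s.'' The engine of the argument is Corollary~\ref{c:unique}: given a node $a$ at which $R(a)$ is an elimination rule, there is a unique ascending major path $a=a_0<a_1<\dots<a_n$ to a leaf $a_n$, with $n\ge1$, and every $R(a_i)$ along it is again an elimination rule. I would read off the premises of $R(a)$ from this path — the major premise is $L(a_1)$, and, when $R(a)$ is \To E, the minor premise is the antecedent of $L(a_1)$ — so it suffices to show that all the labels $L(a_1),\dots,L(a_n)$ are local and then handle the antecedent.

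Next I would establish two facts about minimal derivations. First, the major premise of an elimination rule is never an axiom: such a premise cannot be \Top, which is atomic and is the major premise of no elimination rule, and were it $B\To B$ the rule would be \To E, whose conclusion equals its minor premise $B=L(c)$ at the minor parent $c$, so replacing that rule's subderivation by the one above $c$ would strictly shrink \D, contradicting minimality (this is the content of the excised lemma on major premises). Consequently the leaf $a_n$ atop the path is a hypothesis, whence $L(a_n)\in S\subseteq\s$ is local. Second — and this I expect to be the delicate step — \Bot E cannot occur at any $a_i$ with $i\ge1$. If it did, then $L(a_{i+1})=\Bot$, the arbitrary conclusion $L(a_i)$ would be the major premise of the elimination rule $R(a_{i-1})$, and I could replace the whole subderivation rooted at $a_{i-1}$ by the subderivation deriving \Bot\ at $a_{i+1}$ followed by a single \Bot E step producing $L(a_{i-1})$ directly; this drops node $a_i$ (and, in the \To E case, the entire minor subtree of $a_{i-1}$), strictly decreasing the node count without enlarging the hypotheses, again contradicting minimality. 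Thus the only place \Bot E may appear on the path is at the very bottom, $a_0=a$ itself.

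With these in hand I would propagate locality downward along the path by induction from $a_n$. The base case $L(a_n)$ is local by the first fact. For $1\le i<n$ the rule $R(a_i)$ is an elimination rule other than \Bot E (by the second fact), and for each such rule the conclusion $L(a_i)$ is a $P$-subformula of the major premise $L(a_{i+1})$ — for \a E this uses $t\in P$, which holds because all parameters are local — so, since \s\ is closed under $P$-subformulas, $L(a_i)$ is local. In particular the major premise $L(a_1)$ of $R(a)$ is local. Finally, if $R(a)$ is \To E, say $\seq{C,\,C\To B}\bs B$ with major premise $L(a_1)=C\To B$, then the minor premise $C$ is a $P$-subformula of the local formula $C\To B$ and hence local; for \Bot E the sole premise is $L(a_1)=\Bot$, already shown local; and the remaining elimination rules \Land E, \Lor E, \a E, \e E carry only the major premise. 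In every case both premises of $R(a)$ are local. The main obstacle is the \Bot E step of the second fact, since \Bot E breaks the subformula property and must be controlled purely through minimality.
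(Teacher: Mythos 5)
Your proof is correct, and it follows the same skeleton as the paper's: the unique ascending path $a=a_0<a_1<\dots<a_n$ of elimination-rule nodes through major parents (Corollary~\ref{c:unique}), with locality propagated along the path from the leaf down, and the minor premise of \To E recovered at the end as a $P$-subformula of the major premise. But you deviate at two points, and both deviations strengthen the argument. First, at the top of the path the paper lets the leaf be a hypothesis or an axiom and cites the axiom-locality lemma; you instead show from minimality that the major premise of an elimination rule can never be an axiom (an axiom major premise forces \To E with conclusion equal to its minor premise, which minimality forbids), so the leaf is necessarily a hypothesis, making your proof independent of Lemma~\ref{l:alocal2}. Second, and more importantly, the paper's inductive step is the single assertion that, since $R(b)$ is an elimination rule, locality of its major premise implies locality of its conclusion; that assertion fails for \Bot E, whose conclusion is not a subformula of its premise $\bot$. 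You spotted exactly this and closed the gap with a minimality argument: an interior \Bot E node $a_i$ (with $i\ge 1$) could be collapsed by deriving $L(a_{i-1})$ in a single \Bot E step directly from the derivation of $\bot$, strictly shrinking the tree; hence \Bot E can occur only at $a_0$ itself, where its premise $\bot=L(a_1)$ has already been shown local. Your explicit appeal to Proposition~\ref{p:plocal} at the outset (which secures $t\in P$ in the \a E case and preserves minimality) is likewise a detail the paper leaves tacit. In short: same route, but your version covers a case --- \Bot E strictly above the bottom of the path --- that the paper's own inductive step silently skips.
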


\begin{proof}
Since \To E is the only elimination rule with a minor premise and
the minor premise of \To E is a subformula of the major premise, it suffices to prove that the major premise of any elimination rule is local.

By Corollary~\ref{c:unique}, for every elimination-rule node $a$, there is a unique path $a = a_0, a_1, \dots, a_d$ of elimination-rule nodes from $a$ to a leaf where every parent is major.
The proof of the lemma is by induction on the length $d$ of that path.
Let $b$ be the major parent of $a$ and $P = L(b)$.
We need to prove that $P$ is local.

\smallskip\noindent
Induction base: $d=1$.
In this case, $b$ is a leaf and therefore $P$ is a hypothesis or axiom.
By the definition of locality and Lemma~\ref{l:alocal}, $P$ is local.

\smallskip\noindent
Induction step: $d>1$.
By the induction hypothesis, the major premise of $R(b)$ is local.
Since $R(b)$ is an elimination rule, it follows that the conclusion $P = L(b)$ of $R(b)$ is local.
\end{proof}

\begin{theorem}\label{t:local}[Locality theorem]
In a minimal derivation, all formulas local.
Thus, if \D\ is a minimal derivation and $S$ the set comprising the hypotheses and the conclusion of \D, then all \D-formulas belong to the closure \s.
\end{theorem}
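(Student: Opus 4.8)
The plan is to prove the Locality theorem by combining the two preceding lemmas, which between them already cover axioms and the premises of elimination rules. What remains is to account for formulas that appear as conclusions of introduction rules, and to organize everything into a single induction that sweeps through the whole derivation tree. By Lemma~\ref{l:alocal2} I may assume the minimal derivation \D\ has all axioms local, and by Proposition~\ref{p:plocal} I may assume all parameters in \D\ are local, i.e.\ belong to $P = \Pars{S}$. I want to show every \D-formula lies in the closure \s.

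The natural inductive structure is a \emph{downward} induction along the tree, using Corollary~\ref{c:unique}: every node $a$ sits on a unique (when $R(a)$ is an elimination rule) or at least one ascending major path terminating in a leaf. First I would observe that hypotheses and the conclusion are in $S\subseteq\s$ by definition, and local axioms (\Top\ and the surviving $A\to A$ where $A\in\s$) are in \s\ as well. The heart of the argument is then a case split on the rule $R(a)$ at an interior node. The key dichotomy, supplied by Lemma~\ref{l:ei}, is that an elimination rule can only sit below a major parent that is itself an elimination rule (or a leaf), so elimination rules propagate \emph{upward toward leaves}, never downward from an introduction. This means the major premise of an elimination rule is reachable by climbing through elimination rules to an axiom or hypothesis, which is exactly what the immediately preceding lemma establishes: in a minimal derivation the premises of any elimination rule are local.

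With that lemma in hand, the introduction-rule nodes are the only remaining worry, and here the subformula structure does the work. For each introduction rule I would check that the \emph{conclusion} is a $P$-formula built from $P$-subformula premises, so it too lands in \s. Concretely: for \Land I the conclusion $A\land B$ has $A,B\in\s$ as literal subformulas, hence $A\land B$, being a subformula of whichever $S$-formula they descend from, is in \s\ by transitivity of the ``$P$-subformula'' relation; \Lor I ($A\bs A\lor B$) and the axiom-like \To I ($B\bs A\to B$) are handled the same way since the conclusion's immediate subformulas are already known local; and the vacuous-quantifier rules $A\bs\a x A$ and $\e x A\bs A$ (with $x$ not free in $A$) merely wrap or unwrap a quantifier around a local $P$-formula, so the result is again a $P$-subformula of $S$. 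The elimination rules \Land E, \To E, \a E, \e E all have conclusions that are $P$-subformulas of their (now-local) major premise, using Lemma~\ref{l:singleformula1} for the quantifier cases where a substitution of a parameter $t\in P$ occurs.

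The main obstacle, and the reason the two preparatory lemmas were proved first, is the interaction between elimination rules and the closure: a priori an elimination rule's conclusion is a subformula of its premise, which pushes \emph{in the wrong direction} for a closure-membership argument unless one already knows the premise is local. That is precisely why the locality of elimination-rule premises must be secured independently (by climbing major paths to leaves via Corollary~\ref{c:unique}) rather than assumed; the genuinely delicate point is the \a E and \e I substitution step, where the conclusion $A(t)$ is obtained by instantiating at a term $t$, and one must invoke that $t\in P$ (guaranteed by the parameter-locality reduction) together with Lemma~\ref{l:singleformula1} to conclude $A(t)\in\s$. Once every rule case confirms that a local premise yields a local conclusion and every elimination premise is independently local, a straightforward structural induction over \D\ closes the argument.
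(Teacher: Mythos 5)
Your overall architecture --- reduce to local parameters and axioms, invoke the lemma that elimination-rule premises are local, then induct over the tree --- uses the same ingredients as the paper, but your induction runs in the wrong direction, and that is where the proof breaks. You propagate locality from premises to conclusions (leaves toward the root: ``a local premise yields a local conclusion''). For elimination rules this direction is fine, since their conclusions are $P$-subformulas of their premises. But for introduction rules it is false, and your justification for it is a non sequitur: from $A,B\in\s$ you infer that $A\land B$, ``being a subformula of whichever $S$-formula they descend from, is in \s\ by transitivity.'' Transitivity of the $P$-subformula relation only lets you pass from a formula down to its subformulas, never upward: if $A$ and $B$ are subformulas of some $X\in S$, it in no way follows that $A\land B$ is a subformula of $X$ (take $X = A\to(B\to C)$). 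The situation is even worse for \Lor I ($A\bs A\lor B$) and \To I ($B\bs A\to B$): the side formula ($B$, resp.\ $A$) need not occur in the premise or anywhere above it, so no information about the premises can place the conclusion in \s. The claim ``in a minimal derivation, introduction nodes with local premises have local conclusions'' is true, but only as a consequence of the theorem itself; it cannot be established node-by-node in your direction, so your induction is circular at exactly the introduction nodes.

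The paper's proof inverts the induction: the root's label is local by definition (it lies in $S$), and locality is propagated upward from each node to its parents. In that direction the introduction rules become the trivial case --- the premises of an introduction rule are $P$-subformulas of its conclusion (for \e I one needs the substituted term to lie in $P$, which is where Proposition~\ref{p:plocal} enters), so conclusion-local implies premises-local by exactly the transitivity you tried to use --- while the elimination rules are the problematic case, and that is precisely what the preceding lemma (premises of elimination rules are local, proved by climbing major paths to leaves via Lemma~\ref{l:ei} and Corollary~\ref{c:unique}) was designed to handle. So the two preparatory lemmas you cite are the right ones, but they only assemble into a proof when the induction runs from the conclusion up. You correctly observed that closure membership ``pushes in the wrong direction'' for eliminations in a bottom-up sweep; the same obstruction, unanswered, kills your treatment of introductions.
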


\begin{proof}
By the definition of locality, the conclusion is local.
It suffices to prove that, if $L(a)$ is local and $b$ is a parent of $a$, then $L(b)$ is local.
If $R(a)$ is an introduction rule, the claim is obvious.
If $R(a)$ is an elimination rule, use the previous lemma.
\end{proof}

\section{QPL semantics}
\label{s:sem}

\begin{definition}\label{d:standard}
Let $S$ be a set of formulas and $P =\Pars{S}$.
\begin{itemize}
\item A \emph{standard} structure for $S$ is a (first-order) structure $M$ subject to the following requirements.
 \begin{itemize}
 \item The universe (or base set) of $M$ is $P$,
 \item $M$ treats each $p\in P$ as its individual constant and it interprets $p$ as the element $p$ in $P$; there are no other individual constants in the vocabulary of $M$.
 \item The relation symbols of $M$ are exactly those that occur in $S$.
 \end{itemize}
\item An \emph{override function} for $S$ is an assignment of truth values \Top, \Bot\ to \s\ formulas of the form
\begin{align*}
& A\lor B,\ \ A\to B
&&\text{where }A\ne B\quad\text{or}\\
& \a x A(x),\ \ \e x A(x)
&&\text{where $x$ has free occurrences in } A(x). \qefhere
\end{align*}
\end{itemize}
\end{definition}
\noindent
Beware that the free variables in $S$, if any, are treated as constants in $M$ because they are members of $P$.

\begin{definition}\label{d:val}
Let $S$ be a set of formulas, $M$ a standard structure for $S$, and $O$ an override function for $S$.
By induction on formula $X\in\s$, we define when $M$ \emph{\omodels} $X$, symbolically $M\models_O$ $X$ or simply \M{X} when $O$ is clear from the context.
\begin{enumerate}[leftmargin=30pt]

\item[(a)] If $X$ is atomic, \M{X} if $M$ models $X$ in classical logic.
    (In particular, \M{\top} but $M\nvDash \bot$.)
\item[(\Land)] \M{A\land B} if \M{A} and \M{B}.
\item[(\Lor)] \M{A\lor A} if $M\vDash A$;\\ if $A\ne B$, then\\
 \M{A\lor B} if \M{A} or \M{B} or $O(A \lor B) = \top$.
\item[(\To)] \M{A\to A};\\ if $A\ne B$ then\\
 \M{A\to B} if \M{B} or \big($M\nvDash A$ and $O(A \to B) = \top$\big).
\item[(\a)] If $x$ isn't free in $A$ then \M{\a x A} if \M{A};\\ otherwise
    \M{\a x A(x)} if \M{A(t)} for all $t\in P$ and $O(\a x A(x)) = \top$.
\item[(\e)] If $x$ isn't free in $A$ then \M{\e x A} if \M{A};\\ otherwise
    \M{\e x A(x)} if \M{A(p)} for some $p\in P$ or $O(\e x A(x)) = \top$.
\end{enumerate}
$M$ \emph{\omodels} a set $\H\subseteq S$ of formulas, symbolically $M\models_O \H$, if it \omodels\ every formula in \H.
\H\ \emph{semantically yields} a formula $X\in\s$, symbolically $\H\models X$, if $M\models_O \H$ implies $M\models_O X$ for all $M$ and $O$. \qef
\end{definition}

To explain the definition of $O$\!-modeling, think of the standard order $\Bot<\Top$ in the Boolean algebra \set{\top,\bot}.
The $\models_O$ relation is like the relation $\models$ of classical logic except that, in four cases, the values of $O$ can override the classical truth values.
\Top\ can be changed to \Bot\ in the cases (\To,\a) where the classical truth values are defined as ``the greatest such that \dots'', and
\Bot\ can be changed to \Top\ in the cases (\Lor,\e) where the classical truth values are defined as ``the least such that \dots''.

\begin{theorem}[Soundness]
\label{t:sound}
\Quad $\H\entails Q \implies \H\models Q$ \\
for any set \H\ of formulas and any formula $Q$.
\end{theorem}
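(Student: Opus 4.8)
The plan is to argue by induction on the structure of a derivation, propagating $O$-modeling from the leaves (axioms and hypotheses) down to the root. The one thing that must be arranged first is that the relation $\models_O$ even applies to the intermediate formulas: by Definition~\ref{d:val} it is defined only on formulas in the closure \s. So suppose $\H\entails Q$, fix a \emph{minimal} derivation \D\ of $Q$ from \H, and put $S=\H\cup\set{Q}$ and $P=\Pars S$. First I would invoke Proposition~\ref{p:plocal} to replace \D\ by a derivation with the same hypotheses, conclusion and number of nodes (hence still minimal) in which every parameter is local, i.e.\ lies in $P$; then the Locality theorem (Theorem~\ref{t:local}) guarantees that every \D-formula lies in the closure of the \D-hypotheses together with $Q$, and hence in \s, so that \M{L(a)} is meaningful at every node $a$, and every term occurring in a quantifier rule of \D\ belongs to $P$.

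Now fix any standard $M$ and override $O$ with $M\models_O\H$, and prove by induction on \D\ (establishing the property at a node from its parents) that \M{L(a)} for every node $a$; applied to the root this gives $M\models_O Q$, which is exactly what $\H\models Q$ asserts. For the base cases, a hypothesis node carries a formula of \H, which is $O$-modeled by assumption, while an axiom node carries \Top\ (modeled by the atomic clause) or some $A\to A$ (modeled by the first subclause of~(\To)). For the inductive step one checks, rule by rule, that $O$-modeling of the premises forces $O$-modeling of the conclusion. The clauses of Definition~\ref{d:val} were written to make this immediate for \Land I, \Land E, \Lor I, \Lor E (whose sole rule $A\lor A\bs A$ is matched by the $A\lor A$ subclause), \To I (both the axiom $A\to A$ and the rule $B\bs A\to B$), the vacuous-quantifier rules \a I and \e E together with the vacuous instances of \a E and \e I (where both sides reduce to $M\models_O A$), and \Bot E (whose premise \Bot\ is never $O$-modeled, so the step is vacuous).

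The cases that need genuine care are the ones where the semantics can be ``overridden.'' For \To E, $\seq{A, A\to B}\bs B$: if $A=B$ the conclusion is the minor premise and we are done, so assume $A\ne B$. The premises give \M{A} and \M{A\to B}, and by the~(\To) clause the latter means either \M{B} or else $M\nvDash_O A$ together with $O(A\to B)=\top$. The point is that the first premise \M{A} directly contradicts $M\nvDash_O A$, so the override alternative is excluded and \M{B} must hold; this is precisely where having the recursive value \M{A} available (rather than mere classical truth) is essential. For the nonvacuous instances of \a E, $\a v A(v)\bs A(t)$, and \e I, $A(t)\bs\e v A(v)$, I would use that $t\in P$, which the parameter-locality reduction guarantees: the~(\a) clause then yields \M{A(t)} from \M{\a v A(v)} by instantiating its universal ``for all $t\in P$'' at the given $t$, and dually the~(\e) clause yields \M{\e v A(v)} from \M{A(t)} since $A(t)$ witnesses its existential ``for some $p\in P$.''

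I expect the main obstacle to be preparatory rather than computational: once $\models_O$ applies and quantifier-rule terms lie in $P$, the rule-by-rule verification is routine, and the only genuinely nontrivial interaction is the exclusion of the override escape clause in \To E. The real work is therefore the reduction that makes every \D-formula local (so that \M{\cdot} is defined) and every quantifier-rule term a member of $P$ (so that the \a\ and \e\ clauses can be instantiated), which is exactly what Theorem~\ref{t:local} and Proposition~\ref{p:plocal} provide.
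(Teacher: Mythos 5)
Your proposal is correct and follows essentially the same route as the paper's own proof: fix a minimal derivation, apply Proposition~\ref{p:plocal} and then the Locality theorem (Theorem~\ref{t:local}) so that every \D-formula lies in \s\ and $\models_O$ is defined on it, and then induct from the leaves to the root, checking each rule against the clauses of Definition~\ref{d:val}. Your added observations --- that minimality is preserved under the parameter-locality replacement, that the premise \M{A} is what excludes the override escape in the \To E case, and that parameter locality is what puts the instantiating term $t$ in $P$ for \a E and \e I --- are exactly the details the paper leaves implicit.
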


\begin{proof}
Suppose $\H\entails Q$.
Let $S = \H\cup\set{Q}$ and $P = \Pars{S}$.
Then there is a minimal derivation \D\ of $Q$ from \H.
By Proposition~\ref{p:plocal}, we may assume that all parameters in \D\ are local.
By Theorem~\ref{t:local}, all \D-formulas belong to the closure \s\ of $S$.

Let $M, O$ range over standard structures for $S$ and override functions for \s, respectively.
Suppose that $M$ \omodels\ \H.
By induction on \D, from the leaves to the root, we prove that $M$ \omodels\ every node label $L(a)$ and in particular the label $Q$ of the root.

First suppose that $a$ is a leaf.
If $L(a)$ is a hypothesis, the claim holds by the choice of $M, O$. If $L(a)$ is an axiom, use clauses~(a) and (\To) in the above definition.

Next suppose that $a$ isn't a leaf.
A number of cases arise depending on what rule $R(a)$ is.
All cases are simple.
We consider one of the two \Lor I cases and the cases \Bot E, \To E,
\a E, and \e I.

\smallskip\noindent
Case $A\bs A\lor B$.
By the induction hypothesis, \M{A}.
By clause~(\Lor), \M{A\lor B} whether $A,B$ are equal or not.

\smallskip\noindent
Case $\bot\bs A$. This case does not arise.
Indeed, by induction hypothesis, \M{\bot}, but by the clause (a) of the definition of $\models$ relation, $M\nvDash\bot$.

\smallskip\noindent
Case $\seq{A, A\to B}\bs B$. By the induction hypothesis, \M{A} and \M{A\to B}. By clause~(\To), \M{B} whether $A,B$ are equal or not.

\smallskip\noindent
Case $\a x A(x)\bs A(t)$. By the induction hypothesis, \M{\a x A(x)}. Use clause~(\a).

\smallskip\noindent
Case $A(t)\bs\e x A(x)$. By the induction hypothesis, \M{A(t)}. Use clause~(\e).
\end{proof}

We mentioned in \S\ref{sb:disc}.A that $\seq{(A\to B), (B\to C)}\bs (A\to C)$ fails in QPL.
Now we can easily prove that.

\begin{corollary}\label{c:trans}
Let $A,B,C$ be distinct nullary relation symbols (which may be viewed as propositional variables).
Then formulas $A\to B$ and $B\to C$ do not entail $A\to C$.
\end{corollary}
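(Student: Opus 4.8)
The plan is to invoke the Soundness theorem (Theorem~\ref{t:sound}) contrapositively. Since soundness gives $\H\entails Q \implies \H\models Q$, it suffices to establish $\H\not\models Q$ for $\H = \set{A\to B,\ B\to C}$ and $Q = A\to C$. Concretely, I would exhibit a single standard structure $M$ together with an override function $O$ such that $M$ $O$-models both hypotheses but does not $O$-model the conclusion; by the definition of $\models$ this witnesses $\H\not\models Q$, and then soundness yields $\H\not\entails Q$.

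First I would pin down the relevant sets. Put $S = \set{A\to B,\ B\to C,\ A\to C}$ and $P = \Pars{S}$. Because $A,B,C$ are nullary relation symbols, $S$ contains no constants and no free variables, so $\Pars{S}$ consists of a single fixed constant, and the closure \s\ is exactly the six propositional formulas $A,\,B,\,C,\,A\to B,\,B\to C,\,A\to C$. The only \s-formulas to which an override function must assign a value are the three implications $A\to B$, $B\to C$, $A\to C$; each has distinct sides since $A,B,C$ are distinct, and \s\ contains no disjunctions and no quantified formulas.

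Next I would choose the counterexample. Let $M$ interpret all three atoms as false, so $M\not\models A$, $M\not\models B$, $M\not\models C$, and define $O$ by $O(A\to B) = O(B\to C) = \Top$ and $O(A\to C) = \Bot$. The verification is then a direct application of clause~(\To) of Definition~\ref{d:val} in its $A\ne B$ subcase. Since $M\not\models A$ and $O(A\to B)=\Top$, we obtain \M{A\to B}; since $M\not\models B$ and $O(B\to C)=\Top$, we obtain \M{B\to C}; hence $M\models_O\H$. On the other hand $M\not\models C$ and $O(A\to C)=\Bot$, so clause~(\To) gives $M\not\models_O (A\to C)$, i.e.\ $M$ does not $O$-model $Q$. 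Therefore $\H\not\models Q$, and soundness delivers $\H\not\entails Q$.

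There is essentially no obstacle here; the corollary is meant as an immediate illustration of the semantics. The only point requiring care is that the override function is permitted to assign the value \Top\ to a primal implication $A\to B$ even when $A$ and $B$ are both false, and to do so \emph{independently} for $A\to B$, $B\to C$, and $A\to C$. This independence is precisely the feature the semantics builds in to defeat transitivity, so all I must confirm is that clause~(\To) is invoked in the $A\ne B$ branch throughout (guaranteed by the distinctness of $A,B,C$) and that the three implications are genuinely separate arguments of $O$.
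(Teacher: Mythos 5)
Your proposal is correct and follows exactly the paper's own argument: the same standard structure $M$ making $A$, $B$, $C$ false, the same override function assigning $\Top$ to the two hypotheses and $\Bot$ to $A\to C$, and the same appeal to the soundness theorem. The additional detail you supply (identifying the closure and checking clause~(\To) explicitly) is a fine elaboration of what the paper leaves implicit.
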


\begin{proof}
Let $S$ comprise the three implications.
Let $M$ be a standard structure for $S$ where $A, B, C$ are false,
and let $O$ be an override function for $S$ that assigns \Top\ to $A\to B$ and $B\to C$ and assigns \Bot\ to $A\to C$.
Then $M$ $O$-models $A\to B$ and $B\to C$ but not $A\to C$.
By the soundness theorem, $A\to B$ and $B\to C$ do not entail $A\to C$.
\end{proof}

\begin{theorem}[Completeness]
\label{t:complete}
Suppose that $\H\not\vdash Q$.
There exist a structure $M$ and override function $O$ such that $M$ \omodels\ \H\ but not $Q$.
\end{theorem}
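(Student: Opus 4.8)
The plan is to construct a single canonical pair $(M,O)$ directly out of the entailment relation and to show that, over the closure \s, $O$-modeling coincides exactly with entailment from \H. Set $S=\H\cup\set Q$, $P=\Pars S$, and let \s\ be the closure of $S$. Before building $(M,O)$ I would record that the hypothesis $\H\not\entails Q$ forces $\H\not\entails\Bot$, since otherwise the rule $\Bot\bs Q$ would give $\H\entails Q$; this is what keeps the forced value $M\nvDash\Bot$ of the atomic clause consistent with derivability. Now define the standard structure $M$ for $S$ by declaring, for each atomic $P$-formula $R(t_1,\dots,t_j)\in\s$, that $M\models R(t_1,\dots,t_j)$ iff $\H\entails R(t_1,\dots,t_j)$; this pins down the interpretation of every relation symbol of $S$ on the universe $P$. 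Define the override function $O$ by the matching recipe: for every override formula $X\in\s$ --- that is, $X$ of the form $A\Lor B$ or $A\To B$ with $A\ne B$, or $\a x A(x)$ or $\e x A(x)$ with $x$ free in $A$ --- set $O(X)=\Top$ iff $\H\entails X$.

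The core of the proof is the claim that, for every $X\in\s$, we have $\M{X}$ if and only if $\H\entails X$; I would prove it by induction on the structure of $X$, noting that an instance $A(t)$ is structurally smaller than $\a x A(x)$ or $\e x A(x)$, so the induction is well founded. The closure \s\ is precisely what keeps the recursion within its domain: if a propositional $X\in\s$ then its immediate subformulas lie in \s, and if $X$ is a non-vacuous quantification then every instance $A(t)$ with $t\in P$ is again a $P$-subformula of $S$, hence lies in \s\ (Definition~\ref{d:close}), so the induction hypothesis always applies to the subformulas named in Definition~\ref{d:val}. Each clause is then discharged by a matching pair of QPL rules: the propositional clauses reproduce the $\Land$I/$\Land$E, $\Lor$I/$\Lor$E, $\To$I/$\To$E cases and the axiom $A\To A$ exactly as in the soundness proof, and the vacuous quantifier clauses use the rules $A\bs\a x A$, $\a x A\bs A$, $A\bs\e x A$, and $\e x A\bs A$.

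I expect the only real subtlety to be the two non-vacuous quantifier clauses, where the QPL introduction rules are unavailable and the override value must carry the weight. For $\a x A(x)$ the semantic condition is ``$\M{A(t)}$ for all $t\in P$ and $O(\a x A(x))=\Top$''; here the override conjunct encodes $\H\entails\a x A(x)$, which already yields $\H\entails A(t)$ for every $t\in P$ by \a E, so the universal conjunct is redundant and the whole clause collapses to $\H\entails\a x A(x)$. Dually, for $\e x A(x)$ the semantic condition is ``$\M{A(p)}$ for some $p\in P$ or $O(\e x A(x))=\Top$''; the existential disjunct is subsumed by the override disjunct via \e I, so that clause collapses to $\H\entails\e x A(x)$. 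Verifying that the recipe $O(X)=\Top\iff\H\entails X$ makes these collapses go through in both directions is the heart of the check, but it is bookkeeping rather than a genuine difficulty.

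Finally, I would conclude as follows. Every hypothesis $H\in\H$ belongs to \s\ and satisfies $\H\entails H$, so the claim gives $\M{H}$ and hence $M$ \omodels\ \H; and $Q\in\s$ with $\H\not\entails Q$, so the claim gives $M\nvDash Q$. The pair $(M,O)$ is thus the witness required by the theorem.
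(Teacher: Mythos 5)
Your proposal is correct and follows essentially the same route as the paper's own proof: the canonical pair $(M,O)$ defined by $O(X)=\Top\iff\H\entails X$ (and likewise for atomic formulas), followed by an induction over \s\ showing $M\models_O X\iff\H\entails X$, with \a E and \e I carrying the non-vacuous quantifier cases. Your explicit observation that $\H\not\entails Q$ forces $\H\not\entails\Bot$ is a small point the paper leaves tacit, but it does not change the argument.
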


\begin{proof}
Let $Y$ range over formulas in \s.
The desired $M$ and $O$ are determined by the following conditions:
\begin{itemize}
\item $M$ is standard, and an atomic sentence holds in $M$ if and only if $\H\entails Y$, and
\item if $O(Y)$ is defined then $O(Y) = \top$ if and only if  $\H\entails Y$.
\end{itemize}
It suffices to prove
\begin{equation}\label{sem}
 M \models_O Y \iff \H \vdash Y
\end{equation}
for all formulas $Y$. We prove the claim by induction on the number $n$ of binary connectives and quantifiers in $Y$.
In the process, we use the definition of $O$, appropriate clauses in Definition~\ref{d:val}, and the induction hypothesis (in the induction step).

\smallskip\noindent
\texttt{Induction base:} $Y$ is \Top\ or atomic.
Use the definition of $M$.

\smallskip\noindent
\texttt{Induction step} splits into several cases depending on the form of $Y$.
We consider a few of them; the other cases are even easier.

\smallskip\noindent
Case $A\lor B$
where $A\ne B$. If \M{A} or \M{B}, then both sides of \eqref{sem} are true; otherwise
\[\M{(A\lor B)} \iff  O(A\lor B) = \Top \iff \H\vdash A\lor B. \]

\smallskip\noindent
Case $A\to B$. If $\M{B}$ then by induction hypothesis $\H\vdash B$,
and both sides of \eqref{sem} are true.
If $M\nvDash{B}$ but $\M{A}$, then $\H\not\vdash B$ but $\H\vdash A$, and both sides of \eqref{sem} are false.
If $M\nvDash{B}$ and $M\nvDash{A}$ then
\[
\M{(A\to B)} \iff O(A\to B)=\Top \iff \H\vdash{A\to B}.
\]

\smallskip\noindent
Case $\a x A(x)$
where $x$ has free occurrences in $A(x)$.
If \M{A(t)} for all $t\in\Pars{\s}$ then
\[
 \M{\a x A(x)} \iff O(\a x A(x)) = \Top
             \iff \H\vdash \a x A(x).
\]
Otherwise $M\nvDash A(t)$ for some $t\in\Pars{\s}$.
Then $\H\nvdash A(t)$ for that $t$, and both sides of \eqref{sem} are false.

\smallskip\noindent
Case $\e x A(x)$ where $x$ has free occurrences in $A(x)$.
If \M{A(t)} for some $t\in \Pars{\s}$, then $\H\entails A(t)$ for that $t$, and both sides of \eqref{sem} are true.
Otherwise $M\nvDash A(t)$ for all $t\in \Pars{\s}$.
Then $\H\nvdash A(t)$ for all $t\in \Pars{\s}$, and
\[
 \M{\e x A(x)} \iff O(\e x A(x)) = \Top
             \iff \H\vdash \e x A(x). \qedhere
\]
\end{proof}

\section{QPL entailment: upper bounds}
\label{s:upper}

The entailment problem for QPL is to decide, given a finite set \H\ of hypotheses and a query $Q$, whether \H\ entails $Q$.
Let $S = \H\cup\set{Q}$.
By the locality theorem, Theorem~\ref{t:local}, if $\H\entails Q$ then there is a derivation \D\ of $Q$ from \H\ where all \D-formulas are in the closure \s\ of $S$.

\begin{lemma}
The instance $\H,Q$ of the entailment problem for QPL is solvable in time $O\big(\lh{\s}^2\big)$.
\end{lemma}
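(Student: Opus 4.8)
The plan is to recast the decision problem as a least-fixed-point (saturation) computation inside the finite closure $\s$, and then to bound the running time by the number of rule instances that live in $\s$. First I would fix $S = \H\cup\set{Q}$ and $P = \Pars{S}$, and note that $Q\in\s$ and $\H\subseteq\s$, since the query and every hypothesis is a $P$-subformula of itself. By the Locality theorem (Theorem~\ref{t:local}), if $\H\entails Q$ then there is a minimal derivation all of whose node labels --- including axioms --- lie in $\s$. Hence $\H\entails Q$ if and only if $Q$ belongs to the least set $T\subseteq\s$ that contains (a)~all hypotheses, (b)~$\Top$ whenever $\Top\in\s$ and every axiom $A\to A\in\s$, and (c)~is closed under every QPL rule whose premises lie in $T$ and whose conclusion lies in $\s$. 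This equivalence is proved by two routine inductions: every formula placed into $T$ is entailed by $\H$ (by the rule just applied), and, conversely, an induction from leaves to root along a minimal local derivation $\D$ shows that every $\D$-formula, in particular $Q$, is forced into $T$.

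The algorithm computes $T$ by saturation. In a preprocessing phase I would index the formulas of $\s$ so that equality and membership cost $O(1)$, and build the incidence structure: for each formula, the list of conjunctions, disjunctions, and implications of $\s$ in which it occurs as a direct subformula, and, for each quantified formula $\a x A(x)$ or $\e x A(x)$ of $\s$, pointers to the instances $A(t)$, $t\in P$, that belong to $\s$ (together with the reverse pointers from each instance back to the existentials it instantiates). I would then run a worklist saturation: whenever a formula $Y$ first enters $T$, I fire exactly the rule instances having $Y$ as a premise, consulting the incidence lists --- \Land I when the partner conjunct is already in $T$, \Land E, \To E with $Y$ as minor or as major premise, \To I, \Lor I, \Lor E, \Bot E, the vacuous \a I and \e E, and, crucially, \a E (emitting every local instance $A(t)$) when $Y=\a x A(x)$, and \e I when $Y$ is a recorded instance of some existential. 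Since each formula enters $T$ at most once, correctness amounts to checking that this firing discipline realizes the closure operator of the previous paragraph, which is immediate.

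For the time bound, the propositional rules contribute only $O(|\s|)=O(\lh{\s})$ firings in total, because each conjunction, disjunction, and implication of $\s$ participates in boundedly many such instances and $|\s|\le\lh{\s}$. The cost is dominated by the quantifier rules: the number of \a E instances $\seq{\a x A(x)}\bs A(t)$ and of \e I instances $\seq{A(t)}\bs\e x A(x)$ with all formulas in $\s$ is at most $|\s|\cdot|P|$, and since $|\s|\le\lh{\s}$ and $|P|\le\lh{S}\le\lh{\s}$, this is $O(\lh{\s}^2)$. In the worklist discipline each rule instance is examined only when one of its (at most two) premises enters $T$, hence $O(1)$ times, so the saturation runs in time proportional to the number of instances, namely $O(\lh{\s}^2)$; the final test $Q\in T$ is $O(1)$, and the preprocessing that materializes the instance pointers can be arranged within the same bound.

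I expect the quantifier bookkeeping to be the main obstacle. The delicate point is exactly that each \a E and \e I instance must be paid for at most a constant number of times, \emph{not} once per iteration: a naive round-based fixed point would re-enumerate all $O(\lh{\s}^2)$ quantifier instances in each of up to $|\s|$ rounds and cost $O(\lh{\s}^3)$. This is why the computation is organized as a one-shot, premise-triggered worklist, and why the preprocessing precomputes the instantiation pointers so that each instance $A(t)$ is built and located only once rather than recomputed by a fresh substitution on every examination.
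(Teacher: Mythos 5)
Your proposal is correct, and it rests on the same foundation as the paper's proof: both reduce the question to computing the set of \s-formulas derivable from \H\ by QPL rules whose conclusions stay inside \s, with correctness guaranteed by the locality theorem (Theorem~\ref{t:local}) via exactly the two inductions you sketch. Where you genuinely differ is in how the saturation is scheduled and how the time is accounted. The paper's algorithm is precisely the ``naive round-based fixed point'' you warn against: it initializes $D := \H\cup\A$ (with \A\ the QPL axioms in \s), then traverses \s\ in rounds, at most $N = |\s - (\H\cup\A)|$ of them, firing every applicable rule in each round; since a round costs $O(\lh{\s})$, the total is $O(\lh{\s}^2)$. Your objection that the quantifier cases then blow up to $O(\lh{\s}^3)$ is legitimate, and the paper concedes it: it escapes by stipulating, in a footnote, a data structure in which the unions $D := D\cup\Sub{X}$ (for \a E) and the intersection-emptiness tests $D\cap\Sub{X}\ne\emptyset$ (for \e I) take constant time, noting that without this assumption the bound degrades to $O(\lh{\s}^3)$ (harmless for Theorem~\ref{t:upper}, though worse than the lemma claims). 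Your premise-triggered worklist with precomputed instantiation and incidence pointers removes the need for that proviso, since each of the $O(|\s|\cdot|P|)$ quantifier-rule instances and $O(|\s|)$ propositional instances is charged $O(1)$ times; it is also closer in spirit to the linear-time algorithms of \cite{G198,G215} on which the paper builds. In short: same key lemma and same closure computation, but your organization of the loop is different and delivers the $O(\lh{\s}^2)$ bound under weaker, more standard data-structure assumptions.
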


\begin{proof}
The desired algorithm, let's call it Alg, systematically computes $D = \set{X\in\s: \H\entails X}$ until (i)~it encounters $Q$, in which case $\H\entails Q$, or (ii)~the computation of $D$ is completed without encountering $Q$, in which case $\H \nvdash Q$.

Alg starts with an auxiliary computation:
Walk through \s\ and, for each formula $X$ of the form $qx A(x)$ compute the set \Sub{X} of the substitution instances $A(p)$ where $p\in P$ (and $p$ is substitutable for $x$ in $X$).
After that, Alg initializes $D$ to $\H\cup\A$ where $\A$ is the set of QPL axioms in \s; let $N = |\s - (\H\cup\A)|$.

Then Alg traverses \s\ at most $N$ times.
It starts each of these rounds by checking whether $Q$ belongs to $D$. If yes, Alg stops and declares that \H\ entails $Q$.
If not, Alg attempts to expand $D$ using QPL rules. If no formula has been added to $D$ during the round, then Alg stops and declares that \H\ does not entail $Q$.
Otherwise, Alg proceeds to the next round.

It remains to describe how Alg attempts to expand $D$.
For each $X\in\s$ in turn, Alg executes as follows%
\footnote{In the last two cases below, we suppose the data structure allows computation of unions and intersections and checking membership and emptiness of sets in constant time.
Without such a data structure, the time bound would be $O(\lh{\s}^3)$ rather than $O(\lh{\s}^2)$, which would not affect Theorem~\ref{t:upper}.}.

\noindent
Case $X = A\land B$.
If $A,B\in D$, put $X$ into $D$ (which does not change $D$ if $X$ is already there).
If $X\in D$, put $A,B$ into $D$.

\noindent
Case $X = A\lor A$.
If one of \set{A, A\lor A} is in $D$, put the other into $D$.

\noindent
Case $X = A\lor B$ where $A\ne B$.
If one of \set{A, B} is in $D$, put $X$ into $D$.

\noindent
Case $X = A\to B$.
If $B\in D$, put $X$ into $D$.
If $A, X\in D$, put $B$ into $D$.

\noindent
Case $X = \bot$.
If $\Bot\in D$, put $Q$ into $D$ and declare that \H\ entails $Q$.


\noindent
Case $X = \a x A(x)$. 
If $X\in D$, execute $D := D \cup \Sub{X}$.

\noindent
Case $X = \e x A(x)$. 
If $D\cap\Sub{X}\ne\emptyset$, put $X$ into $D$.

Obviously, Alg works as intended and runs in time $O\big(\lh{\s}^2\big)$.
\end{proof}

\begin{theorem}\label{t:upper}
The QPL entailment problem is solvable in time $O(n^n)$.\\
In the case that the number of quantifications in $S$ is bounded, the QPL entailment problem is solvable in polynomial time.
\end{theorem}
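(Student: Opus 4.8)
The plan is to derive the theorem directly from the preceding lemma and Proposition~\ref{p:closuresize}; all the genuine work has already been done in the locality theorem (Theorem~\ref{t:local}), which lets us confine every \D-formula to the closure \s, and in the size estimates for \s. What remains is simply to substitute the two bounds on $\lh{\s}$ into the lemma's running time $O(\lh{\s}^2)$.

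For the first claim, I would invoke the preceding lemma to solve the instance $\H,Q$ in time $O(\lh{\s}^2)$ and then apply Proposition~\ref{p:closuresize}(2), which gives $\lh{\s}\le n^{n/2}$ asymptotically. Squaring yields $\lh{\s}^2\le n^{n}$, so the running time is $O(n^n)$, as claimed.

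For the second claim, the first step is to note that bounding the number of quantifications in $S$ also bounds the quantifier depth $d=\qd{S}$, since the nesting depth of a formula never exceeds its total count of quantifiers; a constant bound on the number of quantifications therefore falls under the bounded-depth hypothesis of Proposition~\ref{p:closuresize}(3). That part of the proposition (equivalently, part~(1) with $d$ constant and $|P|<n$) makes $\lh{\s}$ polynomial in $n$, so the lemma's bound $O(\lh{\s}^2)$ is polynomial as well.

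I do not expect a real obstacle at this stage: once Theorem~\ref{t:local}, the preceding lemma, and Proposition~\ref{p:closuresize} are in hand, the argument is essentially arithmetic. The only points that merit a word of care are the distinction between ``number of quantifications'' and ``quantifier depth'' noted above, and the implicit understanding that the cost of building \s\ is already absorbed into the $O(\lh{\s}^2)$ bound (which it is, since the lemma's algorithm begins by walking through \s).
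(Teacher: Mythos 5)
Your proposal is correct and takes essentially the same approach as the paper, whose entire proof reads: ``Both claims follow from the lemma above and Proposition~\ref{p:closuresize}.'' Your write-up merely makes explicit the substitutions the paper leaves implicit---part~(2) of the proposition for the $O(n^n)$ bound, and part~(3), via the observation that bounding the number of quantifications bounds the quantifier depth, for the polynomial case.
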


\begin{proof}
Both claims follow from the lemma above and Proposition~\ref{p:closuresize}.
\end{proof}

\section{Lower bounds}
\label{s:lower}

Our goal in this section is to prove that the complexity bounds
of the previous section are optimal in a sense.
We shall show that the QPL entailment problem is EXPTime hard in general and is PTime hard  when the number of quantifiers in the input is bounded.
Here EXPTime$= \bigcup_{d>0}\text{Time}(2^{n^d})$, and PTime is of course polynomial time.

Our results will actually be stronger in two respects. First, we shall
establish that these hardness results hold even when the inputs
$(\H,Q)$ to the entailment problem are restricted to allow only
universal Horn sentences as the hypotheses in \H\ and to allow only
$\bot$ as the query $Q$.  In this connection, we take \emph{Horn
  clauses} to be iterated implications of the form
\begin{equation}\label{iform}
A_1\to\big(A_2\to\dots\to(A_m\to B)\cdots\big)
\end{equation}
where the $A_i$'s and $B$ are atomic formulas.
A \emph{universal Horn formula} is obtained by prefixing a Horn clause with some number (possibly zero) of universal quantifiers and thus binding some number of the free variables of the clause.
See below a discussion of the more familiar alternative forms
\begin{equation} \label{cform}
(A_1\land A_2\land\dots\land A_m)\to B
\end{equation}
and
\begin{equation} \label{dform}
   (\neg A_1)\lor(\neg A_2)\lor\dots\lor(\neg A_m)\lor B
\end{equation}
of Horn clauses.

Second, we shall establish our results not only for QPL but for a wide spectrum of logics (with or without function symbols of positive arity).


Let \L\ be a sublogic of classical first-order logic (as defined in \S\ref{sb:entail}).

\begin{lemma}[Conservative extension]\label{l:conserve}\mbox{}
\begin{enumerate}
\item Suppose that \L\ admits primal rules \To E and \a E.
Then a set \H\ of universal Horn formulas entails \Bot\ in \L\ if and only if \H\ entails \Bot\ in classical logic.
\item Suppose \L\ admits primal rule \To E.
Then a set \H\ of Horn clauses entails \Bot\ in \L\ if and only if \H\ entails \Bot\ in classical logic.
\end{enumerate}
\end{lemma}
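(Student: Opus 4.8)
The plan is to prove both directions of each biconditional. One direction is trivial: if $\H$ entails $\Bot$ in the sublogic $\L$, then since $\L$ is a sublogic of classical first-order logic, every $\L$-entailment is a classical entailment, so $\H$ entails $\Bot$ classically. The substantive direction is the converse --- that a classical entailment of $\Bot$ from Horn hypotheses can be replicated using only the weak primal rules that $\L$ is assumed to admit (\To E and, in part (1), \a E). The key conceptual point is that Horn logic is so restricted that its refutations never need any of the reasoning power that primal logic lacks; in particular they never need to \emph{introduce} implications, disjunctions, or quantifiers, nor to eliminate them by case analysis.

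For part (2), I would argue semantically via the standard characterization of Horn satisfiability. A finite set $\H$ of Horn clauses entails $\Bot$ classically iff $\H$ is unsatisfiable. The crucial fact is that if a set of Horn clauses is satisfiable then it has a canonical \emph{least model} (the intersection of all models is a model, because Horn clauses are preserved under intersections of models). Equivalently, I would run forward chaining: starting from the atomic facts forced by $\H$, repeatedly apply \To E to a Horn clause $A_1\to(A_2\to\dots\to(A_m\to B))$ together with already-derived premises $A_1,\dots,A_m$ to derive the head $B$. Since only \To E is used, this entire derivation lives inside $\L$. If this closure process derives $\Bot$ (i.e. derives the head of some clause whose head is $\Bot$, or closes under an inconsistency), then $\H\entails\Bot$ in $\L$; and classical unsatisfiability of a Horn set is exactly the statement that forward chaining reaches a contradiction. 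So $\H\entails_{\text{classical}}\Bot$ implies $\H\entails_\L\Bot$, which is what we want. The only technical care needed is to confirm that iterated implications are consumed one antecedent at a time by repeated \To E, which is immediate from the associativity of the iterated form \eqref{iform}.

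For part (1), the hypotheses are \emph{universal} Horn formulas, so before forward chaining I must strip the universal quantifiers. This is exactly where the admissibility of \a E is used: from $\a x_1\cdots\a x_k\,\gamma(x_1,\dots,x_k)$ and the rule \a E one obtains every ground (or, more generally, every parameter-substituted) instance $\gamma(t_1,\dots,t_k)$. The plan is to reduce part (1) to part (2) by first instantiating each universal Horn hypothesis at all relevant terms --- those built from the parameters appearing in $\H$, with a default constant adjoined if none occur, mirroring $\Pars{\cdot}$ from Definition~\ref{d:par} --- thereby obtaining a (finite) set of ordinary Horn clauses that entails $\Bot$ classically exactly when the original universal set does. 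Classically this instantiation step is the Herbrand/Skolem argument: a universal Horn theory is unsatisfiable iff some finite set of its ground instances is unsatisfiable. Having produced such instances inside $\L$ by \a E, I then invoke part (2).

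The main obstacle I anticipate is the instantiation step in part (1): justifying that finitely many instances, obtainable by \a E over the available parameter set, already witness classical unsatisfiability. The cleanest route is to lean on the classical Herbrand theorem for the Horn fragment and to match the term universe used there with the parameter set $\Pars{\H}$ so that the instances are genuinely \a E-derivable in $\L$; I would need to check that no function symbols of positive arity are forced (or, if the ambient logic has them, that the relevant Herbrand universe is still reachable by \a E applied to terms over $\Pars{\H}$). Part (2), by contrast, should be routine once forward chaining is set up, since it uses only \To E and the well-known least-model property of Horn clauses.
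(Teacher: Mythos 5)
Your proposal is correct and rests on the same mathematical core as the paper's proof, but it is organized in the opposite direction. The paper proves part~(1) directly, in contrapositive form: assuming $\bot$ is not \L-derivable from \H, it builds one standard structure $M$ with universe $\Pars{\H}$ whose true atomic sentences are exactly those \L-derivable from \H, and checks $M\models\H$ by instantiating each universal Horn formula at parameters (this is where \a E enters) and chaining \To E through the antecedents; part~(2) is then read off as the quantifier-free special case of part~(1). You go the other way: you prove part~(2) first, citing the least-model/forward-chaining completeness of ground Horn logic, and then lift to part~(1) by a Herbrand-style grounding of \H\ over $\Pars{\H}$, justified inside \L\ by \a E. The two routes coincide in substance --- your forward-chaining closure, and the least Herbrand model underlying your grounding step, are exactly the paper's structure $M$ --- so the difference is in what each buys: the paper's single construction is self-contained and invokes no outside theorems, whereas your modular version outsources the combinatorial work to standard Horn-logic facts at the cost of one extra proof obligation, the grounding step, which you rightly flag and which holds precisely because the paper's language (\S\ref{sb:formulas}) has no function symbols of positive arity, so the relevant Herbrand universe is just $\Pars{\H}$. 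One caveat you share with the paper: both instantiation arguments assume each parameter is substitutable for the bound variables (no variable capture), which is guaranteed as long as the bound variables occurring in \H\ are kept distinct from its free variables.
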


\begin{proof}\mbox{}

\noindent
(1) The ``only if'' direction holds because \L\ is a sublogic of classical logic.
To prove the ``if'' direction, suppose that $\bot$ is not \L-derivable from \H.
We shall show that \H\ is satisfiable (in the sense of classical logic)  and therefore cannot classically entail $\bot$.

As before, let $P$ be  the set \Pars{\H} of parameters
occurring in $\H$ unless this set is empty, in which case $P$ comprises a single constant.
According to Definition~\ref{d:standard}, a standard structure for \H\ is a structure where the vocabulary consists of the elements of $P$ and the relation symbols in $\H$, the universe is $P$, and each $p\in P$ is treated as a constant and interpreted as itself.
Let $M$ be the standard structure where an atomic sentence $R(p_1,\dots,p_j)$ is true if and only if it is \L-derivable from $\H$. Recall that the free variables in \H\ are constants in $M$.

It remains to show that $M$ satisfies $\H$. To this end, consider any formula in $\H$ and eliminate all the universal quantifiers using parameters in $P$.
We shall check that the resulting $P$-instance
\[
A_1(\vec p)\to(A_2(\vec p)\to\cdots(A_m(\vec p)\to B(\vec p))\cdots)
\]
of the formula is true in $M$. Note that any such instance is \L-derivable from $\H$ because \L\ has the rule of universal instantiation.

If any of the antecedents $A_i(\vec p)$ in our instance is false in
$M$, then the whole instance is true. So assume that all
these antecedents are true. Since they are atomic, they are \L-derivable from $\H$ (by definition of our structure), and
therefore so is $B(\vec p)$ by repeated application of modus ponens,
which is available in \L. In particular, $B$ is not $\bot$, since
we've assumed that $\bot$ is not \L-derivable from $\H$. So
$B(\vec p)$ is atomic and \L-derivable from \H\ and therefore true in our structure. So our instance is true, as required.

\smallskip\noindent
(2) The proof of Claim~(2) is a special case of the proof of  Claim~(1) where there are no universal quantifiers in \H.
\end{proof}


Let HE(\L), an allusion to ``Horn Entailment (of \Bot) in \L'', be the problem to decide whether a given set \H\ of universal Horn formulas entails \Bot\ in \L.
For any non-negative integer $k$, let HE($\L,k$) be the restriction
of HE(\L) to instances \H\ that contain at most $k$ occurrences
of quantifiers.

\begin{theorem} \label{t:hard2}\mbox{}
  \begin{enumerate}
  \item If \L\ admits primal rules \To E and \a E, then HE(\L) is EXPTime hard.
  \item If \L\ admits primal rules \To E, then HE(\L,0) is PTime hard.
  \item If \L\ admits primal rules \To E and \a E, then HE($\L,k$) is PTime hard for all $k\ge0$.
  \end{enumerate}
\end{theorem}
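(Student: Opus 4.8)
By Lemma~\ref{l:conserve}, each hardness claim reduces to the corresponding hardness for classical logic, because a set of (universal) Horn formulas entails \Bot\ in \L\ exactly when it does so classically, provided \L\ admits the indicated primal rules. So the plan is to exhibit, for each claim, a classical reduction from a known hard problem into Horn-entailment of \Bot, taking care that the quantifier budget matches the statement and that only universal Horn hypotheses and the query \Bot\ are used.

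For Claim~(2), HE(\L,0), the hypotheses are quantifier-free Horn clauses over atomic formulas (propositional Horn clauses, once we note constants can be treated as fixed parameters). Propositional Horn-satisfiability is the classic PTime-complete problem, so I would reduce from the standard PTime-complete version --- monotone circuit value, or directly Horn-clause implication --- encoding the reduction as a set of Horn clauses whose conjunction is unsatisfiable (equivalently entails \Bot) exactly when the source instance is a yes-instance. The implications \eqref{iform} are precisely propositional Horn clauses, so the encoding is immediate; the only care needed is that the falsum query is \Bot\ and the hypotheses stay in the required syntactic form. Since only \To E is needed for the conservativity in Lemma~\ref{l:conserve}(2), this claim holds under the weaker hypothesis on \L.

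For Claim~(1), HE(\L), I would reduce from an EXPTime-complete problem, encoding a computation of an alternating polynomial-space (equivalently, exponential-time) machine, or more directly the acceptance problem for a deterministic exponential-time machine, into universal Horn logic. The natural route is a Datalog-style encoding: universally quantified Horn rules describe the transition relation and configuration reachability, and unsatisfiability (entailment of \Bot) corresponds to acceptance. The key fact being exploited is that entailment of \Bot\ from universal Horn theories is EXPTime-complete classically (the data-complexity-free, combined setting of Datalog/Horn logic with a bounded-arity vocabulary and parameters drawn from \Pars{\H}), and Lemma~\ref{l:conserve}(1) transports this down to \L; here both \To E and \a E are needed, matching the hypothesis. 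Since the universe of a standard structure is \Pars{\H}, the domain is polynomial in the input, yet quantification over it together with nested predicates yields the exponential blowup in \lh{\s} already flagged by Proposition~\ref{p:closuresize}.

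Claim~(3) is the subtle one: HE($\L,k$) fixes the number of quantifier occurrences at a constant $k$, yet must still be PTime hard, so I cannot let $k$ grow with the instance. The plan is to reduce propositional Horn-entailment of \Bot\ (PTime-hard by Claim~(2)'s source problem) while spending only a bounded number of quantifiers --- in the simplest case zero, so that HE($\L,k$) inherits hardness from HE($\L,0$) by monotonicity in $k$. The step I expect to be the main obstacle is reconciling the two hypotheses: Claim~(3) invokes both \To E and \a E, whereas the bounded-quantifier PTime lower bound ought to follow already from the $k=0$ case of Claim~(2), which needs only \To E. So the real content is to confirm that the $k=0$ instances produced for Claim~(2) are legitimate instances of HE($\L,k$) for every $k\ge0$ (they use no quantifiers, hence at most $k$), and that the conservativity available under the stronger hypotheses of Claim~(3) certainly covers them; the quantifier rule \a E is simply not exercised, so hardness propagates trivially upward in $k$.
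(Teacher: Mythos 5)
Your proposal is correct and takes essentially the same route as the paper: transfer each claim to classical logic via Lemma~\ref{l:conserve}, invoke the classical hardness of (universal) Horn entailment of $\bot$ --- which the paper obtains simply by citing Theorems~4.2 and 4.5 of \cite{DEGV} where you sketch the standard Datalog/machine and propositional Horn encodings --- and settle Claim~(3) by noting that the quantifier-free instances from Claim~(2) are legitimate instances of HE($\L,k$) for every $k\ge0$. The only detail the paper makes explicit that you handle implicitly (via your unsatisfiability framing) is that a query consisting of an arbitrary atomic sentence $Q$ can be replaced by the query $\bot$ after adding the Horn clause $Q\to\bot$ to the hypotheses.
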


\begin{proof}
By the conservative extension lemma, it suffices to prove the theorem for classical logic.

\smallskip\noindent
(1) The entailment problem for classical logic is the complement
of the satisfiability problem, and the result we need is essentially proved in \cite[Theorem~4.5]{DEGV}.
Strictly speaking, the cited proof concerns a version of HE(classical logic) in which the query $Q$ can be an arbitrary atomic sentence rather than \Bot.
That makes no difference because, in classical logic, \H\ entails $Q$ if and only if $\H\cup\set{Q\to \bot}$ entails \Bot.

\smallskip\noindent
(2) The proof of PTime hardness for this case is similar to the proof of Claim~(1), except that this time around we cite Theorem~4.2 in \cite{DEGV} rather than Theorem~4.5.

\smallskip\noindent
(3) It suffices to consider the case $k=0$, since this case is trivially reducible to any case with larger $k$.
Hence (3) follows from (2).
\end{proof}

\begin{corollary}\label{c:lowerbounds}\mbox{}
\begin{enumerate}
\item The QPL entailment problem is EXPTime hard.
\item The entailment problem for the original primal logic, for the propositional fragment of QPL, and for all other logics in \S\ref{s:lin} is PTime hard.
\item For any non-negative integer $k$, the restriction of the QPL entailment problem to the case of $\le k$ quantifiers is PTime hard.
\end{enumerate}
\end{corollary}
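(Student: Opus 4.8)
The plan is to obtain all three claims directly from Theorem~\ref{t:hard2}, using the elementary principle that hardness of a restricted problem propagates upward. The key observation is that HE(\L), and its bounded variant HE($\L,k$), is literally a restriction of the entailment problem for \L: each of its instances is a pair $(\H,Q)$ in which \H\ consists of universal Horn formulas and $Q$ is \Bot. Hence the identity map reduces HE(\L) (respectively HE($\L,k$)) to the full entailment problem for \L\ (respectively to its restriction to inputs with at most $k$ quantifiers), so any hardness bound for the former transfers to the latter. All that remains is to verify, for each logic named in the corollary, that it admits the primal rules required by the relevant clause of Theorem~\ref{t:hard2}, and that the hard HE instances are expressible in it.

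For Claim~(1) I would note that the QPL calculus (Appendix~\ref{app:b}) lists \To E and \a E among its rules; applying a calculus rule once is already a derivation, so QPL admits both in the sense of \S\ref{sb:hilbert}. Theorem~\ref{t:hard2}(1) then makes HE(QPL) EXPTime hard, and the transfer above yields Claim~(1). Claim~(3) is identical in spirit: QPL admits \To E and \a E, so Theorem~\ref{t:hard2}(3) makes HE(QPL,$k$) PTime hard for every $k\ge0$, and every such instance sits inside the $\le k$-quantifier restriction of the QPL entailment problem.

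Claim~(2) is the step needing the most care, entirely because of small syntactic mismatches rather than any real difficulty. The quotation-free logics \lo, $\L_2$, and PfQPL have \Bot\ in their language and contain \To E as a rule, so each admits \To E, and Theorem~\ref{t:hard2}(2) gives PTime hardness of HE($\L,0$) directly. For the quotational \lqz\ and \loq, the hard instances use no quotation, and by Corollary~\ref{c:l1} entailment of a quotation-free query from quotation-free hypotheses agrees with entailment in \lo, so these logics inherit the same hardness. The one genuine wrinkle is the original primal logic of \S\ref{s:orig}, whose language has only \Top, \Land, \To\ and thus no \Bot, so HE does not even typecheck there. I would resolve this by taking a fresh propositional variable $F$ as the goal in place of \Bot: the propositional definite Horn hypotheses and the atomic query $F$ are all formulas of the original primal logic, the construction underlying Theorem~\ref{t:hard2}(2) is at bottom a reduction from derivability of an atom from definite Horn clauses (which is PTime complete), and the correspondence with classical derivability is obtained by rerunning Lemma~\ref{l:conserve}(2) with $F$ in place of \Bot. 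The only role of \Bot\ in that proof is to force the head of an applicable clause to be a genuine atom, which holds verbatim for an arbitrary atomic head; throughout, only \To E is used. The main obstacle, then, is not the complexity argument but this bookkeeping around which logics literally contain \Bot\ and quotation, all of which is dispatched by surrogates and Corollary~\ref{c:l1}.
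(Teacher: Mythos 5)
Your proposal is correct and takes essentially the same route as the paper, which states the corollary without separate proof as an immediate consequence of Theorem~\ref{t:hard2}: every HE instance is literally an instance of the corresponding entailment problem, so hardness transfers upward once each named logic is seen to admit the required primal rules. The one place where you go beyond the paper's implicit argument---substituting a fresh atomic query for \Bot\ in the case of the original primal logic, whose language lacks \Bot, and rerunning Lemma~\ref{l:conserve}(2) with that atom (which uses only $\to$E)---is a genuine bookkeeping detail the paper glosses over, and your treatment of it, as well as your use of Corollary~\ref{c:l1} to reduce the quotational logics to their quotation-free fragments, is sound.
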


\smallskip\noindent\texttt{Discussion.}
We consider what would happen to our results if we used the \emph{conjunction form} \eqref{cform} or the  \emph{disjunction form} \eqref{dform} of Horn clauses rather than the
the \emph{implication form} \eqref{iform}.

Had we used the conjunction form, the proof of the conservative extension lemma would need that \L\ also admits the conjunction introduction rule \Land I.
Specifically, when proving that \H\ is satisfied by the
structure $M$ we constructed, we would need that, if all of
$A_1,\dots,A_m$ are derivable from \H, then these together with
$(A_1\land A_2\land\dots\land A_m)\to B$ entail $B$. The proof of that
would involve applying $\land$I repeatedly to derive
$A_1\land A_2\land\dots\land A_m$ and then applying $\to$E to obtain
$B$.
Our reason for preferring the implication form over the conjunction
form in this section was merely to avoid the assumption that \L\ admits \Land I.

The situation for the disjunction form is considerably worse.
The least of our problems is that the disjunction form uses negation,
which is not among the connectives of primal logic.
This is solved by our convention, common in intuitionistic logic,
that $\neg A$ is an abbreviation for $A\to\bot$.

A more significant problem is that the disjunction form of a Horn
clause, though classically equivalent to the implication and
conjunction forms, is intuitionistically strictly stronger. Consider,
for example, the Horn clause $A\to A$, which is in both implication
and conjunction form. It is intuitionistically valid, but its
disjunction form, $(\neg A)\lor A$, expresses the law of the excluded
middle, which is of course not intuitionistically valid.

Despite this disagreement between classical and intuitionistic logic
about the status of disjunction forms relative to the other two forms,
the decision problems HE(classical logic) and HE(intuitionistic
logic) coincide, even if ``Horn clause'' is understood as meaning the
disjunction form. That is, given a set \H\ of universal Horn
formulas in disjunction form, $\bot$ is intuitionistically derivable
from \H\ if and only if it is classically derivable from \H.
The ``only if'' direction is obvious as classical logic subsumes
intuitionistic logic.

To prove the ``if'' direction, suppose \H\
entails $\bot$ classically.  Then so does the set $\H'$ obtained
by converting every sentence \eqref{dform} in \H\ to the classically equivalent implication form \eqref{iform}. By Lemma~\ref{l:conserve}, $\H'$ entails $\bot$ intuitionistically also. But then so
does $\H$, because \eqref{dform} intuitionistically entails \eqref{iform}.
To prove that entailment, it suffices to show that $(\neg A \lor C)$, i.e. $((A\to\bot) \lor C)$, entails $(A\to C)$.
By the intuitionistic \Lor E rule, it suffices to show that each disjunct entails the implication, which is rather easy thanks to the intuitionistic \To I rule.

Unfortunately, this equivalence between HE(classical logic) and
HE(intuitionistic logic) gives no information about primal logic;
these two decision problems are not equivalent to HE(QPL) when Horn
clauses are in disjunction form. For example, the pair of implication
forms $\{A, A\to\bot\}$ entails $\bot$ by a single application of
$\to$E, but the corresponding pair of disjunction forms, $\{A,(\neg
A)\lor\bot\}$, i.e., $\{A,(A\to\bot)\lor\bot\}$, does not entail $\bot$
in QPL. Another example is that $\bot$ is derivable in primal logic
from $\{\top\to\bot\}$ but not from $\{(\neg\top)\lor\bot\}$, i.e.,
$\{(\top\to\bot)\lor\bot\}$.  In both cases, the non-derivability
claims are easy to verify using the semantics in Section~\ref{s:sem}; the
key is that an override function can give the value $\top$ to any disjunction of distinct formulas. \qef

\appendix
\section{Natural deduction calculi\\
for intuitionistic and classical logics}
\label{app:a}

We start with a calculus for intuitionistic logic:

\begin{align*}
&\hspace{35pt}\text{Rules}
&\hspace{35pt}&\text{Metarules}
\\[10pt]
\top\text{I}
&\hspace{35pt}\top
&& \emptyset \implies \top
\\[5pt]
\land\text{I}
&\hspace{35pt}\fr{A\quad B}{A\land B}
&& A,B \implies A\land B
\\[9pt]
\land\text{E}
&\hspace{35pt}\fr{A\land B}A\comma\ \fr{A\land B}B
&& \hspace{-30pt}
   A\land B\implies A,\ A\land B\implies B
\\[9pt]
\lor\text{I}
&\hspace{35pt}\fr{A}{A\lor B}\comma\ \fr{B}{A\lor B}
&& \hspace{-30pt}
   A\implies A\lor B,\ B\implies A\lor B
\\[9pt]
\lor\text{E}
&\hspace{35pt}
 \fr{A\lor B\quad\raisebox{7pt}{$(A)$}C\quad\raisebox{7pt}{$(B)$}C} C
&& \fr{A,\G\implies C\quad B,\G\implies C}{A\lor B,\G\implies C} \\[9pt]
\to\text{I}
&\hspace{35pt}
  \fr{\raisebox{7pt}{$(A)$}B}{A\to B}
&& \fr{A,\G\implies B}{\G\implies (A\to B)}
\\[9pt]
\to\text{E}
&\hspace{35pt}\fr{A\quad A\to B}B
&& A, A\to B \implies B
\\[9pt]
\bot\text{E}
&\hspace{35pt}\fr\bot A
&& \bot \implies A
\\[9pt]
\a\text{I}
&\hspace{35pt}\fr{A(p)}{\a x A(x)}
&&\fr{\G\implies A(p)}{\G\implies\a x A(x)}\\
&&&\hspace{-30pt}
\text{\footnotesize where parameter $p$ is not exposed in $\G,\a x A(x)$} \\[5pt]
\a\text{E}
&\hspace{35pt} \fr{\a x A(x)}{A(t)}
&& \a x A(x) \implies A(t)
\\[9pt]
\e\text{I}
&\hspace{35pt} \fr{A(t)}{\e x A(x)}
&& A(t) \implies \e x A(x)
\\
\e\text{E}
&\hspace{35pt} \fr{\e x A(x)\quad \raisebox{10pt}{$(A(p))$}B} {B}
&& \fr{A(p),\G\implies B}{\e x A(x),\G\implies B}
\\
&&&\hspace{-30pt}
\text{\footnotesize where parameter $p$ is not exposed in $\G,\e x A(x),B$}
\end{align*}

\medskip
A classical calculus is obtained from the intuitionistic one by adding an axiom schema $A \lor \neg A$ where $\neg A$ abbreviates $A\to\bot$.

\section{Hilbert-type calculus for quantified primal logic}
\label{app:b}

\begin{align*}
\top\text{I} \hspace{20pt}
&\hspace{10pt}\top
\\[15pt]
\land\text{I}\hspace{20pt}
&\fr{A\quad B}{A\land B}\hspace{80pt}
&\land\text{E}\hspace{20pt}
&\fr{A\land B}A\comma\ \fr{A\land B}B
\\[20pt]
\lor\text{I}\hspace{20pt}
&\fr{A}{A\lor B}\comma\ \fr{B}{A\lor B}\hspace{80pt}
&\lor\text{E}\hspace{20pt}
&\hspace{10pt}\fr{A\lor A} A
\\[20pt]
\to\text{I}\hspace{20pt}
&\fr B{A\to B}\comma\ A\to A\hspace{80pt}
&\to\text{E}\hspace{20pt}
&\fr{A\quad A\to B}B
\\[20pt]
&&\bot\text{E}\hspace{20pt} &\hspace{20pt}\fr\bot A
\\[25pt]
\a\text{I}\hspace{20pt}
&\hspace{10pt}\fr{A}{\a x A}\hspace{70pt}
&\a\text{E}\hspace{20pt}
&\hspace{10pt}\fr{\a x A(x)}{A(t)} \\
&\hspace{-15pt} \text{where $x$ isn't free in $A$}
\\[25pt]
\e\text{I}\hspace{20pt}
&\hspace{10pt}\fr{A(t)}{\e x A(x)}\hspace{70pt}
&\e\text{E}\hspace{20pt}
&\hspace{10pt}\fr{\e x A} {A} \\
&&&\hspace{-15pt}\text{where $x$ isn't free in $A$}
\end{align*}

\section{Primal logic with full-fledged quantifier rules}

Of the four standard intuitionistic/classical quantifier rules in Appendix~A, our quantified primal logic QPL uses, as is, only two rules, \a E and \e I. The other two rules, \a I and \e E, are crippled by Flatting.
In this appendix, we show that enriching the original primal logic with all four standard quantifier rules (or just three, namely \a E, \e E, \e I) leads to an undecidable entailment problem.

\ph{Logics.}
Let \L\ be a sublogic of classical logic such that \L\ admits the intuitionistic natural-deduction rule schemas
\[ \Land E,\ \To E,\ \a E,\ \e E,\ \e I \]
of Appendix~A.
Four of these five rule schemas, all but \e E, are Hilbert-style.
According to \S\ref{sb:hilbert}, \L\ admits a Hilbert style rule $\seq{A_1, \dots, A_k} \bs B$ if the premises $A_1, \dots, A_k$ entail the conclusion $B$ in \L.
The situation is a little bit more involved with \e E.
Consider an instance
\[ \Big\langle\e x A(x),\ \raisebox{7pt}{$(A(p)$)}B\Big\rangle\bs B \]
of \e E where $p$ is a constant; only such instances of \e E are relevant for our purposes.
\L\ \emph{admits} the displayed instance if the following holds in \L\ for  any set $\G,B$ of formulas: Formulas $\e x A(x),\G$ entail $B$ if formulas $A(p),\G$ entail $B$ and $p$ does not occur in $\G,B,\e x A(x)$.

We assume without loss of generality that (the decimal notation for) every nonnegative integer is an individual constant in \L.

\noindent\texttt{Two-register machines.}
A two-register machine $M$, as in \cite[p.~27--28]{G00}, is a deterministic computation device similar to a Turing machine, but instead of a tape it has two registers, register~1 and register~2, containing nonnegative integers.
A \emph{configuration} of $M$ is a triple $(i,m,n)$ where $i$ is the (control) state and $m,n$ are the contents of registers~1 and 2 respectively.
The initial configuration is $(0,0,0)$, and the only halting state is 1.
Each state $i$ of $M$ is associated with an increment or decrement instruction.

\begin{itemize}
\item An \emph{increment} instruction $(r,j)$:
Increment register~$r$ (by one) and go to state $j$.
\item A \emph{(conditional) decrement} instruction $(r,j,l)$:
First test whether register~$r$ contains zero.
If yes, then go to state $j$.
If not, then decrement register~$r$ (by one) and go to state $l$.
\end{itemize}
The halting problem for two-register machines is undecidable \cite{Minsky,S&S}.

\ph{Logic description.}
We adapt to our purposes the description of two-register machines in the proof of Theorem~2.1.15 in \cite{G00} (from \cite{Aanderaa} and \cite{Boerger}.)

A two-register machine $M$ is described by means of a sentence $\phi_M$ of the form
\[ (\a x\e x' Sxx')  \land K_000 \land
\Big(\a xx'y\big(Sxx'\to\bigwedge_i\d_i(x,x',y)\big)\Big) \]
using binary relations $S$ and $K_i$ for every state $i$ of $M$.

The intention behind the first conjunct is that $x'$ is the successor of $x$.
It would be natural to use a unary successor function $'$, but we work without function symbols of positive arity.\

The intended meaning of $K_imn$ is that the configuration $(i,m,n)$ of $M$ is reachable.
The second conjunct reflects the fact that the initial configuration $(0,0,0)$ is reachable.

The third conjunct of $\phi_M$ reflects the one-step action of $M$.
If the state $i$ instruction is an increment instruction $(r,j)$, then
\[\d_i(x,x',y) = \begin{cases}
K_ixy \to K_jx'y &\text{if }r=1,\\
K_iyx \to K_jyx' &\text{if }r=2,
\end{cases}\]
and if the state $i$ instruction is a decrement instruction $(r,j,l)$, then
\[\d_i(x,x',y) = \begin{cases}
(K_i0y \to K_j0y) \land (K_ix'y \to K_lxy)  &\text{if }r=1,\\
(K_iy0 \to K_jy0) \land (K_iyx' \to K_lyx) &\text{if }r=2.
\end{cases}\]

Let Halting abbreviate $\e xy K_1(x,y)$.

\begin{proposition}
The following statements are equivalent.
\begin{enumerate}
\item $\phi_M$ entails Halting in \L.
\item $\phi_M$ entails Halting in classical logic.
\item $M$ halts.
\end{enumerate}
\end{proposition}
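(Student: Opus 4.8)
The plan is to establish the cycle $(1)\Rightarrow(2)\Rightarrow(3)\Rightarrow(1)$, which yields all three equivalences. The first link is immediate: since \L\ is a sublogic of classical logic, every \L-entailment is a classical entailment, so $(1)$ gives $(2)$ at once. Note that Lemma~\ref{l:conserve} does not apply directly here, because Halting is not \Bot\ and $\phi_M$ is not a set of Horn clauses; a separate argument is needed for the remaining two links.

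For $(2)\Rightarrow(3)$ I would argue by contraposition, exhibiting a classical model of $\phi_M$ in which Halting fails whenever $M$ does not halt. Take the structure $\mathcal M$ whose universe is the set of nonnegative integers, in which each numeral denotes itself, $S$ is interpreted as the arithmetic successor relation, and each $K_i$ is interpreted as the set of pairs $(m,n)$ such that the configuration $(i,m,n)$ is reachable from $(0,0,0)$. One checks each conjunct of $\phi_M$ in $\mathcal M$: the first holds because every integer has a successor; the second because $(0,0,0)$ is reachable by definition; and the third because the reachable set is closed under the one-step transition relation, which is exactly what the clauses $\d_i$ encode (for both increment and decrement instructions, and in either register). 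If $M$ never reaches the halting state $1$, then no pair lies in the interpretation of $K_1$, so Halting $=\e xy\,K_1(x,y)$ is false in $\mathcal M$ while $\phi_M$ is true; hence $\phi_M$ does not entail Halting classically.

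The substantial direction is $(3)\Rightarrow(1)$: a halting computation must be converted into an \L-derivation of Halting from $\phi_M$ using only the admitted schemas \Land E, \To E, \a E, \e E, \e I. First I would use \Land E\ to split $\phi_M$ into its three conjuncts $\a x\e x'Sxx'$, $K_000$, and the transition axiom $\a xx'y\big(Sxx'\to\bigwedge_i\d_i\big)$. The key idea is to represent numbers not by numerals but by a chain of fresh witnesses: letting $V$ be the largest register value occurring in the (finite) halting computation, I would introduce constants $p_0=0,p_1,\dots,p_V$ together with facts $S\,p_{k-1}\,p_k$, each obtained by instantiating the first conjunct at $p_{k-1}$ via \a E\ and then applying \e E\ to the resulting $\e x'S\,p_{k-1}\,x'$ to name a fresh successor. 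With this chain fixed, I would simulate the computation step by step, keeping the invariant that after reaching configuration $(i_t,m_t,n_t)$ the atom $K_{i_t}\,p_{m_t}\,p_{n_t}$ has been derived, the base case being $K_000=K_0\,p_0\,p_0$. A single step instantiates the transition axiom by three applications of \a E\ at the appropriate chain elements, discharges its $S$-antecedent by \To E\ using the stored successor fact, selects the relevant conjunct $\d_i$ by \Land E, and applies \To E\ once more to pass from the current $K$-atom to the next; increments climb the chain and decrements descend it, the zero-test branch of a decrement being handled by the first conjunct of the corresponding $\d_i$. When state $1$ is reached, two applications of \e I\ yield Halting.

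The main obstacle will be the bookkeeping for \e E. Its witnesses must be genuinely fresh, and the admissibility clause for \e E\ requires that the name $p$ introduced from $\e x'S\,p_{k-1}\,x'$ not occur in the remaining hypotheses, nor in the conclusion derived inside its scope. I would therefore nest the \e E-applications, so that the whole simulation sits inside the scope of all the witness-introductions, with Halting as the common conclusion; since Halting is constant-free and $\phi_M$ mentions only the numeral $0$, every witness can legitimately be discharged. Once the chain is built once and for all, the per-step reasoning reduces to routine uses of \a E, \Land E, and \To E, and no rule outside the admitted five is ever invoked, which is exactly what $(1)$ requires.
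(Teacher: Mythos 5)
Your proposal follows essentially the same route as the paper's proof: (1)$\Rightarrow$(2) by the sublogic observation, (2)$\Rightarrow$(3) via the reachability structure on the nonnegative integers (you argue contrapositively, the paper directly --- the same model and verification), and (3)$\Rightarrow$(1) by building a successor chain through nested \a E/\e E applications and then simulating the run with \a E, \Land E, and \To E, finishing with \e I; the paper simply takes the numerals $0,1,\dots,T$ as your witnesses $p_0,\dots,p_V$. One small repair is needed in your version: indexing the chain by the largest register value $V$ breaks in the degenerate case $V=0$ (a machine that halts using only zero-tests never increments, yet even the zero-test branch needs \emph{some} successor fact $S\,p_0\,p_1$ in order to discharge the antecedent $Sxx'$ of the transition axiom before \Land E can reach the relevant $\d_i$); introducing at least one witness beyond $p_0$ fixes this, and the paper's chain of length $T\ge 1$, indexed by the number of steps rather than the register bound, avoids the issue automatically.
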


\begin{remark}
One can describe a two-register machine using no constants.
Whether we speak about classical logic or \L, $\e u\, \phi_M(0\mapsto u)$ entails Halting in \L\ if and only if $\phi_M$ entails Halting, because both logics admit \e E and \e I.
Note, in connection with \e E, that 0 does not occur and thus is not exposed in Halting. \qef
\end{remark}

\begin{proof}[Proof of the proposition]
(1)$\implies$(2) because \L\ is a sublogic of classical logic.

(2)$\implies$(3). Suppose (2).
Consider the structure $X$ of the vocabulary of $\phi_M$ on nonnegative integers where the constant 0 denotes the number zero, $Sxy$ is the relation $y = x+1$, and $K_i(m,n)$ is true if and only if $M$'s configuration $(i,m,n)$ is reachable from the initial configuration $(0,0,0)$.
Clearly $X$ is a model for $\phi_M$.
By (2), Halting holds in $X$, so $(1,m,n)$ is reachable for some $m,n$.
Thus, $M$ halts.

(3)$\implies$(1). Assume that $M$ halts after $T$ steps and consider the run $C_0, C_1, \dots, C_T$ of $M$, where $C_t$ is a configuration $(i_t,m_t,n_t)$, from the initial configuration $C_0 = (0,0,0)$ to the final configuration $C_T = (1,m_T,n_T)$.
In the rest of the proof we work with \L\ and prove (1).

Recall that (decimal notations of) nonnegative integers are individual constants of \L.
$\phi_M$ entails $\a x\e x' Sxx'$ which entails $\e x' S0x'$; $\phi_M$ entails Halting if $\phi_M$ and $\e x' S0x'$ entail Halting, hence if $\phi_M$ and $S$01 entail Halting.
Similarly, $\phi_M$ and $S$01 entail Halting if $\phi_M$, $S$01, and $S$12 entail Halting, and so on.
It suffices to prove that $\phi_M$ and $S$01, $S$12, \dots, $S(T-1)T$ entail Halting.
Let \H\ be the hypotheses $\phi_M$ and $S$01, $S$12, \dots, $S(T-1)T$.

Further, it suffices to prove that, for every nonnegative integer $t\le T$,

\noindent
\textbf{Claim}~$t$:\qquad $\H\vdash K_{i_t}(m_t,n_t)$,

\noindent
so that in particular, for $t=T$, we have that \H\ entails $K_1(m_T,n_T)$ which entails Halting by \e I.

We prove Claims~$t$ by induction. Claim~0 is obvious.
Suppose that $t<T$ and that Claim~$t$ has been proven.
It remains to prove Claim~$(t+1)$.
To simplify notation, we abbreviate $i_t,m_t,n_t$ to $i,m,n$ respectively. Let $I$ be the state $i$ instruction.
By symmetry, we may assume that $I$ works with Register~1.
Three cases arise: $I$ is an increment instruction, $I$ is a decrement instruction and $m=0$, and $I$ is a decrement instruction and $m>0$.
We consider only the third case where $I$ has the form $(1,j,l)$ and $m = k+1$ where $k\ge0$.

By \Land E, $\phi_M$ entails its conjuncts, and, by \a E, the third conjunct  entails the instance of itself where $x,x',y$ are instantiated to $k, m, n$: $Skm \to \bigwedge_h \d_h$.
Recall that $Skm$ is one of the hypotheses in \H.
Hence \H\ entails $\bigwedge_h \d_h$ and, by \Land E, entails $\d_i(k,m,n)$, i.e.
\[(K_i0n \to K_j0n) \land (K_imn \to K_lkn) \]
which allows us to derive $K_imn \to K_lkn$.
By the induction hypothesis, \H\ entails $K_imn$.
By \To E, \H\ entails $K_lkn$ which is Claim~$(t+1)$.
\end{proof}

Since the halting problem for two-register machines is undecidable, the proposition implies the following theorem.

\begin{theorem}\label{t:undec}
The entailment problem for \L\ is undecidable.
\end{theorem}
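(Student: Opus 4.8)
The plan is to obtain the theorem as a direct corollary of the preceding proposition by exhibiting a computable many-one reduction from the halting problem for two-register machines to the entailment problem for \L. The proposition already supplies the crux: for every two-register machine $M$, the sentence $\phi_M$ entails Halting in \L\ if and only if $M$ halts. So all that remains for the theorem is to package this equivalence as a reduction and invoke the undecidability of the halting problem for two-register machines.

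First I would check that the map $M\mapsto(\set{\phi_M},\text{Halting})$ is effective. From a finite description of $M$---its finite state set together with the increment or decrement instruction attached to each state---the sentence $\phi_M$ is assembled from its two fixed conjuncts and a third conjunct that is a finite conjunction of the formulas $\d_i$, one per state, each $\d_i$ determined by a simple case distinction on the instruction at state $i$. The query Halting is the fixed sentence $\e xy\,K_1(x,y)$. Hence the reduction is a total computable function producing a legitimate instance of the entailment problem, namely a finite (here singleton) hypothesis set paired with a single query.

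I would then finish by contradiction. Were the entailment problem for \L\ decidable, composing its decision procedure with this reduction would decide, for an arbitrary $M$, whether $\phi_M$ entails Halting in \L, hence by the proposition whether $M$ halts---contradicting the undecidability of the halting problem for two-register machines \cite{Minsky,S&S}. Therefore the entailment problem for \L\ is undecidable.

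I do not expect a genuine obstacle at this stage: the real work is carried by the proposition, which is already established, so the theorem needs only the routine verifications that the reduction is computable and that its output is a well-formed problem instance. The one point worth a sentence is generality. Because $\phi_M$ is built using only the rule schemas that \L\ is assumed to admit, and because the reduction already targets a single-hypothesis instance, the undecidability holds uniformly for every \L\ in the stated spectrum, and in fact already for the single-hypothesis entailment problem.
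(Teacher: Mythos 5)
Your proposal is correct and is essentially the paper's own argument: the paper likewise derives Theorem~\ref{t:undec} immediately from the preceding proposition together with the undecidability of the halting problem for two-register machines, leaving the computability of the map $M\mapsto(\set{\phi_M},\text{Halting})$ implicit. Your added verification that this map is a total computable function producing a well-formed (singleton-hypothesis) instance is exactly the routine packaging the paper omits.
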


\subsection*{Acknowledgment}
We thank Lev Beklemishev for useful discussions.

\end{document}